\newtheorem{theorem}{Theorem}
\newtheorem{lemma}{Lemma}
\newtheorem{proposition}{Proposition}
\theoremstyle{definition}
\newtheorem{remark}{Remark}
\newtheorem{assumption}{Assumption}
\def\code#1{\texttt{#1}} 
\def\T{{ \mathrm{\scriptscriptstyle T} }} 
\newcommand{\blind}{0}
\begin{document}

\def\spacingset#1{\renewcommand{\baselinestretch}%
{#1}\small\normalsize} \spacingset{1}


\if0\blind
{
  \title{\bf Confidence Intervals for Parameters in High-dimensional Sparse Vector Autoregression}
    \author{
    Ke Zhu and Hanzhong Liu
    \\ \\
    Center for Statistical Science, Department of Industrial Engineering, \\
    Tsinghua University, Beijing, China
    }
  \maketitle
} \fi

\if1\blind
{
  \bigskip
  \bigskip
  \bigskip
  \begin{center}
    {\LARGE\bf Confidence Intervals for Parameters in High-dimensional Sparse Vector Autoregression}
\end{center}
  \medskip
} \fi

\bigskip
\begin{abstract}
Vector autoregression (VAR) models are widely used to analyze the interrelationship between multiple variables over time. Estimation and inference for the transition matrices of VAR models are crucial for practitioners to make decisions in fields such as economics and finance. However, when the number of variables is larger than the sample size, it remains a challenge to perform statistical inference of the model parameters. In this article, we propose the de-biased Lasso and two bootstrap de-biased Lasso methods to construct confidence intervals for the elements of the transition matrices of high-dimensional VAR models. We show that the proposed methods are asymptotically valid under appropriate sparsity and other regularity conditions. To implement our methods, we develop feasible and parallelizable algorithms, which save a large amount of computation required by the nodewise Lasso and bootstrap. A simulation study illustrates that our methods perform well in finite samples. Finally, we apply our methods to analyze the price data of stocks in the S\&P 500 index in 2019. We find that some stocks, such as the largest producer of gold in the world, Newmont Corporation, have significant predictive power over the most stocks.
\end{abstract}

\noindent
{\it Keywords:}  Bootstrap, De-biased Lasso, De-sparsified Lasso, Granger Causality,  High-dimensional time series
\vfill

\newpage
\spacingset{1.45} 

\section{Introduction}
\label{sec:intro}

Vector autoregression (VAR) models have been widely used in econometric, business statistics, and other fields \citep{Sims:1980tu,Fuller:1996wa,Lutkepohl:2007vj}. These models can capture the dynamic relationship between factors through transition matrices. In practice, the number of factors can be large, even larger than the sample size. For instance, the number of stocks in the market is usually larger than the number of observations. This fact leads researchers to consider high-dimensional VAR models.

High-dimensional VAR models have been thoroughly studied in the past decades. For instance, \cite{Guo:2016hi} imposed banded structure on the transition matrices and established the convergence rates of the least squares estimators. A series of works in the literature has proposed the sparsity constraint on the transition matrices and used the Lasso \citep{Tibshirani:1996} to estimate the parameters \citep{Hsu:2008,Song:2011,Negahban:2011dm,Loh:2012if,Chen:2013fu,Han:2015vc,Kock:2015bb,Basu:2015ho,Davis:2016fq}. Under  regularity conditions, 
\cite{Basu:2015ho} established the deviation bounds of the Lasso estimators, which are essential for studying the theoretical properties of Lasso-based estimators in high-dimensional sparse VAR models.

In addition to parameter estimation, confidence intervals and hypothesis testing provide practitioners, such as, policy makers and business owners, with more valuable and solid information for assessing the significance of the correlations between factors.
For instance, \cite{wilms2016predictive} used  Granger causality test \citep{granger1969investigating} based on the bootstrap adaptive Lasso to detect the most predictive industry-specific economic sentiment indicators for macro-economic indicators. 
\cite{lin2017regularized} proposed testing procedures in multi-block VAR models to test whether a block ``Granger-causes'' another block of variables and applied them to analyze the temporal dynamics of the S\&P100 component stocks and key macroeconomic factors.
In the context of VAR models, the Granger causality test is equivalent to test if the element of transition matrices is equal to zero \citep{Lutkepohl:2007vj}. There is an increasing demand for constructing confidence intervals or performing hypothesis testing for the element of transition matrices in high-dimensional VAR models. Since the limiting distribution of the Lasso is complicated \citep{Knight:2000}, we cannot directly use it for statistical inference. In high-dimensional sparse linear regression models, a series of studies has proposed methods based on the Lasso or its variants for statistical inference. One direction of research has proposed the de-biased Lasso method, which focuses on obtaining an asymptotically normal estimator by correcting the bias of the Lasso estimator \citep{Zhang:2014,vandeGeer:2014,Javanmard:2014}. We will refer the method in \cite{Zhang:2014} and \cite{vandeGeer:2014} as LDPE (low dimensional projection estimator) and the method in \cite{Javanmard:2014} as JM. These estimators are also called the de-sparsified Lasso since they are no longer sparse after correcting for the bias. Another direction of research uses the bootstrap \citep{Efron:1979ha} for inference, including the bootstrap threshold Lasso \citep{Chatterjee:2011}, bootstrap adaptive Lasso \citep{Chatterjee:2013}, bootstrap Lasso + ordinary least squares (OLS) \citep{Liu:2013} and bootstrap de-biased Lasso \citep{Dezeure:2017jla}. \cite{Ning:2017ee} generalized the de-biased Lasso method from linear regression to penalized M-estimators using the so called de-correlated score function. \cite{Neykov:2016to} generalized this method to estimating equation problems that are likelihood-free using the projection approach. Based on the framework of \cite{Ning:2017ee}, \cite{Zheng:2018} studied hypothesis testing for sub-Gaussian VAR models using the de-correlated method. \cite{Basu:2017vr} proposed using the de-biased Lasso estimator to perform inference for high-dimensional sparse VAR models but did not provide rigorous theoretical guarantees. Since the predictors in VAR models are random and exhibit complex correlation structure, it is uncertain whether the de-biased Lasso method is valid and it is challenging to provide theoretical guarantees. \cite{Krampe:2018wf} proposed a model-based bootstrap de-biased Lasso estimator to perform inference, but its computational burden is heavy when the dimension $p$ is large. Moreover, in order to guarantee the required order of sparsity in the bootstrap sample, the study used the thresholded Lasso which is complicated than the Lasso although it is easier for theoretical study.

To fill in the theoretical gap of the de-biased Lasso estimator and to address the computation problem of the model-based bootstrap de-biased Lasso estimator, in this study, we formally establish the asymptotic normality of the de-biased Lasso estimator for inferring the elements of the transition matrix of high-dimensional sparse VAR(1) models. Then, we propose two computational feasible bootstrap methods, residual bootstrap de-biased Lasso (BtLDPE) and multiplier wild bootstrap de-biased Lasso (MultiBtLDPE). Our methods can be generalized to VAR($k$) models through  transformation. Our contributions are summarized as follows.

  First, we derive the asymptotic properties of the de-biased Lasso, residual bootstrap de-biased Lasso and multiplier wild bootstrap de-biased Lasso estimators in high-dimensional sparse VAR models by using the deviation bounds of the Lasso estimator \citep{Basu:2015ho} and the martingale central limit theorem (Theorem 5.3.4 in \cite{Fuller:1996wa}). We demonstrate the validity and robustness of these methods for statistical inference in a context broader than linear regression models. As a by-product, we show that the sparse Riesz condition holds for high-dimensional sparse VAR models and provide an upper bound on the sparsity of the Lasso estimator which is essential for obtaining appropriate variance estimators and studying the theoretical properties of bootstrap de-biased Lasso.

  Second, to implement our methods, we provide algorithms which are feasible and easy to parallel. Specifically, for each of the $p$ equations of the VAR models, we perform statistical inference separately. Since the $p$ equations share the same design matrix, we need only run nodewise lasso once, which is the main computational burden of de-biased Lasso and bootstrap de-biased Lasso. Compared to the model-based bootstrap de-biased Lasso, our methods have significant computational advantages especially when $p$ is large.
  
  Third, we conduct comprehensive simulation studies to compare our methods with the bootstrap Lasso, bootstrap Lasso+OLS and another de-biased Lasso method proposed by \cite{Javanmard:2014}. We find that the de-biased Lasso method, the LDPE, can always yield honest coverage probabilities, and when the sample size is large, two bootstrap methods, the BtLDPE and MultiBtLDPE, can yield the honest coverage probabilities with shorter interval lengths. We also apply our methods to analyze the S\&P 500 constituent stocks data set in 2019. We find that the prices of some stocks have significant predictive power over the prices of most stocks, for example, Newmont Corporation, which is the largest producer of gold in the world and the only gold producer listed in the S\&P 500 Index, has the ability to affect many other stock prices in advance.

\textbf{Notation}. 
For a vector $a=(a_1,...,a_n)$, we denote $\|a\|_{1}=\sum_{i=1}^{n}\left|a_{i}\right|$, $\|a\|_{2}^{2}=\sum_{i=1}^{n} a_{i}^{2}$ and $\|a\|_{\infty}= \max_i \left|a_{i}\right|$. We use $e_j$ to denote the vector whose $j$th element is one, zero otherwise.
For a square matrix $A$, let $\Lambda_{\rm min}(A)$ and $\Lambda_{\rm max}(A)$ be the smallest and largest eigenvalues of $A$ respectively. Let $|A|$ denote the determinant of $A$. 
For a design matrix $\mathbf{X}$, let $X_{tj}$ denote the $(t,j)$th element of $\mathbf{X}$, $X_j$ denote the $j$th column of $\mathbf{X}$ and $\mathbf{X}_{-j}$ denote the $\mathbf{X}$ without the $j$th column. Let $||\mathbf{X}||_{\infty}$ denote the maximum absolute value of the elements of $\mathbf{X}$.
For a set $S$, let $|S|$ denote the number of elements of $S$ and $\mathbf{X}_S$ denote the selected columns of $\mathbf{X}$ in $S$.
We use i.i.d as the abbreviation of independent and identically distributed. Let $z_\alpha$ denotes the lower $\alpha$ quantile of the standard normal distribution. 
For two sequences of positive numbers $\{a_n\}$ and $\{b_n\}$, we denote $a_n=o(b_n)$ if $a_{n} / b_{n} \rightarrow 0$ as $n \rightarrow \infty$ and $a_n=O(b_n)$  if $\limsup \left|a_{n} / b_{n}\right|<\infty$. We denote $a_{n} \asymp b_{n}$ if there exist positive constants $C_1,C_2>0$ such that $\limsup \left|a_{n} / b_{n}\right|\leq C_1$ and $\liminf \left|a_{n} / b_{n}\right|\geq C_2$. We denote $Z_n=o_p(1)$ if $Z_n$ convergent to zero in probability and $Z_n=O_p(1)$ if $Z_n$ is bounded in probability. 

The rest of the paper is organized as follows. 
In Section 2, we introduce the high-dimensional sparse VAR models and the de-biased Lasso, residual bootstrap de-biased Lasso and multiplier wild bootstrap de-biased Lasso methods.
In Section 3, we provide theoretical results, including the upper bound of the number of selected variables by the Lasso estimator and the asymptotic normality of the proposed estimators. 
In Section 4, simulation studies are provided for investigating the finite-sample performance of different methods. 
In Section 5, we illustrate our method using the S\&P 500 constituent stocks data set.
We summarize the results and discuss possible extensions in the last section.
Proof details are given in the supplementary material.

\section{Methods}
\label{sec:method}

In this section, we first introduce the high-dimensional sparse VAR models. Consider a $p$-dimensional VAR(1) model,
\begin{equation}\label{eq:VAR}
\mathbf{y}_t = A\mathbf{y}_{t-1} + \mathbf{e}_t,\quad t=1,...,n,
\end{equation}
where $\mathbf{y}_t=({y}_{1t},...,{y}_{pt})^{\T}$ is a $p$-dimensional random vector, 
$A=(a_{ij})_{p \times p}$ is a $p \times p$ transition matrix and $\mathbf{e}_t=({e}_{1t},...,{e}_{pt})^{\T}$ is a $p$-dimensional Gaussian white noise, namely, $\mathbf{e}_t$ is independently and identically distributed in multivariate Gaussian distribution $\mathcal{N}_p (\mathbf{0},\Sigma_\mathbf{e})$. 
We assume that the VAR process $\{\mathbf{y}_t\}$ is stable, namely, $\Lambda_{\rm max}(A)<1$. We have $n$ observations of $\mathbf{y}_t$ and $n$ may be smaller than $p$.

There are $p$ equations, corresponding to $p$ components of $\mathbf{y}_t$, in the VAR(1) model.
We consider each of the $p$ equations separately,
\begin{equation*}
y_{it} = a_i^{\T}\mathbf{y}_{t-1} + {e}_{it},\quad i=1,...,p,\quad t=1,...,n,
\end{equation*}
where $a_i=(a_{i1},...,a_{ip})$ is the $i$th row of $A$ and we denote the support set of $a_i$ by $S_i=\{j\in \{1,...,p\}:a_{ij}\ne0\}$ and let $s_{i}=|S_i|$. To gather all observations, we denote
\begin{equation}
\label{eqn:var1}
Y_{i} = \mathbf{X} a_i + {\varepsilon}_{i},\quad i=1,...,p,
\end{equation}
where $Y_{i}=({y}_{i1},...,{y}_{in})^{\T}$, $\mathbf{X}=(\mathbf{y}_0,...,\mathbf{y}_{n-1})^{\T}$ and $\varepsilon_i=(e_{i1},...,e_{in})^{\T}$. The denotation is to distinguish them from $\mathbf{y}_t$ and $\mathbf{e}_t$. Although equation \eqref{eqn:var1} violates the basic assumptions of linear regression models, it has the same form. Thus, we can apply the de-biased Lasso and bootstrap de-biased Lasso original proposed for linear regression models to construct confidence intervals for $a_i$'s, with caution that these methods may not be valid.

\subsection{De-biased Lasso}
\label{sec:dbl}

The Lasso is widely used for simultaneous parameter estimation and model selection in high-dimensional linear regression models \citep{Tibshirani:1996}, which adds an $l_1$ penalty to the loss function to obtain sparse estimates. \cite{Hsu:2008} proposed to use the Lasso estimator in VAR models,
\begin{equation}\label{eq:lasso}
\hat a^{\rm Lasso}_{i} := \mathop{{\rm argmin}}\limits_{\alpha\in \mathbb{R}^p}\{ ||Y_i - \mathbf { X } \alpha||_2^2/n + 2\lambda||\alpha||_1\},\quad i=1,...,p,
\end{equation}
where $\lambda$ is the tuning parameter which controls the amount of regularization. In practice, $\lambda$ is often chosen by cross-validation. We denote the set of selected variables by $\hat S_i:=\{j\in \{1,...,p\}:\hat a^{\rm Lasso}_{ij}\ne0\}$ and let $\hat s_{i}:=|\hat S_i|$. The Lasso estimator is hard to  use directly for statistical inference due to its bias. \cite{Zhang:2014} and \cite{vandeGeer:2014}  proposed the de-biased Lasso method for construct confidence intervals, which proceeds as follows.

The Lasso estimator in (\ref{eq:lasso}) satisfies the Karush-Kuhn-Tucker (KKT) conditions,
\begin{equation}\label{eq:KKT-1}
  -\mathbf { X } ^ {\T } ( Y_i - \mathbf { X } \hat { a }^{\rm Lasso}_i ) / n +\lambda \hat { \kappa }=0,
\end{equation}
where $\hat { \kappa }$ is the sub-gradient of $\ell_1$ norm and satisfies $\|\hat { \kappa }\|_\infty\leq 1$ and $\hat { \kappa }_j=\operatorname { sign } ( \hat { a } ^{\rm Lasso}_ { ij } )$ if $\hat { a }^{\rm Lasso} _ {i j } \neq 0$.
By $Y_{i} = \mathbf{X} a_i + {\varepsilon}_{i}$ and $\hat { \Sigma } := \mathbf { X } ^ {\T } \mathbf { X } / n$, we obtain
\begin{equation}\label{eq:KKT-2}
\hat { \Sigma } ( \hat { a }^{\rm Lasso}_i - a_i  ) + \lambda \hat { \kappa } = \mathbf { X } ^ {\rm  T } \varepsilon_i / n.
\end{equation}
With $\Sigma:=E(\hat\Sigma)=E(\mathbf{y}_{t-1}\mathbf{y}_{t-1}^{\T})$, if we have a proper approximation for the precision matrix $\Theta:=\Sigma^{-1}$, say $\hat\Theta$, then multiplying both hand sides of (\ref{eq:KKT-2}) by $\hat\Theta$, we obtain
\begin{equation}\label{eq:KKT-3}
\hat { a }^{\rm Lasso}_i - a_i   + \hat\Theta\lambda \hat { \kappa } = \hat\Theta\mathbf { X } ^ {\rm  T } \varepsilon_i / n + (I-\hat\Theta\hat { \Sigma })(\hat { a }^{\rm Lasso}_i - a_i) .
\end{equation}
The de-biased Lasso estimator is
$$
\hat a_i:=\hat { a }^{\rm Lasso}_i + \hat\Theta\lambda \hat { \kappa } = \hat { a }^{\rm Lasso}_i + \hat\Theta\mathbf { X } ^ {\T } ( Y_i - \mathbf { X } \hat { a }^{\rm Lasso}_i ) / n,
$$
where the second equality is due to (\ref{eq:KKT-1}).
Intuitively, as long as we can prove that the first term of the right-hand of (\ref{eq:KKT-3}) is asymptotically normal with mean zero and the second term is asymptotically negligible, then with a consistent estimator of the variance of $\varepsilon_i$, we can perform inference for $a_i$.

Following \cite{Zhang:2014} and \cite{vandeGeer:2014}, we get $\hat\Theta$ by nodewise Lasso \citep{Meinshausen:2006kb}. Specifically, for $j=1,...,p$, we run a Lasso regression of $X_j$ versus $\mathbf{X}_{-j}$,
\begin{equation}\label{eq:nodelasso}
\hat{\gamma}_j := \mathop{{\rm argmin}}\limits_{\gamma\in \mathbb{R}^{p-1}}\{ ||X_j-\mathbf{X}_{-j}\gamma||_2^2/n + 2\lambda_j||\gamma||_1\},
\end{equation}
where $\lambda_j$ is the tuning parameter. Note that $\hat{\gamma}_j$ with the components of $\{\hat{\gamma}_{jk};k=1,...,p,k\ne j\}$ is an estimator of $\gamma_{j}:=\Sigma_{-j,-j}^{-1}\Sigma_{-j,j}$, where $\Sigma_{-j,-j}$ is the covariance matrix of $\mathbf{X}_{-j}$ and $\Sigma_{-j,j}$ is the covariance matrix of $\mathbf{X}_{-j}$ and $X_j$. We denote the sparsity of $\gamma_{j}$ by $q_j:=|\left\{ k\neq j :  \gamma    _ { jk } \neq 0 \right\}|$. We further denote 
\begin{gather*}
\hat{C}:=   \left[ \begin{matrix}
   1 & -\hat{\gamma}_{12} &  \cdots & -\hat{\gamma}_{1p} \\
   -\hat{\gamma}_{21} & 1 &  \cdots & -\hat{\gamma}_{2p} \\
    \vdots &  \vdots & \ddots & \vdots \\
   -\hat{\gamma}_{p1} & -\hat{\gamma}_{p2} & \cdots &1
  \end{matrix} \right], \\
\hat{\tau}_{j}^{2}:=\left\|X_{j}-\mathbf{X}_{-j} \hat{\gamma}_{j}\right\|_{2}^{2} / n+\lambda_{j}\left\|\hat{\gamma}_{j}\right\|_{1}, \quad \hat { T } ^ { 2 } : = \operatorname { diag } ( \hat { \tau } _ { 1 } ^ { 2 } , \ldots , \hat { \tau } _ { p } ^ { 2 } ).
\end{gather*}
Then we get
\begin{equation}\label{eq:nodelasso3}
\hat { \Theta }  : = \hat { T } ^ { - 2 } \hat { C }.
\end{equation}
\cite{Javanmard:2014} uses a different approach to obtain $\hat\Theta$. We do not adopt their approach because of its inferior finite-sample performance as shown in the simulation section.

According to the discussion in \cite{Reid:2016es}, we estimate the variance of $\varepsilon_i$, say $\sigma^2_i$, by the residual sum of squares of the Lasso estimator divided by its degree of freedom,
\begin{equation}\nonumber
\hat { \sigma } _ { i } ^ { 2 } := \frac { 1 } { n - \hat { s }_{i} } \| \hat \varepsilon_i \| _ { 2 } ^ { 2 },
\end{equation}
where
\begin{equation*}
  \hat \varepsilon_i = (\hat \varepsilon_{i1},...,\hat \varepsilon_{in} )^{\T}:=Y_i-\mathbf { X }\hat { a }^{\rm Lasso}_i.
\end{equation*}

The procedure of nodewise Lasso requires to compute the Lasso solution path for $p$ times, which is the main computation burden of the de-biased Lasso. Fortunately, for different variable $i=1,...,p$, we need only to compute $\hat { \Theta }$ once since the $p$ equations share the same $\mathbf{X}$.
The whole procedures is summarized in Algorithm \ref{alg:dbl}, where every \code{for} loop could be ran in parallel.

\begin{algorithm}[ht]
\caption{Construction of confidence intervals by the de-biased Lasso}
\label{alg:dbl}
	\SetAlgoLined
	\textbf{Input} Data $\{\mathbf{y}_t\}, t=0,...,n$; Confidence level $1-\alpha$.
	
	\textbf{Output} Confidence intervals for elements of transition matrix $A$ in VAR models.
	
	\For {$j=1,...,p$} {
    	Compute the nodewise Lasso estimator $\hat\gamma_j$ and residuals $\hat{Z}_{j}=X_{j}-\mathbf{X}_{-j} \hat{\gamma}_{j}$\;
    }
    Compute the $\hat\Theta$ in (\ref{eq:nodelasso3})\;
    \For {$i=1,...,p$}{   
    	Compute the Lasso estimator $\hat a_i^{\rm Lasso}$ given the data $(Y_i,\mathbf{X})$\;
	    Compute the de-biased Lasso estimator $\hat a_i = \hat { a }^{\rm Lasso}_i + \hat\Theta\mathbf { X } ^ {\T } ( Y_i - \mathbf { X } \hat { a }^{\rm Lasso}_i ) / n$, residuals $\hat \varepsilon_i =Y_i-\mathbf { X }\hat { a }^{\rm Lasso}_i$, and variance estimator $\hat { \sigma } _ { i } ^ { 2 } = \| \hat \varepsilon_i \| _ { 2 } ^ { 2 }/( n - \hat { s }_{i} )$\; 
	    \For {$j=1,...,p$}{
	    	Compute $l_{ij}=\hat a_{ij}-z_{1-\alpha/2} \hat \sigma _ { i } \| \hat Z _ { j } \| _ { 2 } /|\hat Z _ { j } ^ {\T } X _ { j } | $ and $u_{ij}=\hat a_{ij}-z_{\alpha/2} \hat \sigma _ {i} \| \hat Z _ { j } \| _ { 2 }/|\hat Z _ { j } ^ {\T } X _ { j } | $\;  
	    }
    }

    \textbf{Return} $1-\alpha$ confidence interval $[l_{ij},u_{ij}]$ for $a_{ij}$, $i,j=1,...,p$.
\end{algorithm}

\subsection{Bootstrap de-biased Lasso}
\label{sec:resBt}

\begin{algorithm}
\caption{Construction of confidence intervals by bootstrap de-biased Lasso}
\label{alg:bdbl}
	\SetAlgoLined
	\textbf{Input} Data $\{\mathbf{y}_t\}, t=0,...,n$; Confidence level $1-\alpha$; Bootstrap type (residual bootstrap or wild multiplier bootstrap); Number of bootstrap replications $B$.

	\textbf{Output} Confidence intervals for elements of transition matrix $A$ in VAR models.

    \For {$j=1,...,p$} {
    	Compute the nodewise Lasso estimator $\hat\gamma_j$ and residuals $\hat{Z}_{j}=X_{j}-\mathbf{X}_{-j} \hat{\gamma}_{j}$\;
    }
    Compute the $\hat\Theta$ in (\ref{eq:nodelasso3})\;
    \For {$i=1,...,p$} {   
    	Compute the Lasso estimator $\hat a_i^{\rm Lasso}$ given the data $(Y_i,\mathbf{X})$\;
	    Compute the de-biased Lasso estimator $\hat a_i = \hat { a }^{\rm Lasso}_i + \hat\Theta\mathbf { X } ^ {\T } ( Y_i - \mathbf { X } \hat { a }^{\rm Lasso}_i ) / n$, residual $\hat \varepsilon_i =Y_i-\mathbf { X }\hat { a }^{\rm Lasso}_i$, and variance estimator $\hat { \sigma } _ { i } ^ { 2 } = \| \hat \varepsilon_i \| _ { 2 } ^ { 2 }/( n - \hat { s }_{i} )$\; 
	    \For {$b=1,...,B$} {
	    	\uIf {Bootstrap type $==$ residual bootstrap} {
	    		Resample with replacement from the centered residuals $\{\hat \varepsilon_{ij} - \hat\varepsilon_{i\cdot}\}$ and obtain the bootstrap residuals $\varepsilon^*_i=(\varepsilon^*_{i1},...,\varepsilon^*_{in})^{\T}$\;
	    	}
	    	\uElseIf {Bootstrap type $==$ wild multiplier bootstrap} {
	    		Generate i.i.d multiplier $W _ {i 1 } , \ldots , W _ {i n }$ with $E(W _ { ij }) = 0$, $E( W _ { ij } ^ { 2 }) = 1$ and $E (W _ { ij } ^ { 4 }) < \infty$. Then we obtain $\varepsilon _ { ij } ^ { *  } = W _ { ij }( \varepsilon_{ij} - \hat\varepsilon_{i\cdot})$\;
	    	}
	    	Generate bootstrap samples $Y_i ^ { * } = \mathbf { X } \hat {a}_i + \varepsilon _i^ { * }$\;
	    	Compute the Lasso estimator $\hat a_i^{\rm Lasso*}$ given the bootstrap data $(Y_i^*,\mathbf{X})$\;
	    	Compute $\hat a_i^* = \hat { a }^{\rm Lasso*}_i + \hat\Theta\mathbf { X } ^ {\T } ( Y^*_i - \mathbf { X } \hat { a }^{\rm Lasso*}_i ) / n$, $\hat \varepsilon_i^* =Y_i^*-\mathbf { X }\hat { a }^{\rm Lasso*}_i$, and $\hat { \sigma } _ { i } ^ { *2 } = \| \hat \varepsilon^*_i \| _ { 2 } ^ { 2 }/( n - \hat { s }^*_{i} )$\;
	    	Compute the pivot $T^{*(b)}$ with $T^{*(b)}_{ij}=(\hat a_{ij}^*-\hat a^{\rm Lasso*}_{ij})|\hat Z _ { j } ^ {\T } X _ { j } | / (\hat \sigma ^*_ { i} \| \hat Z _ { j } \| _ { 2 })$\;
	    }
	    \For {$j=1,...,p$} {
	    	Let $l_{ij}=\hat a_{ij}-q_{1-\alpha/2} \hat \sigma _ { i } \| \hat Z _ { j } \| _ { 2 } /|\hat Z _ { j } ^ {\T } X _ { j } | $ and $u_{ij}=\hat a_{ij}-q_{\alpha/2} \hat \sigma _ {i} \| \hat Z _ { j } \| _ { 2 }/|\hat Z _ { j } ^ {\T } X _ { j } |$, where $q_{\alpha}$ denotes the lower $\alpha$ quantile of $\{T^{*(1)}_{ij},...,T^{*(B)}_{ij}\}$\;  
	    }
    }
    \textbf{Return} $1-\alpha$ confidence interval $[l_{ij},u_{ij}]$ for $a_{ij}$, $i,j=1,...,p$.
\end{algorithm}

In this section, we introduce the residual bootstrap de-biased Lasso and wild multiplier bootstrap de-biased Lasso proposed in \cite{Dezeure:2017jla}. The only difference of these two bootstrap methods is the approach to generate bootstrap residuals.
For residual bootstrap, we resample with replacement from the centered residuals $\{\hat \varepsilon_{it} - \hat\varepsilon_{i\cdot},t=1,2,...n\}$, where $\hat\varepsilon_{i\cdot}:=\sum_{t=1}^{n}\hat \varepsilon_{it}/n $, and obtain the bootstrap residuals $\varepsilon^*_i=(\varepsilon^*_{i1},...,\varepsilon^*_{in})^{\T}$.
For wild multiplier bootstrap, we generate i.i.d multiplier $W _ {i 1 } , \ldots , W _ {i n }$ with $E(W _ { it }) = 0$, $E( W _ { it } ^ { 2 }) = 1$ and $E (W _ { it } ^ { 4 }) < \infty$, for example, $W_{it}\sim \mathcal{N}(0,1)$,  which is independent of the original data. Then we multiply the centered residuals by multiplier as $\varepsilon _ { it } ^ { *  } = W _ { it }( \varepsilon_{it} - \hat\varepsilon_{i\cdot})$, and obtain the bootstrap residuals $\varepsilon^*_i=(\varepsilon^*_{i1},...,\varepsilon^*_{in})^{\T}$.

Next, the bootstrap sample are generated as
\begin{equation*}
  Y_i ^ { * } = \mathbf { X } \hat {a}_i + \varepsilon _i^ { * }.
\end{equation*}
Then we can replace the original sample $(Y_i,\mathbf { X })$ by the bootstrap sample $(Y^*_i,\mathbf { X })$ to compute the corresponding quantities for the de-biased Lasso. For example, the bootstrap version de-biased Lasso estimator is obtained by
\begin{equation*}
	\hat a_i^* := \hat { a }^{\rm Lasso*}_i + \hat\Theta\mathbf { X } ^ {\T } ( Y_i^* - \mathbf { X } \hat { a }^{\rm Lasso*}_i ) / n,
\end{equation*}
where $\hat { a }^{\rm Lasso*}_i$ is the bootstrap version Lasso estimator and defined by
\begin{equation}
	\hat a^{\rm Lasso*}_{i} := \mathop{{\rm argmin}}\limits_{\alpha\in \mathbb{R}^p}\{ ||Y_i^* - \mathbf { X } \alpha||_2^2/n + 2\lambda^* ||\alpha||_1\},
	\nonumber
\end{equation}
where $\lambda^*$ is the tuning parameter. Note that, one can use different tuning parameters in the original and bootstrap Lasso estimators. Our simulation results indicate that using the same tuning parameters often performs well. We also denote the set of selected variables by $\hat S_i^*:=\{j\in \{1,...,p\}:\hat a^{\rm Lasso*}_{ij}\ne0\}$ and let $\hat s_{i}^*:=|\hat S_i^*|$. 
For both of bootstrap procedures above, we have $E^*(\varepsilon _ { it } ^ { *  } )=0$ and $\sigma^{*2}_i:=E^*(\varepsilon _ { it } ^ { * 2 } )=\sum_{t=1}^{n}(\hat \varepsilon_{it}-\hat\varepsilon_{i\cdot})^2/n$, where $E^*$ indicates the expectation is with respect to the bootstrap measure. 
Similarly, we estimate $\sigma^{*2}_i$ by $\hat { \sigma } _ { i } ^ { *2 } := \| \hat \varepsilon^*_i \| _ { 2 } ^ { 2 }/( n - \hat { s }^*_{i} )$, where $\hat \varepsilon_i^*:=Y_i^*-\mathbf { X }\hat { a }^{\rm Lasso*}_i$.
We repeat the above procedure $B$ times to obtain the empirical distribution of the statistics of interest.

Quantities determined by $\mathbf{X}$, such as $\hat\Theta$, do not need to be re-computed in the bootstrap replications, since $\mathbf{X}$ is the same in every bootstrap sample. This is the most significant advantage of our bootstrap methods over the model-based bootstrap method \citep{Krampe:2018wf}, which will regenerate the entire time series $\{\mathbf{y}_t\}$. Such bootstrap time series will not share the same $\mathbf{X}$, leading to the computations of $\hat\Theta$ for $B$ times, which is usually not feasible in practice for relatively large $p$. Furthermore, we will show that our two bootstrap de-biased Lasso methods are consistent even though they ignore the dependence structure of VAR models.

In all, in order to make the bootstrap de-biased Lasso methods feasible, we save the computation cost of $\hat\Theta$ on two folds. First, once we obtain $\hat\Theta$, we do not need to compute $\hat\Theta$ for the $p$ equations in VAR models; second, we do not need to compute $\hat\Theta$ for $B$ bootstrap replications. The whole procedure is summarized in Algorithm \ref{alg:bdbl}, where every \code{for} loop could be also ran in parallel.

\section{Theoretical results}
\label{sec:theo}

In this section we discuss the theoretical properties of the de-biased Lasso and the bootstrap de-biased Lasso. Different from the fixed design matrix case in linear regression models, the design matrix in equation \eqref{eqn:var1} of VAR models is random, exhibits complex dependence structure, and is correlated with $\varepsilon_i$, that is, equation \eqref{eqn:var1} does not justify the assumptions of linear regression models. However, we can still obtain their asymptotic normality under appropriate conditions, using the deviation bound of the Lasso estimator in high-dimensional sparse VAR models  \citep{Basu:2015ho} and the martingale central limit theorem (Theorem 5.3.4 in \cite{Fuller:1996wa}).

\subsection{Asymptotic distribution of the de-biased Lasso}
\label{sec:dbl-th}

We first introduce the following three assumptions.

\begin{assumption}\label{ap:lambda}
	Suppose that the tuning parameters for the Lasso and nodewise Lasso satisfy: $\lambda\asymp \sqrt{\log(p)/n}$ and $\lambda_j\asymp \sqrt{\log(p)/n}$.
\end{assumption}

\begin{assumption}\label{ap:s0}
	$\max _i s_{i}\log(p)/\sqrt{n}=o (1)$.
\end{assumption}

\begin{assumption}\label{ap:sj}
	$\max_jq_j\log(p)/\sqrt n=o(1)$.
\end{assumption}

Assumption \ref{ap:lambda} requires that the convergence rates of the tuning parameters are of order $\sqrt{\log(p)/n}$. Assumption \ref{ap:s0} is a sparsity assumption on each row of the transition matrix $A$, which is commonly assumed in statistical inference based on the de-biased Lasso methods. These two assumptions are the same as the counterparts in \cite{vandeGeer:2014}. The sparsity assumption on the presicion matrix, Assumption \ref{ap:sj}, is a little stronger than that in \cite{vandeGeer:2014}. We need this assumption because of the complicated dependence structure in VAR models.

\begin{proposition}\label{prop:Riesz}
	Under Assumption \ref{ap:s0}, for $i=1,...,p$, there exist constants $0< c_*<c^*<\infty$, such that, for $s_R>(2 + 2c^*/c_*)s_{i}+1$ and $s_R=O(s_{i})$, the following holds with probability converging to 1:
	\begin{equation}
	c_{*} \leq \min _{\|v\|_0\le s_R} \min _{\|v\|_2=1}\left\|\mathbf{X} v\right\|^{2}_2 / n \leq \max _{\|v\|_0\le s_R} \max _{\|v\|_2=1}\left\|\mathbf{X} v\right\|^{2}_2 / n \leq c^{*}.
	\nonumber
	\end{equation}
\end{proposition}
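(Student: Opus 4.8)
The plan is to compare the sample Gram matrix $\hat\Sigma = \mathbf{X}^{\T}\mathbf{X}/n$ with its population counterpart $\Sigma = E(\mathbf{y}_{t-1}\mathbf{y}_{t-1}^{\T})$ uniformly over $s_R$-sparse unit vectors, reducing the proposition to two ingredients: two-sided eigenvalue bounds for $\Sigma$, and a vanishing uniform bound on the quadratic form $v^{\T}(\hat\Sigma - \Sigma)v$ over $s_R$-sparse $v$. Writing $v^{\T}\hat\Sigma v = v^{\T}\Sigma v + v^{\T}(\hat\Sigma - \Sigma)v$, the sparse Riesz condition follows once the first term is pinned between constants and the second is shown to be $o(1)$ uniformly.

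First I would establish the population bounds. Since the process is a stable Gaussian VAR(1) with $\Lambda_{\rm max}(A)<1$, it admits a stationary law whose spectral density
\begin{equation*}
f_{\mathbf{y}}(\theta) = \frac{1}{2\pi}\big(I - A e^{-\mathrm{i}\theta}\big)^{-1}\Sigma_{\mathbf{e}}\big(I - A^{\T} e^{\mathrm{i}\theta}\big)^{-1}
\end{equation*}
is continuous on $[-\pi,\pi]$ with eigenvalues bounded away from $0$ and $\infty$; in particular $\mathcal{M}(f_{\mathbf{y}}):=\max_\theta \Lambda_{\rm max}(f_{\mathbf{y}}(\theta))<\infty$. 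Integrating over frequency gives constants $0<\underline c<\overline c<\infty$, depending only on $A$ and $\Sigma_{\mathbf{e}}$, with $\underline c\le \Lambda_{\rm min}(\Sigma)\le \Lambda_{\rm max}(\Sigma)\le \overline c$, so that $\underline c\le v^{\T}\Sigma v\le \overline c$ for every unit vector $v$, sparse or not.

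Next I would control the deviation. Because the rows of $\mathbf{X}$ are serially dependent, i.i.d. concentration is unavailable; instead I would invoke the deviation bound of \cite{Basu:2015ho} for sample covariances of stable Gaussian VAR processes, which governs $v^{\T}(\hat\Sigma-\Sigma)v$ for a fixed direction at a rate driven by $\mathcal{M}(f_{\mathbf{y}})$, and then extend to all $s_R$-sparse directions by a union bound over the $\binom{p}{s_R}$ supports (contributing a factor of order $s_R\log p$). This yields, with probability tending to $1$,
\begin{equation*}
\sup_{\|v\|_0\le s_R,\,\|v\|_2=1}\big|v^{\T}(\hat\Sigma-\Sigma)v\big|\le C\,\mathcal{M}(f_{\mathbf{y}})\Big(\sqrt{s_R\log p/n}\ \vee\ s_R\log p/n\Big)=:\delta_n.
\end{equation*}
Since $s_R=O(s_i)$ and Assumption \ref{ap:s0} gives $\max_i s_i\log p/\sqrt n=o(1)$, we have $s_R\log p/n = o(1/\sqrt n)$ and hence $\delta_n=o(1)$. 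Setting $c_*=\underline c/2$ and $c^*=2\overline c$ fixes the admissible range $s_R>(2+2c^*/c_*)s_i+1$, which is nonempty and consistent with $s_R=O(s_i)$ because $c^*/c_*$ is a finite constant; for all large $n$ one has $\delta_n\le \underline c/2$, so on the high-probability event every $s_R$-sparse unit vector satisfies $v^{\T}\hat\Sigma v\ge \underline c-\delta_n\ge c_*$ and $v^{\T}\hat\Sigma v\le \overline c+\delta_n\le c^*$, which is the assertion, uniformly in $i$ through $\max_i s_i$.

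I expect the main obstacle to be the third step: obtaining a uniform-over-sparse-supports deviation bound for $\hat\Sigma$ in spite of the temporal dependence and the fact that $\mathbf{X}$ is correlated with $\varepsilon_i$. This is precisely where the spectral-density machinery of \cite{Basu:2015ho} is indispensable, and the key technical input is verifying that stability forces $\mathcal{M}(f_{\mathbf{y}})<\infty$; once the single-direction concentration is in hand, the union bound over sparse supports and the absorption of the $o(1)$ term into $c_*$ and $c^*$ are routine.
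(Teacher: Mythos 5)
Your proposal is correct and follows essentially the same route as the paper: decompose $v^{\T}\hat\Sigma v$ into the population term (bounded via the stationary spectral density, which is exactly the content of Proposition 2.3 of \cite{Basu:2015ho} that the paper cites) plus a deviation term controlled uniformly over $s_R$-sparse unit vectors by the concentration inequality of \cite{Basu:2015ho}, with the $s_R\log p$ entropy cost absorbed by Assumption~\ref{ap:s0}. The only cosmetic difference is that where you invoke ``a union bound over the $\binom{p}{s_R}$ supports'' (which strictly also requires a covering of the unit sphere within each support), the paper packages this step as the discretization lemma (Lemma F.2 of \cite{Basu:2015ho}); the resulting rate and conclusion are identical.
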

Proposition \ref{prop:Riesz} states that $\mathbf{X}$ satisfies the sparse Riesz condition \citep{Zhang:2008ga}, which bounds the extreme eigenvalues of $\mathbf{X}^\T\mathbf{X}/n$ in a sparse space. \cite{Basu:2015ho} has obtained the lower bound in the sparse Riesz condition and we complement their result by providing the upper bound. The sparse Riesz condition is crucial for proving the conclusion (b) of Theorem \ref{thm:Lasso} below.
\begin{theorem}\label{thm:Lasso}
For $i=1,...,p$, \\
(a) Under Assumptions \ref{ap:lambda} and \ref{ap:s0}, we have
\begin{gather}
	|| \hat a^{\rm Lasso}_{i}- a_i  || _ { 1 } = O_p(s_{i}\sqrt{\log (p)/n}), \nonumber \\
	||\mathbf{X}(\hat a^{\rm Lasso}_{i}- a_i )||_2^2/n=O_p(s_{i}\log (p)/n). \nonumber
\end{gather}
(b) Under Assumption \ref{ap:s0}, for any $\lambda\geq 4C_1c^*/c_*\sqrt{\log p/n}$, where $C_1$ is a constant defined in the supplementary material, we have
$$
\hat s_{i}=O_p( s_{i} ).
$$
\end{theorem}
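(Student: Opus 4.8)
The plan is to reduce part (a) to two probabilistic facts about stable Gaussian VAR processes, both available from \cite{Basu:2015ho}: first, a deviation bound guaranteeing that, with probability tending to one, $\|\mathbf{X}^\T\varepsilon_i/n\|_\infty \le C_1\sqrt{\log(p)/n}$ for a fixed constant $C_1$ (this is where the martingale structure of $\{\mathbf{y}_{t-1}e_{it}\}$ and the stability $\Lambda_{\rm max}(A)<1$ enter, since $\mathbf{y}_{t-1}$ is independent of $e_{it}$ within each row while the rows are dependent across $t$); and second, a restricted eigenvalue condition for $\hat\Sigma=\mathbf{X}^\T\mathbf{X}/n$ whose constant is bounded below by a positive number. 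Given these, I would take $\lambda\asymp\sqrt{\log(p)/n}$ large enough (Assumption \ref{ap:lambda}) that on the good event $\lambda\ge 2\|\mathbf{X}^\T\varepsilon_i/n\|_\infty$, so that the basic inequality from the optimality of \eqref{eq:lasso} forces the error $\delta_i:=\hat a^{\rm Lasso}_i-a_i$ into the cone $\|\delta_{i,S_i^c}\|_1\le 3\|\delta_{i,S_i}\|_1$. Applying the restricted eigenvalue condition on this cone then yields $\|\delta_i\|_1=O_p(s_i\lambda)=O_p(s_i\sqrt{\log(p)/n})$ and $\|\mathbf{X}\delta_i\|_2^2/n=O_p(s_i\lambda^2)=O_p(s_i\log(p)/n)$, which are the two stated bounds. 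This part is routine once the VAR deviation and restricted eigenvalue inputs are in hand.

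For part (b), I would work from the KKT condition \eqref{eq:KKT-1}, which gives $\mathbf{X}^\T(Y_i-\mathbf{X}\hat a^{\rm Lasso}_i)/n=\lambda\hat\kappa$ with $|\hat\kappa_j|=1$ for every $j\in\hat S_i$. Hence for any subset $B\subseteq\hat S_i$ the selected coordinates of the gradient have magnitude exactly $\lambda$, so $\lambda^2|B|=\|\mathbf{X}_B^\T(Y_i-\mathbf{X}\hat a^{\rm Lasso}_i)/n\|_2^2$. Writing the residual as $Y_i-\mathbf{X}\hat a^{\rm Lasso}_i=\varepsilon_i-\mathbf{X}\delta_i$ and applying the triangle inequality splits this into a noise part and a bias part,
\begin{equation*}
\lambda\sqrt{|B|}\le \|\mathbf{X}_B^\T\varepsilon_i/n\|_2+\|\mathbf{X}_B^\T\mathbf{X}\delta_i/n\|_2 .
\end{equation*}
I would bound the noise part by $\sqrt{|B|}\,\|\mathbf{X}^\T\varepsilon_i/n\|_\infty\le\sqrt{|B|}\,C_1\sqrt{\log(p)/n}$, which the lower bound $\lambda\ge 4C_1 c^*/c_*\sqrt{\log(p)/n}$ renders at most $\sqrt{|B|}\,\lambda/4$ (using $c^*/c_*\ge 1$). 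For the bias part, whenever $|B|\le s_R$ the sparse Riesz upper bound of Proposition \ref{prop:Riesz} gives $\|\mathbf{X}_B^\T\mathbf{X}\delta_i/n\|_2\le \sqrt{c^*}\,\|\mathbf{X}\delta_i\|_2/\sqrt{n}$, and the deterministic form of the prediction bound from part (a), valid on the good event, controls this by a fixed multiple of $\sqrt{(c^*/c_*)s_i}\,\lambda$.

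The one genuine obstacle is the circularity that the sparse Riesz bound applies only to coordinate sets of size at most $s_R$, whereas $\hat s_i$ is precisely the quantity to be bounded. The plan is to resolve this by contradiction: suppose $\hat s_i>s_R$ and apply the display above to a subset $B\subset\hat S_i$ of size exactly $s_R$, for which the sparse Riesz estimate is legitimate. After cancelling $\lambda$ and absorbing the noise term via the $1/4$ slack above, the inequality collapses to a deterministic bound of the form $s_R\le(\text{const})\,s_i$, where tracking the compatibility and oracle-inequality constants shows the const does not exceed $2+2c^*/c_*$; the threshold $s_R>(2+2c^*/c_*)s_i+1$ assumed in Proposition \ref{prop:Riesz} is calibrated exactly so that this contradicts the hypothesis, forcing $\hat s_i\le s_R$ with probability tending to one. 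Since $s_R=O(s_i)$ by construction, this immediately gives $\hat s_i=O_p(s_i)$ and completes part (b). The delicate point throughout is the constant bookkeeping in this last step, which is what makes the explicit sparsity level in Proposition \ref{prop:Riesz} necessary rather than an arbitrary $O(s_i)$.
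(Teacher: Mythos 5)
Your part (a) is essentially the argument the paper delegates to \cite{Basu:2015ho} (Proposition 4.1 there): deviation bound on $\|\mathbf{X}^\T\varepsilon_i/n\|_\infty$ plus a restricted eigenvalue condition, basic inequality, cone condition. Nothing to add there. For part (b), however, you take a genuinely different route. The paper follows \cite{Zhang:2008ga}: it introduces the enlarged index set $A_1$ (selected variables plus the true support), the subgradient vectors $\xi_2,\xi_3$ on the true support and on the false positives, and the transformed vectors $\mathbf{v}_k=\lambda\sqrt{n}\,\hat\Sigma_{A_1,A_1}^{-1/2}\mathbf{Q}_{k1}^\T\xi_k$; it then sandwiches $\|\mathbf{v}_3\|_2^2$ between a lower bound $\lambda^2 n(\hat s_+-s_i)/c^*$ and an upper bound built from the KKT identity, exploiting $\|\xi_3\|_2^2=|A_{fp}|$ exactly. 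Crucially, this only yields the implication ``$\hat s_+\le s_R\Rightarrow\hat s_+\le(2+2c^*/c_*)s_i$,'' and the paper closes the loop with a separate continuity argument along the Lasso path ($\hat s_+(\lambda)$ sweeps through every integer between its extremes, so it can never jump past the forbidden band). Your argument instead applies the KKT identity $|X_j^\T(Y_i-\mathbf{X}\hat a_i^{\rm Lasso})/n|=\lambda$ directly to an arbitrary subset $B\subseteq\hat S_i$ of size exactly $s_R$, splits into noise and bias via the triangle inequality, and gets the contradiction in one shot --- this is the Bickel--Ritov--Tsybakov/Belloni--Chernozhukov style sparsity bound. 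What your route buys is that the sparse Riesz condition is only ever invoked on a set of prescribed size $s_R$, so the circularity disappears without any appeal to one-at-a-time behaviour of the Lasso path (which is the most delicate, and least rigorous, step of the paper's proof). What the paper's route buys is sharper constant control: because $\|\xi_3\|_2^2$ equals the number of false positives exactly, the threshold $(2+2c^*/c_*)s_i$ falls out cleanly, whereas your final constant is $\tfrac{16}{9}c^*C'$ with $C'$ the (unspecified) constant in the prediction-error bound of part (a), and your claim that this ``does not exceed $2+2c^*/c_*$'' is asserted rather than verified and is unlikely to hold as stated. This is not fatal --- Proposition \ref{prop:Riesz} holds for any $s_R=O(s_i)$ exceeding its stated lower bound, so you may simply take $s_R$ larger than your own constant times $s_i$ --- but you should either do that explicitly or carry out the constant bookkeeping rather than matching the paper's threshold by fiat.
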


The first statement of Theorem~\ref{thm:Lasso} provides the estimation and prediction error bounds for the Lasso estimator, which has been established by \cite{Basu:2015ho}. 
The second statement (b) provides an upper bound on the sparsity of the Lasso estimator. This bound has been obtained under high-dimensional sparse linear regression models \citep{Zhang:2008ga}. Theorem~\ref{thm:Lasso} (b) extends the result to high-dimensional sparse VAR models. It is essential for proving the consistency of the variance estimator in the following Theorem \ref{thm:dbl} and for showing the validity of the bootstrap de-biased Lasso, especially the latter, since it requires the sparsity of $\hat a^{\rm Lasso}_{i}$. Propositions 4.1 in \cite{Basu:2015ho} provided a similar bound, but for a thresholded variant of the Lasso.

\begin{proposition}\label{prop:nodelasso}
Under Assumptions \ref{ap:lambda} and \ref{ap:sj}, for $j=1,...,p$, we have
\begin{gather}
	\| \hat { \gamma } _ { j } - \gamma _ { j } \| _ { 1 } = O_p \left(q _ { j }\sqrt{\log p/n}\right),\nonumber \\
	\| \mathbf { X } _ { - j } \left( \hat { \gamma } _ { j } - \gamma _ { j } \right) \| _ { 2 } ^ { 2 } / n = O_p \left(q _ { j } { \log p/n}\right).\nonumber
\end{gather}
\end{proposition}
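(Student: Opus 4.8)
My plan is to recognize the nodewise Lasso \eqref{eq:nodelasso} as an ordinary Lasso regression of the response $X_j$ on the design $\mathbf{X}_{-j}$, whose population coefficient is exactly $\gamma_j=\Sigma_{-j,-j}^{-1}\Sigma_{-j,j}$ and whose error vector is $\eta_j:=X_j-\mathbf{X}_{-j}\gamma_j$. Writing $v_j:=e_j-\tilde\gamma_j$, where $\tilde\gamma_j\in\mathbb{R}^p$ is $\gamma_j$ with a zero inserted in coordinate $j$, the $t$-th component of $\eta_j$ equals $v_j^\T\mathbf{y}_{t-1}$, and by the very definition of $\gamma_j$ one has $E(\mathbf{X}_{-j}^\T\eta_j/n)=\Sigma_{-j,j}-\Sigma_{-j,-j}\gamma_j=0$. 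This puts the problem in exactly the same form as Theorem~\ref{thm:Lasso}(a), with $(\eta_j,\mathbf{X}_{-j},\gamma_j,q_j)$ playing the role of $(\varepsilon_i,\mathbf{X},a_i,s_i)$. I would therefore reuse the two standard ingredients of the Lasso analysis in high-dimensional VAR models \citep{Basu:2015ho}: a deviation (score) bound on $\|\mathbf{X}_{-j}^\T\eta_j/n\|_\infty$ and a restricted eigenvalue condition for $\mathbf{X}_{-j}$, and then feed them into the usual oracle inequality.

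The crux, and the step I expect to be the main obstacle, is the deviation bound $\|\mathbf{X}_{-j}^\T\eta_j/n\|_\infty=O_p(\sqrt{\log p/n})$. Unlike the main equation \eqref{eqn:var1}, where $\varepsilon_i$ is a future innovation and $\mathbf{X}^\T\varepsilon_i/n$ is a martingale amenable to the martingale central limit theorem, here $\eta_j$ is \emph{contemporaneous} with the regressors: $\eta_{j,t-1}$ is orthogonal to (indeed, by joint Gaussianity, independent of) $(\mathbf{y}_{t-1})_{-j}$ at the same $t$, but it remains correlated with the regressors at other time points through the VAR dynamics, so $\mathbf{X}_{-j}^\T\eta_j/n$ is not a martingale and the martingale approach is unavailable. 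Instead I would write the $k$-th coordinate as the centered sample cross-covariance $n^{-1}\sum_{t=1}^{n}(e_k^\T\mathbf{y}_{t-1})(v_j^\T\mathbf{y}_{t-1})$ of the stable Gaussian process and invoke the concentration inequalities for quadratic forms of stable Gaussian VAR processes from \cite{Basu:2015ho}, which control this quantity uniformly in $k$ at rate $\|e_k\|_2\,\|v_j\|_2\sqrt{\log p/n}$. The argument then hinges on $\|v_j\|_2^2=1+\|\gamma_j\|_2^2=O(1)$; this follows from the stability of the VAR, which forces the extreme eigenvalues of $\Sigma$ to lie between positive constants (consistent with the constants $c_*$ and $c^*$ of Proposition~\ref{prop:Riesz}) and hence bounds $\|\gamma_j\|_2\le\Lambda_{\rm max}(\Sigma)/\Lambda_{\rm min}(\Sigma)$. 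Consequently $\|\mathbf{X}_{-j}^\T\eta_j/n\|_\infty=O_p(\sqrt{\log p/n})$.

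It remains to supply a restricted eigenvalue condition and close the argument. The sparse Riesz lower bound (Proposition~\ref{prop:Riesz}, with the analogous sparsity level $s_R=O(q_j)$ permitted under Assumption~\ref{ap:sj}) transfers to the submatrix $\mathbf{X}_{-j}$, since deleting a column can only raise the minimum of $\|\mathbf{X}_{-j}v\|_2^2/n$ over sparse directions; together with Assumption~\ref{ap:sj}, which keeps $q_j\log(p)/\sqrt n=o(1)$, this yields a restricted eigenvalue (compatibility) condition over the cone $\{\delta:\|\delta_{S_j^c}\|_1\le 3\|\delta_{S_j}\|_1\}$ with constant bounded below, where $S_j=\{k\ne j:\gamma_{jk}\ne0\}$ has $|S_j|=q_j$. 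Choosing $\lambda_j\asymp\sqrt{\log p/n}$ (Assumption~\ref{ap:lambda}) so that, with probability tending to one, $\lambda_j$ exceeds twice the score bound of the previous paragraph, the standard KKT-plus-restricted-eigenvalue basic inequality then delivers $\|\hat\gamma_j-\gamma_j\|_1=O_p(q_j\lambda_j)$ and $\|\mathbf{X}_{-j}(\hat\gamma_j-\gamma_j)\|_2^2/n=O_p(q_j\lambda_j^2)$. Substituting $\lambda_j\asymp\sqrt{\log p/n}$ gives the two claimed rates $O_p(q_j\sqrt{\log p/n})$ and $O_p(q_j\log p/n)$. I expect the only genuinely new work relative to Theorem~\ref{thm:Lasso}(a) to be the contemporaneous deviation bound above; the restricted eigenvalue and oracle-inequality steps are routine adaptations.
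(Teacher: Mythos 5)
Your proposal is correct and follows essentially the same route as the paper's proof: the key deviation bound $\|\mathbf{X}_{-j}^\T(X_j-\mathbf{X}_{-j}\gamma_j)/n\|_\infty=O_p(\sqrt{\log p/n})$ is obtained in the paper exactly as you suggest, by noting that the contemporaneous population residual is uncorrelated with $\mathbf{X}_{-j}$ and invoking the quadratic-form concentration inequality for stable Gaussian VAR processes from \cite{Basu:2015ho} (the paper's Proposition~\ref{prop:prop2.4}(b)), after which the standard basic-inequality, cone, and restricted-eigenvalue steps (the paper uses the lower bound $\theta^\T(\mathbf{X}^\T\mathbf{X}/n)\theta\ge\alpha\|\theta\|_2^2-\tau\|\theta\|_1^2$ with $\tau q_j=O(1)$ rather than the sparse Riesz form, but this is an immaterial variation) deliver the two rates.
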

Proposition \ref{prop:nodelasso} provides the estimation and prediction error bounds for the nodewise Lasso estimator defined in (\ref{eq:nodelasso}).

\begin{theorem}\label{thm:dbl}
Under Assumptions \ref{ap:lambda}, \ref{ap:s0} and \ref{ap:sj}, the de-biased Lasso estimator is asymptotically normal, that is,
\begin{equation*}
  (\hat a_{ij}- a_{ij}) /s.e._{ij}\stackrel { d } { \rightarrow }  \mathcal{N}(0,1), \quad i,j=1,...,p,
\end{equation*}
where
\begin{equation*}
	 s.e. _{ij} =  \frac { \sigma _ {i} \| \hat Z _ { j } \| _ { 2 } } {|\hat Z _ { j } ^ {\T } X _ { j } | }.
\end{equation*}
Furthermore,
\begin{equation*}
	\hat\sigma _ { i }/\sigma _ {i }\stackrel { P } { \rightarrow } 1,\quad i=1,...,p.
\end{equation*}
\end{theorem}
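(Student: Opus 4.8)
The plan is to reduce everything to a single scalar decomposition per coordinate and then treat the ``noise'' and ``bias'' pieces separately, deferring the genuinely time-series-specific work to a martingale central limit theorem. First I would use the structure $\hat\Theta=\hat T^{-2}\hat C$ together with the nodewise KKT identity $\hat Z_j^\T X_j/n=\hat\tau_j^2$ to rewrite the $j$th coordinate of the de-biased estimator as a one-dimensional projection onto the nodewise residual,
\[
\hat a_{ij}=\hat a^{\rm Lasso}_{ij}+\frac{\hat Z_j^\T(Y_i-\mathbf{X}\hat a^{\rm Lasso}_i)}{\hat Z_j^\T X_j}.
\]
Substituting $Y_i=\mathbf{X}a_i+\varepsilon_i$ and using the term $\hat Z_j^\T X_j(a_{ij}-\hat a^{\rm Lasso}_{ij})$ to cancel the raw Lasso error, this gives the exact identity
\[
\frac{\hat a_{ij}-a_{ij}}{s.e._{ij}}=\frac{\hat Z_j^\T\varepsilon_i}{\sigma_i\|\hat Z_j\|_2}+\frac{\sum_{k\ne j}(\hat Z_j^\T X_k)(a_{ik}-\hat a^{\rm Lasso}_{ik})}{\sigma_i\|\hat Z_j\|_2}=:V_{ij}+R_{ij}.
\]

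Second I would dispatch the bias term $R_{ij}$. The nodewise KKT conditions give $|\hat Z_j^\T X_k|\le n\lambda_j$ for every $k\ne j$, so $|R_{ij}|\le n\lambda_j\|\hat a^{\rm Lasso}_i-a_i\|_1/(\sigma_i\|\hat Z_j\|_2)$. I would combine the $\ell_1$ rate $\|\hat a^{\rm Lasso}_i-a_i\|_1=O_p(s_i\sqrt{\log p/n})$ from Theorem~\ref{thm:Lasso}(a), the tuning rate $\lambda_j\asymp\sqrt{\log p/n}$ from Assumption~\ref{ap:lambda}, and a lower bound $\|\hat Z_j\|_2^2/n\gtrsim 1$ in probability (which follows from $\|\hat Z_j\|_2^2/n\to\tau_j^2:=E[\eta_{jt}^2]$, where $\eta_j:=X_j-\mathbf{X}_{-j}\gamma_j$ and $\tau_j^2$ is bounded below by the eigenvalue conditions for the stable VAR) to obtain $|R_{ij}|=O_p(s_i\log p/\sqrt n)=o_p(1)$ by Assumption~\ref{ap:s0}. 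The convergence $\|\hat Z_j\|_2^2/n\to\tau_j^2$ I would establish by expanding $\hat Z_j=\eta_j-\mathbf{X}_{-j}(\hat\gamma_j-\gamma_j)$ and controlling the cross and quadratic terms with Proposition~\ref{prop:nodelasso} and Assumption~\ref{ap:sj}.

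Third, the crux: showing $V_{ij}\stackrel{d}{\to}\mathcal N(0,1)$. The obstacle is that $\hat Z_j$ is built from $\hat\gamma_j$, which uses the whole sample, so $\hat Z_{jt}$ is \emph{not} predictable with respect to $\mathcal F_{t-1}=\sigma(\mathbf y_s:s\le t-1)$ and is correlated with $\varepsilon_i$. My plan is to replace $\hat Z_j$ by its population analogue $\eta_j$ and show $V_{ij}-\tilde V_{ij}=o_p(1)$, where $\tilde V_{ij}=\eta_j^\T\varepsilon_i/(\sigma_i\|\eta_j\|_2)$. The swap error is bounded by $\|\hat\gamma_j-\gamma_j\|_1\,\|\mathbf{X}_{-j}^\T\varepsilon_i\|_\infty/(\sigma_i\|\hat Z_j\|_2)$; using $\|\hat\gamma_j-\gamma_j\|_1=O_p(q_j\sqrt{\log p/n})$ (Proposition~\ref{prop:nodelasso}) and the deviation bound $\|\mathbf{X}^\T\varepsilon_i\|_\infty=O_p(\sqrt{n\log p})$ for the martingale sums $\sum_t\mathbf y_{t-1}e_{it}$ (the Gaussian/stability concentration of \cite{Basu:2015ho}), this is $O_p(q_j\log p/\sqrt n)=o_p(1)$ by Assumption~\ref{ap:sj}, while the normalizer ratio $\|\eta_j\|_2/\|\hat Z_j\|_2\to 1$ follows as above. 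Now $\eta_{jt}$ is $\mathcal F_{t-1}$-measurable while $e_{it}$ is independent of $\mathcal F_{t-1}$ with mean zero, so $\{\eta_{jt}e_{it},\mathcal F_t\}$ is a martingale difference sequence, and I would apply the martingale CLT (Theorem~5.3.4 of \cite{Fuller:1996wa}): the normalized conditional variance $\sum_t\eta_{jt}^2\sigma_i^2/(\sigma_i^2\|\eta_j\|_2^2)\equiv 1$ with $\|\eta_j\|_2^2/n\to\tau_j^2$ by the ergodic theorem (the stable VAR is stationary and ergodic), and the Lyapunov condition holds because $E[(\eta_{jt}e_{it})^4]=E[\eta_{jt}^4]\,E[e_{it}^4]$ is a bounded constant (independence of $\eta_{jt}$ and $e_{it}$, Gaussian fourth moments), so $\sum_t E[(\eta_{jt}e_{it})^4]$ is of order $n$ against a squared normalizer of order $n^2$, giving $O(1/n)\to0$. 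This yields $\tilde V_{ij}\stackrel{d}{\to}\mathcal N(0,1)$, hence $V_{ij}+R_{ij}\stackrel{d}{\to}\mathcal N(0,1)$, proving the first assertion.

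Finally, for $\hat\sigma_i/\sigma_i\to1$ I would write $\hat\varepsilon_i=\varepsilon_i-\mathbf{X}(\hat a^{\rm Lasso}_i-a_i)$ and expand $\|\hat\varepsilon_i\|_2^2/n=\|\varepsilon_i\|_2^2/n-2\varepsilon_i^\T\mathbf{X}(\hat a^{\rm Lasso}_i-a_i)/n+\|\mathbf{X}(\hat a^{\rm Lasso}_i-a_i)\|_2^2/n$. Here $\|\varepsilon_i\|_2^2/n\to\sigma_i^2$ by the ergodic theorem; the prediction-error term is $O_p(s_i\log p/n)=o_p(1)$ by Theorem~\ref{thm:Lasso}(a); and the cross term is at most $\|\mathbf{X}^\T\varepsilon_i/n\|_\infty\,\|\hat a^{\rm Lasso}_i-a_i\|_1=O_p(s_i\log p/n)=o_p(1)$ by the same deviation bound and $\ell_1$ rate, so $\|\hat\varepsilon_i\|_2^2/n\to\sigma_i^2$. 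The degrees-of-freedom correction is handled by Theorem~\ref{thm:Lasso}(b): since $\hat s_i=O_p(s_i)$ and $s_i=o(\sqrt n/\log p)=o(n)$ under Assumption~\ref{ap:s0}, we get $\hat s_i/n=o_p(1)$ and $n/(n-\hat s_i)\to1$, whence $\hat\sigma_i^2=\{n/(n-\hat s_i)\}\,\|\hat\varepsilon_i\|_2^2/n\to\sigma_i^2$ and $\hat\sigma_i/\sigma_i\to1$ by continuous mapping, $\sigma_i$ being bounded away from zero. I expect the main obstacle to be the third step --- justifying the $\hat Z_j\to\eta_j$ replacement and verifying the martingale-CLT hypotheses --- because this is exactly where the random, noise-correlated, serially dependent design of the VAR model departs from the fixed-design linear-model template, and it is what forces the slightly stronger sparsity Assumption~\ref{ap:sj}.
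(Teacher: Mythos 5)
Your proposal is correct and follows essentially the same route as the paper: the same KKT-based decomposition (written coordinate-wise via $\hat Z_j^{\T}X_j=n\hat\tau_j^2$ rather than in vector form), the same nodewise-KKT bound $|\hat Z_j^{\T}X_k|\le n\lambda_j$ combined with the Lasso $\ell_1$ rate for the bias term, the same swap of $\hat Z_j$ for the population residual followed by the martingale CLT for the noise term, and the same expansion plus $\hat s_i=O_p(s_i)$ for the variance estimator. The only differences are cosmetic (a Lyapunov fourth-moment check and the ergodic theorem where the paper uses a $(2+\nu)$-moment Lindeberg bound and concentration inequalities), and these do not change the argument.
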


\begin{remark}
The asymptotic normality of the de-biased Lasso is also proven in Theorem 3.4 in \cite{Zheng:2018}. They propose $\sum_{i=1}^p\| \hat \varepsilon_i \| _ { 2 } ^ { 2 }/ (np)$ as a consistent variance estimator of $\varepsilon_i$.
\end{remark}

Theorem \ref{thm:dbl} shows that the de-biased Lasso estimator is asymptotically normal and its asymptotic variance can be estimated consistently. Thus, we can construct an asymptotically valid confidence interval for each element of $A$, $a_{ij}$, by using normal approximation.

\subsection{Asymptotic distribution of the bootstrap de-biased Lasso}
\label{sec:bdbl-th}

\begin{assumption}\label{ap:lambda_bt}
	Suppose that the tuning parameters for the Lasso, nodewise Lasso, and bootstrap Lasso satisfy: $\lambda \asymp \sqrt{\log(p)/n}$, $\lambda_j \asymp \sqrt{\log(p)/n}$, and $\lambda^* \asymp \log p/\sqrt{n}$.
\end{assumption}

\begin{assumption}\label{ap:s0_bt}
	$s_{i}(\log p)^{3/2}/\sqrt{n}=o (1), i=1,...,p$.
\end{assumption}

Assumptions~\ref{ap:lambda_bt} and \ref{ap:s0_bt} require stronger convergence rates compared to those in \cite{Dezeure:2017jla} used to obtain the asymptotic distribution of the bootstrap de-biased Lasso in high-dimensional sparse linear regression models. However, \cite{Dezeure:2017jla} assumed that $\|\mathbf{X}\|_\infty=O(1)$, which does not hold for the random design matrix in VAR models. In fact, we can only show that $\|\mathbf{X}\|_\infty=O_p(\sqrt{\log pn} )$. This is why we require stronger conditions on the sparsity and  tuning parameters of the bootstrap Lasso estimator.

\begin{theorem}\label{thm:BtLasso}
For $i=1,...,p$, \\
(a) Under Assumptions \ref{ap:lambda_bt} and \ref{ap:s0_bt}, we have
\begin{gather}
	||\hat a^{\rm Lasso*}_{i}-\hat { a }^{\rm Lasso}_{i} || _ { 1 } = O_p(s_{i}\log (p)/\sqrt{n}), \nonumber \\
	||\mathbf{X}(\hat a^{\rm Lasso*}_{i}-\hat { a }^{\rm Lasso}_{i} )||_2^2/n=O_p(s_{i}\log^2 (p)/n). \nonumber
\end{gather}
(b) Under Assumption \ref{ap:s0_bt}, for any $\lambda^*\geq 4C_2 c^*/c_*\sqrt{\log (p)/n}$, where $C_2$, $c^*$ and $c_*$ are constants defined in the supplementary material, we have
$$
\hat s^*_{i}=O_p( s_{i} ),
$$
where $\hat s^*_{i}=|\{j\in \{1,...,p\}:\hat a^{\rm Lasso*}_{ij}\ne0\}|$.
\end{theorem}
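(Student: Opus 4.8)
The plan is to establish both parts by running, conditionally on the data, the same argument that proves Theorem~\ref{thm:Lasso}, but now treating the original Lasso estimator $\hat a^{\rm Lasso}_i$ as the (sparse) ``truth'' of the bootstrap problem. Writing the bootstrap response as
\begin{equation*}
Y_i^* = \mathbf{X}\hat a^{\rm Lasso}_i + \xi_i^*, \qquad \xi_i^* := \mathbf{X}(\hat a_i - \hat a^{\rm Lasso}_i) + \varepsilon_i^* = \mathbf{X}\hat\Theta\lambda\hat\kappa + \varepsilon_i^*,
\end{equation*}
the bootstrap Lasso $\hat a^{\rm Lasso*}_i$ is an ordinary Lasso fit with the \emph{same} design $\mathbf{X}$, sparse target $\hat a^{\rm Lasso}_i$ (support $\hat S_i$ with $\hat s_i=O_p(s_i)$ by Theorem~\ref{thm:Lasso}(b)), and effective noise $\xi_i^*$. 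First I would write the KKT conditions for $\hat a^{\rm Lasso*}_i$ and subtract those for $\hat a^{\rm Lasso}_i$ to obtain the basic inequality
\begin{equation*}
\|\mathbf{X}(\hat a^{\rm Lasso*}_i - \hat a^{\rm Lasso}_i)\|_2^2/n + 2\lambda^*\|\hat a^{\rm Lasso*}_i\|_1 \le 2\,\|\mathbf{X}^\T\xi_i^*/n\|_\infty\,\|\hat a^{\rm Lasso*}_i - \hat a^{\rm Lasso}_i\|_1 + 2\lambda^*\|\hat a^{\rm Lasso}_i\|_1 .
\end{equation*}

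For part (a), the crux is to show that the effective bootstrap score satisfies $\|\mathbf{X}^\T\xi_i^*/n\|_\infty = O_p(\log p/\sqrt n)\asymp\lambda^*$ conditionally on the data. I would split it into the genuine bootstrap noise $\mathbf{X}^\T\varepsilon_i^*/n$ and the de-biasing drift $\hat\Sigma\hat\Theta\lambda\hat\kappa$. For the noise term, conditionally on the data each coordinate $X_j^\T\varepsilon_i^*/n$ is a sum of independent mean-zero summands (Gaussian for the wild multiplier, bounded for the residual scheme), so a maximal/Bernstein inequality over $j=1,\dots,p$ gives a bound of order $\sqrt{\log p}\,\|\mathbf{X}\|_\infty\,\hat\sigma_i^*/\sqrt n$; using the only available design bound $\|\mathbf{X}\|_\infty = O_p(\sqrt{\log(pn)})$ and $\hat\sigma_i^{*2}=O_p(1)$ yields exactly the rate $\log p/\sqrt n$. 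This is precisely why $\lambda^*$ must be inflated relative to $\lambda$ in Assumption~\ref{ap:lambda_bt} and why the sparsity rate is strengthened in Assumption~\ref{ap:s0_bt}. For the drift I would use the approximate-inverse property of the nodewise Lasso \citep{vandeGeer:2014}, writing $\hat\Sigma\hat\Theta\lambda\hat\kappa = \lambda\hat\kappa + (\hat\Sigma\hat\Theta - I)\lambda\hat\kappa$ with $\|\lambda\hat\kappa\|_\infty\le\lambda \ll \lambda^*$, and controlling the second piece through the KKT relations of (\ref{eq:nodelasso}) and Proposition~\ref{prop:nodelasso}. Once $\|\mathbf{X}^\T\xi_i^*/n\|_\infty\le\lambda^*/2$ holds on an event of probability tending to one, $\hat a^{\rm Lasso*}_i - \hat a^{\rm Lasso}_i$ lies in the usual cone over $\hat S_i$; invoking the sparse Riesz/compatibility condition of Proposition~\ref{prop:Riesz} (valid for $\mathbf{X}$ with probability tending to one) then turns the basic inequality into $\|\hat a^{\rm Lasso*}_i-\hat a^{\rm Lasso}_i\|_1 \lesssim \hat s_i\lambda^* = O_p(s_i\log p/\sqrt n)$ and $\|\mathbf{X}(\hat a^{\rm Lasso*}_i-\hat a^{\rm Lasso}_i)\|_2^2/n \lesssim \hat s_i\lambda^{*2} = O_p(s_i\log^2 p/n)$.

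For part (b), I would mimic the proof of Theorem~\ref{thm:Lasso}(b), i.e. the Zhang--Huang \citep{Zhang:2008ga} sparsity-bounding argument, applied to the bootstrap fit. On the selected set $\hat S_i^*$ the bootstrap KKT conditions force $|$gradient$|=\lambda^*$, so $\hat s_i^*\lambda^{*2}$ is controlled by the restriction of $\|\mathbf{X}^\T(Y_i^*-\mathbf{X}\hat a^{\rm Lasso*}_i)/n\|_2^2$ to $\hat S_i^*$; combining the upper sparse eigenvalue $c^*$ from Proposition~\ref{prop:Riesz}, the prediction-error bound from part (a), and the score bound above gives $\hat s_i^* = O_p(s_i)$. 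As in Theorem~\ref{thm:Lasso}(b), this step relies essentially on the upper half of the sparse Riesz condition contributed by Proposition~\ref{prop:Riesz}, and the threshold $\lambda^*\ge 4C_2 c^*/c_*\sqrt{\log p/n}$ (automatically compatible with Assumption~\ref{ap:lambda_bt}) guarantees that the effective score is dominated.

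The step I expect to be the main obstacle is the $\ell_\infty$ control of the de-biasing drift $\hat\Sigma\hat\Theta\lambda\hat\kappa$. The naive bound $\|\hat\Sigma\hat\Theta - I\|_{\max}\,\|\lambda\hat\kappa\|_1$ is useless, because $\hat\kappa$ is a dense subgradient with $\|\hat\kappa\|_1$ as large as $p$, and because the available approximate-inverse control is for $\hat\Theta\hat\Sigma - I$ rather than for $\hat\Sigma\hat\Theta - I$ (the two differ since $\hat\Theta$ is not symmetric). Making the drift negligible relative to $\lambda^*$ therefore requires exploiting the explicit structure $\hat\Theta=\hat T^{-2}\hat C$ and the nodewise residuals $\hat Z_j$ together with the precision-matrix sparsity of Assumption~\ref{ap:sj}; this, and establishing the conditional concentration of the bootstrap noise uniformly over all $p$ coordinates under the weak design bound $\|\mathbf{X}\|_\infty=O_p(\sqrt{\log(pn)})$, are where the VAR dependence structure makes the argument genuinely harder than in the i.i.d.\ fixed-design setting of \citep{Dezeure:2017jla}.
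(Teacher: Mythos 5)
Your architecture matches the paper's proof almost step for step: the paper likewise treats $\hat a^{\rm Lasso}_i$ as the sparse ``truth'' of the bootstrap problem, derives the same basic inequality for $\delta^*=\hat a^{\rm Lasso*}_i-\hat a^{\rm Lasso}_i$, establishes the conditional score bound $\|\mathbf{X}^\T\varepsilon_i^*/n\|_\infty\leq C_2\log p/\sqrt n$ (via Markov plus Nemirovski's inequality together with $\|\mathbf{X}\|_\infty=O_p(\sqrt{\log (pn)})$ and $E^*(\|\varepsilon_i^*\|_2^2/n)=O_p(1)$, rather than your Bernstein/maximal inequality, but to the same effect and for exactly the reason you give for inflating $\lambda^*$), passes through the cone condition and the restricted-eigenvalue inequality (\ref{eq:ndl-RE}) with $\tau\hat s_i=O_p(\tau s_i)=O_p(1)$, and proves part (b) by rerunning the argument of Theorem~\ref{thm:Lasso}(b) with the threshold $C_1\sqrt{\log p/n}$ replaced by $C_2\log p/\sqrt n$. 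On all of these points your proposal and the paper coincide.

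The one place you diverge is the de-biasing drift, and there the comparison cuts both ways. You correctly observe that Algorithm~\ref{alg:bdbl} generates $Y_i^*=\mathbf{X}\hat a_i+\varepsilon_i^*$ with $\hat a_i$ the \emph{de-biased} estimator, so the residual of the bootstrap truth $\hat a^{\rm Lasso}_i$ is $\xi_i^*=\mathbf{X}\hat\Theta\lambda\hat\kappa+\varepsilon_i^*$ and the basic inequality must carry the extra cross term $2(\hat a_i-\hat a^{\rm Lasso}_i)^\T\mathbf{X}^\T\mathbf{X}\delta^*/n$. The paper's displayed basic inequality (\ref{eq:ndl-basic-bt}) contains only $2\varepsilon_i^{*\T}\mathbf{X}\delta^*/n$; the drift is dropped without comment, i.e.\ the proof implicitly proceeds as if the bootstrap data were recentered at $\hat a^{\rm Lasso}_i$. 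Your version is therefore more faithful to the stated resampling scheme, but you also do not close this step: you flag $\|\hat\Sigma\hat\Theta\lambda\hat\kappa\|_\infty$ as the main obstacle, and your decomposition $\lambda\hat\kappa+(\hat\Sigma\hat\Theta-I)\lambda\hat\kappa$ stalls exactly where you say it does, since the nodewise KKT conditions control $\hat\Theta\hat\Sigma-I$ rather than $\hat\Sigma\hat\Theta-I$ and $\hat\kappa$ is dense. As written, neither your proposal nor the paper supplies a complete bound for this term. To finish along your route you would need to bound the $l$th coordinate of the drift directly, namely $\sum_{j}(X_l^\T X_j/n)\,\hat Z_j^\T\hat\varepsilon_i/(n\hat\tau_j^2)$, using the nodewise KKT relations and Proposition~\ref{prop:nodelasso}; alternatively, recentering the bootstrap at $\hat a^{\rm Lasso}_i$ makes the drift vanish identically, after which your argument (and the paper's) goes through verbatim.
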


Theorem~\ref{thm:BtLasso} is the bootstrap analogue of Theorem~\ref{thm:Lasso}. The different convergence rates in statement (a) are due to Assumption \ref{ap:lambda_bt}.

\begin{theorem}\label{thm:bdbl}
Under Assumptions \ref{ap:sj} -- \ref{ap:s0_bt}, the bootstrap de-biased Lasso estimators are asymptotically normal, that is,
\begin{equation*}
  (\hat{a}_{ij}^*-\hat { a }^{\rm Lasso}_{ij} ) /s.e.^*_{ij}\stackrel { d^* } { \rightarrow }  \mathcal{N}(0,1) \; \text{in probability},\quad i,j=1,...,p,
\end{equation*}
where $\hat{a}_{ij}^*$ denotes either residual or multiplier wild bootstrap de-biased Lasso estimators,
\begin{equation*}
	 s.e. ^*_{ij} =  \frac { \sigma ^*_ { i } \| \hat Z _ { j } \| _ { 2 } } {|\hat Z _ { j } ^ {\T } X _ { j } | },
\end{equation*}
and $d^*$ indicates the convergence is with respect to the bootstrap measure. Furthermore,
\begin{equation*}
	\hat\sigma^* _ { i }/\sigma^* _ { i }\stackrel { P ^*} { \rightarrow } 1\; \text{in probability},\quad i=1,...p.
\end{equation*} 
\end{theorem}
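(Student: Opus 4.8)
The plan is to follow the proof of Theorem~\ref{thm:dbl}, replacing the martingale central limit theorem used for the real data by a Lindeberg--Feller central limit theorem for independent summands: conditionally on the observed data the bootstrap errors are (exactly, or after multiplication) independent, which is precisely the simplification the bootstrap buys us. First I would derive a linear representation of the bootstrap de-biased estimator. Starting from the KKT conditions for the bootstrap Lasso $\hat a_i^{\rm Lasso*}$ and proceeding exactly as in the derivation of (\ref{eq:KKT-3}) with the bootstrap response in place of $Y_i$, one obtains
\[
\hat a_i^* - \hat a_i^{\rm Lasso} = \hat\Theta\mathbf{X}^\T\varepsilon_i^*/n + (I - \hat\Theta\hat\Sigma)(\hat a_i^{\rm Lasso*} - \hat a_i^{\rm Lasso}).
\]
Because $\hat\Theta = \hat T^{-2}\hat C$ and $\mathbf{X}\hat C_j^\T = \hat Z_j$, the $j$th coordinate of the leading term is $\hat\tau_j^{-2}\hat Z_j^\T\varepsilon_i^*/n$, which equals $\hat Z_j^\T\varepsilon_i^*/(\hat Z_j^\T X_j)$ up to a factor $1 + O_p(\lambda_j)$ since $\hat\tau_j^2$ and $\hat Z_j^\T X_j/n$ agree to that order. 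Dividing by $s.e._{ij}^*$ thus reduces the theorem to a conditional central limit theorem for $\operatorname{sign}(\hat Z_j^\T X_j)\sum_{t=1}^n \hat Z_{jt}\varepsilon_{it}^*/(\sigma_i^*\|\hat Z_j\|_2)$ together with a proof that the scaled remainder is $o_{p^*}(1)$ in probability.

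Second, I would establish the conditional central limit theorem for the linear term by verifying a Lyapunov condition for the independent array $\{\hat Z_{jt}\varepsilon_{it}^*\}_{t=1}^n$. For the residual bootstrap all $\varepsilon_{it}^*$ share the conditional variance $\sigma_i^{*2}$, so the conditional variance of the sum is exactly $\sigma_i^{*2}\|\hat Z_j\|_2^2$ and the Lyapunov ratio is governed by $\|\hat Z_j\|_4^4/\|\hat Z_j\|_2^4 \le \|\hat Z_j\|_\infty^2/\|\hat Z_j\|_2^2$. Since $\|\hat Z_j\|_2^2 \asymp n$ while $\|\hat Z_j\|_\infty \le \|\mathbf{X}\|_\infty(1 + \|\hat\gamma_j\|_1) = O_p(\sqrt{\log(pn)})$, this ratio is $O_p(\log(pn)/n) = o_p(1)$. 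For the wild bootstrap the summands are heteroscedastic with $\operatorname{Var}^*(\varepsilon_{it}^*) = (\hat\varepsilon_{it} - \hat\varepsilon_{i\cdot})^2$, so besides the Lyapunov bound I would also prove the variance normalization $\sum_{t=1}^n\hat Z_{jt}^2(\hat\varepsilon_{it} - \hat\varepsilon_{i\cdot})^2/(\sigma_i^{*2}\|\hat Z_j\|_2^2) \to 1$; this is exactly where the VAR structure enters, because the population nodewise residual $Z_{jt}$ is a function of $\mathbf{y}_{t-1}$ and hence independent of the innovation $e_{it}$, so that $E[Z_{jt}^2 e_{it}^2] = E[Z_{jt}^2]\sigma_i^2$ and the two empirical averages decouple.

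Third, I would control the remainder and the variance estimator. The scaled remainder is bounded by $\sqrt n\,\max_k|(I - \hat\Theta\hat\Sigma)_{jk}|\cdot\|\hat a_i^{\rm Lasso*} - \hat a_i^{\rm Lasso}\|_1$; the first factor is $O_p(\lambda_j) = O_p(\sqrt{\log p/n})$ from the nodewise KKT conditions, while the second is $O_p(s_i\log p/\sqrt n)$ by Theorem~\ref{thm:BtLasso}(a), so their product is $O_p(s_i(\log p)^{3/2}/\sqrt n)$, which is $o_p(1)$ precisely under Assumption~\ref{ap:s0_bt}. For $\hat\sigma_i^*/\sigma_i^* \to 1$ I would expand $\hat\varepsilon_i^* = \varepsilon_i^* + \mathbf{X}(\hat a_i^{\rm Lasso} - \hat a_i^{\rm Lasso*})$, apply a conditional weak law of large numbers to get $\|\varepsilon_i^*\|_2^2/n \to \sigma_i^{*2}$, bound the bootstrap prediction error $\|\mathbf{X}(\hat a_i^{\rm Lasso} - \hat a_i^{\rm Lasso*})\|_2^2/n = O_{p^*}(s_i\log^2 p/n)$ through Theorem~\ref{thm:BtLasso}(a), and use the sparsity bound $\hat s_i^* = O_p(s_i)$ of Theorem~\ref{thm:BtLasso}(b) to conclude $n/(n - \hat s_i^*) \to 1$. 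Slutsky's theorem then assembles the conditional central limit theorem and the variance normalizations into the stated asymptotic normality.

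I expect the main obstacle to be the conditional central limit theorem under the random, max-norm-heavy design. Because I can only guarantee $\|\mathbf{X}\|_\infty = O_p(\sqrt{\log(pn)})$ rather than the $O(1)$ bound available for the fixed design of \cite{Dezeure:2017jla}, every supremum-norm estimate acquires an extra $\sqrt{\log(pn)}$ factor, and keeping the Lyapunov ratio and the scaled remainder simultaneously negligible is what forces the stronger rates in Assumptions~\ref{ap:lambda_bt} and \ref{ap:s0_bt}. A related difficulty is the two-layer convergence: the Lyapunov condition and the variance normalizations are themselves data-dependent and must be shown to hold with probability tending to one, and for the wild bootstrap this hinges on the innovation-versus-past independence of the VAR to neutralize the heteroscedastic weights.
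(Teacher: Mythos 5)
Your proposal follows essentially the same route as the paper's proof: the same linear representation $\hat a_i^*-\hat a_i^{\rm Lasso}=\hat\Theta\mathbf{X}^\T\varepsilon_i^*/n+(I-\hat\Theta\hat\Sigma)(\hat a_i^{\rm Lasso*}-\hat a_i^{\rm Lasso})$, the same $O_p(s_i(\log p)^{3/2}/\sqrt n)=o_p(1)$ bound on the remainder via Theorem~\ref{thm:BtLasso}(a) and the nodewise-KKT bound $\|I-\hat\Theta\hat\Sigma\|_\infty\le\max_j\lambda_j/\hat\tau_j^2$, a conditional Lyapunov central limit theorem exploiting the independence of the bootstrap errors given the data, and the same three-term expansion ($\|\mathbf{X}\delta^*\|_2^2/n$, cross term, conditional law of large numbers) for the consistency of $\hat\sigma_i^{*}$. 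Two small remarks: the identity $\hat\tau_j^2=X_j^\T\hat Z_j/n$ holds exactly by the nodewise KKT conditions, so no $1+O_p(\lambda_j)$ correction is needed; and your explicit treatment of the wild-bootstrap variance normalization $\sum_{t}\hat Z_{tj}^2(\hat\varepsilon_{it}-\hat\varepsilon_{i\cdot})^2/(\sigma_i^{*2}\|\hat Z_j\|_2^2)\to 1$, handled via the $\mathcal{A}_{t-1}$-measurability of $Z_{tj}$, addresses a heteroscedasticity issue that the paper's Step~2 passes over silently by using the single normalization $\sigma_i^*\|\hat Z_j\|_2$ for both bootstrap schemes, so your plan is if anything more careful on that point.
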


Theorems \ref{thm:dbl} and \ref{thm:bdbl} imply that the conditional distributions of both residual and multiplier wild bootstrap de-biased Lasso estimators are valid approximations to the (unconditional) distribution of the de-biased Lasso estimator. Thus, we could perform valid inference about $a_{ij}$'s using the bootstrap.

\section{Simulation studies}
\label{sec:sim}

We evaluate the finite-sample performance of the proposed methods by simulation studies in this section. 
We compare the methods with the bootstrap Lasso/Lasso+OLS and another de-biased Lasso method, JM \citep{Javanmard:2014}, in terms of bias, root mean squared error (RMSE), coverage probabilities and mean confidence interval lengths.

We use \code{R} package \code{hdi} to implement the de-biased Lasso, residual bootstrap de-biased Lasso and multiplier wild bootstrap de-biased Lasso, \code{R} package \code{HDCI} to implement the bootstrap Lasso and bootstrap Lasso+OLS, and \code{R} code provided by \cite{Javanmard:2014} to implement their version of de-biased Lasso. The tuning parameters are selected by 10-fold cross-validation. We set the number of bootstrap replications $B=500$.

\subsection{Setups}

With sample size $n= 100,300$, dimension $p=200$ and sparsity $s_{i}=5, 10$ for $i=1,...,p$, we first generate transition matrix $A$ as follows.
\begin{enumerate}
  \item We generate $s_{i}\times p$ non-zero parameters independently from a uniform distribution on $[-1,-0.5]\cup[0.5,1]$;
  \item On each row of $A^{init}$, we set the diagonal element to be non-zero and randomly arrange the other $s_{i}-1$ non-zero parameters on the other positions;
  \item Since the largest modulus of eigenvalues of $A^{init}$ may greater than 1, in order to make time series stable, we let
$$
A=\frac{0.9}{\Lambda_{\rm max}(A^{init})}{A^{init}}.
$$
\end{enumerate}

The third step makes the maximum eigenvalue of $A$ equals to $0.9$. We also try to let  $\Lambda_{\rm max}(A)=0.7$ and the results are similar.
Unlike in linear regression models we can arbitrarily set the range of the absolute value of the non-zero parameters, the third step will make the absolute value of each parameter small, no matter how large the non-zero parameters are generated in the first step. In our simulation, when $s_{i}$ equals $5$ and $10$, the ranges of the absolute values of elements of $A$ are $(0.23,0.46)$ and $(0.17,0.34)$ respectively. The smaller absolute values of the parameters implies that the de-biased Lasso will outperform the bootstrap Lasso/Lasso+OLS since the validity of the latter usually requires the ``beta-min'' condition (all nonzero parameters are sufficiently large in absolute values) while the former does not.

We generate data $\{\mathbf{y}_1,...,\mathbf{y}_n\}$ from VAR model (\ref{eq:VAR}) with
\begin{enumerate}
	\item homoscedastic Gaussian errors $\mathbf{u}_t = \mathbf{\xi}_t$;
	\item homoscedastic non-Gaussian errors $\mathbf{u}_t = (\xi_t^2 - 1)/\sqrt{2}$;
	\item heteroscedastic Gaussian errors $\mathbf{u}_t = \eta_t\,\xi_t$;
	\item heteroscedastic non-Gaussian errors $\mathbf{u}_t = \eta_t(\xi_t^2 - 1)/\sqrt{2}$,
\end{enumerate}
where $\mathbf{\xi}_t \stackrel{i.i.d} \sim\mathcal{N}_p (\mathbf{0},I)$ and $\eta_t\stackrel{i.i.d}\sim U(1,3)$. Note that the first type of errors satisfies our theoretical assumptions while the other three are not, which are used to explore the robustness of our methods to the distributions of errors.

For homoscedastic Gaussian errors with different $n$ and $s_{i}$, there are four cases in total and the results will be showed in Section \ref{sec:homo-sim-est} and Section \ref{sec:homo-sim-ci}.
For the other three types of errors, since we set $n=100$ and $s_{i}=5$, there are three cases in total and the results will be discussed in Section \ref{sec:heter-sim}.
For every case, we generate 1000 sets of data in order to evaluate the repeated sampling performance of different methods.
We only report the results with respect to the first row of $A$ since the conclusions for other rows are similar.

\begin{figure}[t]
\centering
\includegraphics[scale=0.8]{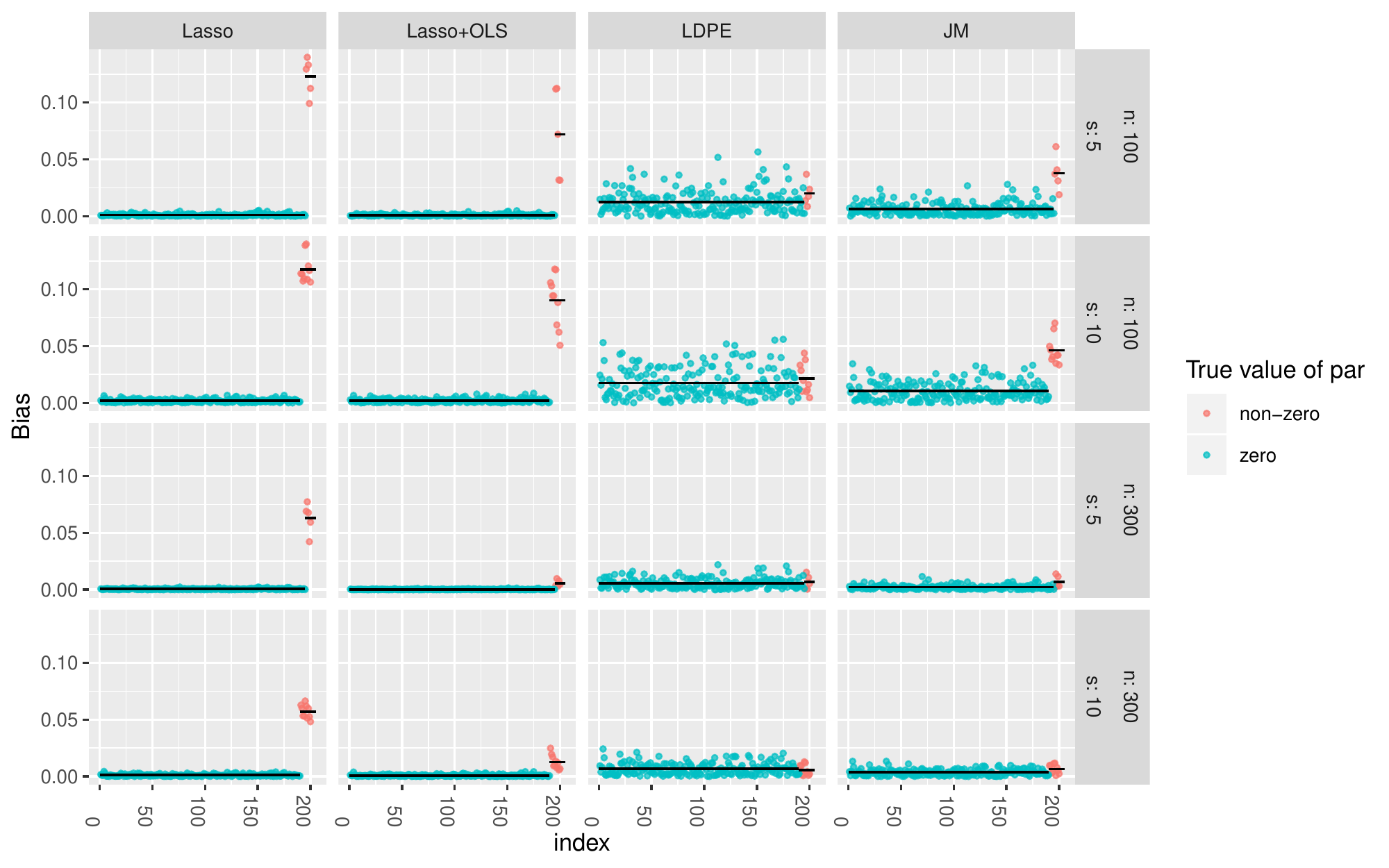}\par
\caption{Comparison of absolute bias for 1000 replications produced by four methods (columns) in four cases (rows). Index on the $x$-axis corresponds to different $a_{1j}$'s, which are arranged from small to large in absolute values. The first $p-s_1$ elements of $a_{1j}$'s are zeros (blue points) and the last $s_1$ are non-zeros (red points). The black lines are total averages of absolute bias for zero and non-zero parameters respectively.}
\label{fig:bias}
\end{figure}

\begin{figure}[t]
\centering
\includegraphics[scale=0.8]{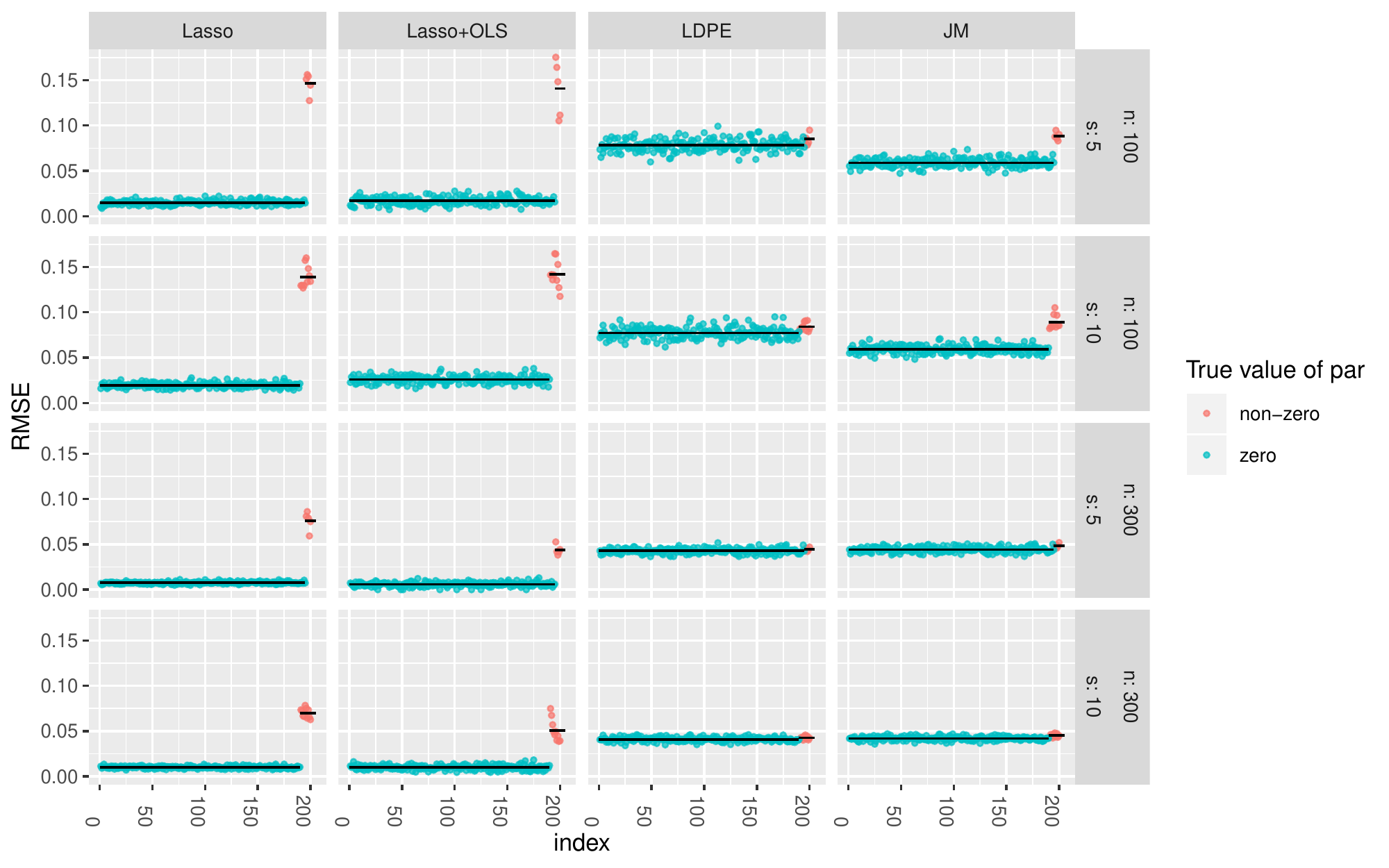}\par
\caption{Comparison of RMSE for 1000 replications produced by four methods (columns) in four cases (rows). Index on the $x$-axis corresponds to different $a_{1j}$'s, which are arranged from small to large in absolute values. The first $p-s_1$ elements of $a_{1j}$'s are zeros (blue points) and the last $s_1$ are non-zeros (red points). The black lines are total averages of absolute RMSE for zero and non-zero parameters respectively.}
\label{fig:RMSE}
\end{figure}

\subsection{Comparison of bias and RMSE}
\label{sec:homo-sim-est}

\begin{table}[ht]
  \centering
  \caption{Average absolute bias and RMSE}
  \label{tab:bias}
    \begin{tabular}{rrlrrrr}
    \hline
    \multicolumn{1}{l}{n} & \multicolumn{1}{l}{s} & $a_{ij}$ & \multicolumn{1}{l}{Lasso} & \multicolumn{1}{l}{Lasso+OLS} & \multicolumn{1}{l}{LDPE} & \multicolumn{1}{l}{JM} \\
    \hline
    &&&\multicolumn{4}{c}{Average absolute bias}\\
    \cline{4-7}
    100   & 5     & non-zero & 0.1228 & 0.0719 & 0.02  & 0.0378 \\
    100   & 5     & zero  & 0.001 & 0.0009 & 0.0122 & 0.0062 \\
    100   & 10    & non-zero & 0.1174 & 0.0902 & 0.0215 & 0.0463 \\
    100   & 10    & zero  & 0.0019 & 0.0021 & 0.0175 & 0.0104 \\
    300   & 5     & non-zero & 0.0631 & 0.0058 & 0.0066 & 0.0068 \\
    300   & 5     & zero  & 0.0006 & 0.0003 & 0.0056 & 0.0022 \\
    300   & 10    & non-zero & 0.057 & 0.0126 & 0.0056 & 0.0066 \\
    300   & 10    & zero  & 0.0011 & 0.0008 & 0.0067 & 0.0036 \\
    \hline
    &&&\multicolumn{4}{c}{Average RMSE}\\
    \cline{4-7}
    100   & 5     & non-zero & 0.1467 & 0.1409 & 0.0853 & 0.0882 \\
    100   & 5     & zero  & 0.0149 & 0.017 & 0.0783 & 0.0589 \\
    100   & 10    & non-zero & 0.1389 & 0.1422 & 0.0842 & 0.0888 \\
    100   & 10    & zero  & 0.0195 & 0.0258 & 0.0772 & 0.0592 \\
    300   & 5     & non-zero & 0.0761 & 0.0437 & 0.0446 & 0.0485 \\
    300   & 5     & zero  & 0.0076 & 0.0058 & 0.0428 & 0.0443 \\
    300   & 10    & non-zero & 0.0698 & 0.0507 & 0.0425 & 0.0451 \\
    300   & 10    & zero  & 0.0101 & 0.01  & 0.0409 & 0.0419 \\
    \hline
    \end{tabular}
  \label{tab:est}
\end{table}

In this section, we compare the bias and RMSE of four estimation methods: Lasso, Lasso+OLS, LDPE and JM.
Figures \ref{fig:bias} and \ref{fig:RMSE} and Table \ref{tab:bias} show the results of absolute bias ($|E\hat a_{1j}-a_{1j}|$) and RMSE ($[E(\hat a_{1j}-a_{1j})^2]^{1/2}$).
For non-zero parameters, the Lasso estimator has large bias, the Lasso+OLS estimator reduces the bias (23\% - 41\% when $n=100$, 77\% - 90\% when $n=300$), and two de-biased Lasso estimators further reduce the bias (60\% - 90\%).
For zero parameters, the Lasso and Lasso+OLS estimators have nearly zero bias, while the bias of two de-biased Lasso estimators are about the same magnitude as those of non-zero parameters.
In terms of RMSE, when $n=100$, situation is almost the same as bias; when $n=300$, for those non-zero parameters,  the Lasso and Lasso+OLS estimators have RMSE comparable to two de-biased Lasso estimators while for those zero parameters, the Lasso and Lasso+OLS estimators have much smaller RMSE. Specifically, for those zero parameters, compared to the LDPE, the Lasso reduces the RMSE by 75\% - 82\%.
For estimation purpose, we recommend the Lasso and Lasso+OLS since their RMSEs are smaller.
However, for construction of confidence intervals, small bias will lead to more accurate coverage probabilities, which will be seen in the next section.

\subsection{Comparison of coverage probabilities and confidence interval lengths}
\label{sec:homo-sim-ci}

\begin{figure}[ht]
\centering
\includegraphics[scale=0.8]{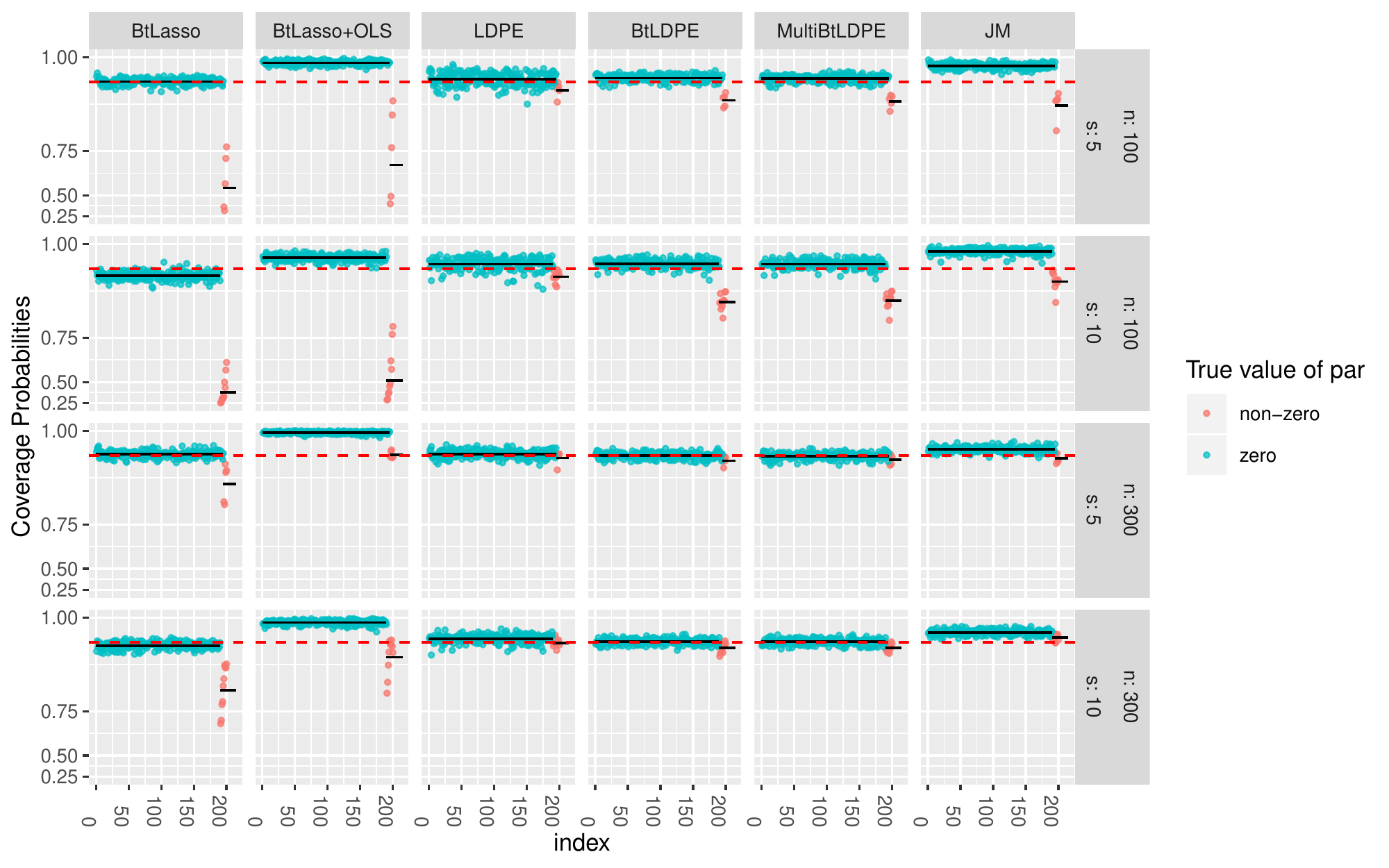}\par
\caption{Comparison of empirical coverage probabilities for 1000 replications produced by six methods (columns) in four cases (rows). Index on the $x$-axis corresponds to different $a_{1j}$'s, which are arranged from small to large in absolute values. The first $p-s_1$ elements of $a_{1j}$'s are zeros (blue points) and the last $s_1$ are non-zeros (red points). The black lines are total averages of coverage probabilities for zero and non-zero parameters respectively. The red dashed lines correspond to the nominal confidence level 95\%.}
\label{fig:cp1}
\end{figure}

\begin{figure}[ht]
\centering
\includegraphics[scale=0.8]{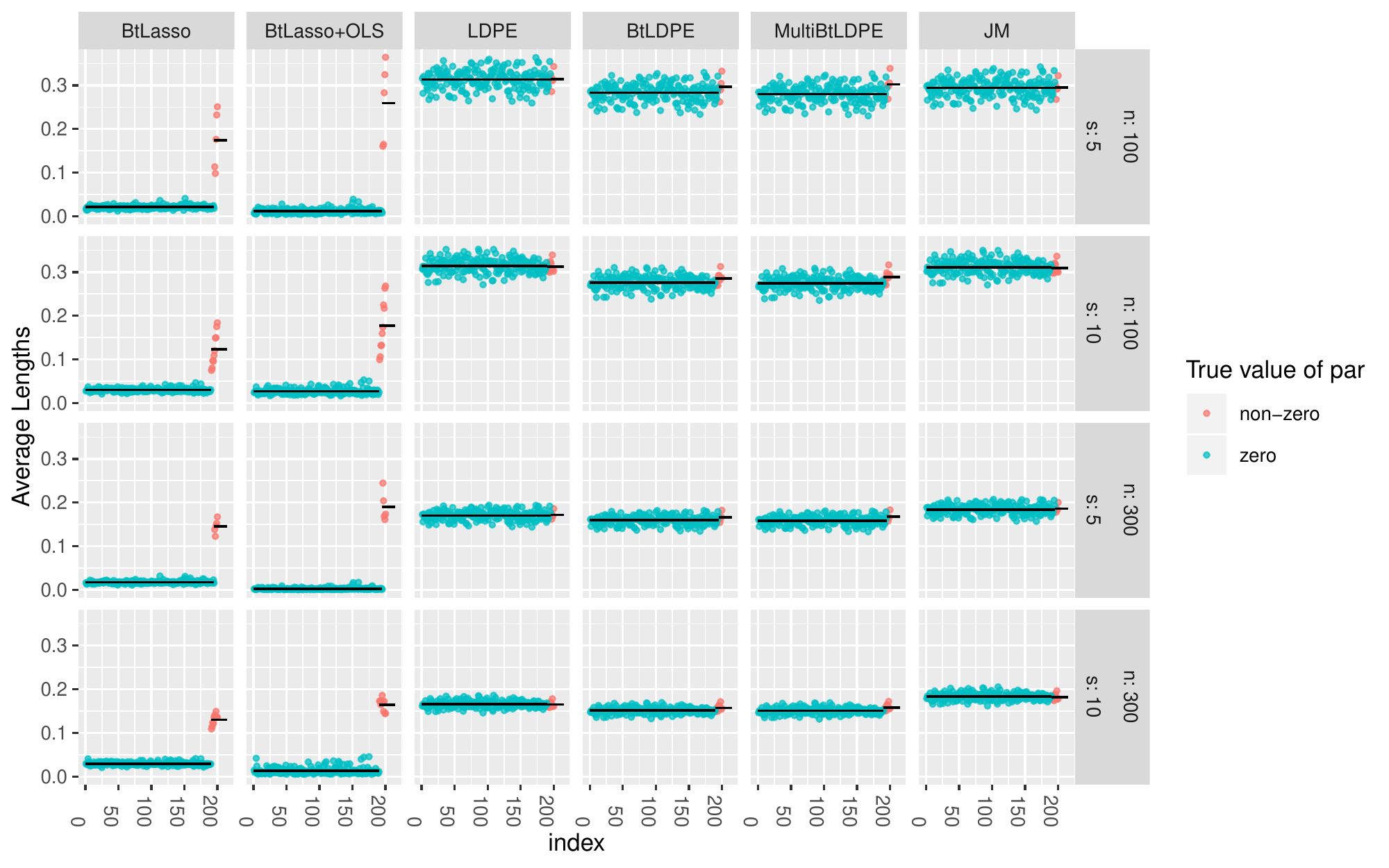}\par
\caption{
Comparison of average confidence interval lengths for 1000 replications produced by six methods (columns) in four cases (rows). Index on the $x$-axis corresponds to different $a_{1j}$'s, which are arranged from small to large in absolute values. The first $p-s_1$ elements of $a_{1j}$'s are zeros (blue points) and the last $s_1$ are non-zeros (red points). The black lines are total averages of interval lengths for zero and non-zero parameters respectively.
}
\label{fig:length1}
\end{figure}

\begin{table}[ht]
  \centering
  \caption{Average empirical coverage probabilities and average confidence interval lengths}
  \label{tab:coverage}
    \begin{tabular}{rrlrrrrrr}
    \hline
    \multicolumn{1}{l}{n} & \multicolumn{1}{l}{s} & $a_{ij}$ & \multicolumn{1}{l}{BtLasso} & \multicolumn{1}{l}{BtLassoOLS} & \multicolumn{1}{l}{LDPE} & \multicolumn{1}{l}{BtLDPE} & \multicolumn{1}{l}{MultiBtLDPE} & \multicolumn{1}{l}{JM} \\
    \hline
    &&&\multicolumn{6}{c}{Average empirical coverage probabilities (\%)}\\
    \cline{4-9}
    100   & 5     & non-zero & 55.8  & 69    & 93.2  & 90.7  & 90.5  & 89.4 \\
    100   & 5     & zero  & 95.1  & 99    & 95.6  & 95.9  & 95.7  & 98.3 \\
    100   & 10    & non-zero & 40    & 51.3  & 93.2  & 86.9  & 87.2  & 92.1 \\
    100   & 10    & zero  & 93.4  & 97.3  & 96    & 96.1  & 96    & 98.6 \\
    300   & 5     & non-zero & 88.2  & 95.2  & 94.5  & 93.8  & 94.1  & 94.3 \\
    300   & 5     & zero  & 95.3  & 99.7  & 95.4  & 95    & 94.9  & 96.3 \\
    300   & 10    & non-zero & 82.4  & 91.5  & 94.8  & 93.7  & 93.8  & 96 \\
    300   & 10    & zero  & 94.3  & 99    & 95.7  & 95.1  & 95.1  & 97.1 \\
    \hline
    &&&\multicolumn{6}{c}{Average confidence interval lengths}\\
    \cline{4-9}
    100   & 5     & non-zero & 0.174 & 0.259 & 0.315 & 0.296 & 0.302 & 0.295 \\
    100   & 5     & zero  & 0.021 & 0.012 & 0.313 & 0.283 & 0.28  & 0.294 \\
    100   & 10    & non-zero & 0.124 & 0.178 & 0.313 & 0.285 & 0.289 & 0.309 \\
    100   & 10    & zero  & 0.03  & 0.027 & 0.314 & 0.276 & 0.274 & 0.311 \\
    300   & 5     & non-zero & 0.145 & 0.19  & 0.172 & 0.166 & 0.167 & 0.186 \\
    300   & 5     & zero  & 0.017 & 0.003 & 0.17  & 0.159 & 0.159 & 0.184 \\
    300   & 10    & non-zero & 0.13  & 0.165 & 0.165 & 0.157 & 0.158 & 0.182 \\
    300   & 10    & zero  & 0.03  & 0.013 & 0.166 & 0.151 & 0.151 & 0.183 \\
    \hline
    \end{tabular}%
  \label{tab:ci}%
\end{table}%

We now compare the coverage probabilities and mean confidence interval lengths of 95\% confidence intervals constructed by six methods: de-biased Lasso (LDPE), residual bootstrap de-biased Lasso (BtLDPE), multiplier wild bootstrap de-biased Lasso (MultiBtLDPE), bootstrap Lasso (BtLasso), bootstrap Lasso+OLS (BtLasso+OLS) and de-biased Lasso of \cite{Javanmard:2014} (JM).

Figure \ref{fig:cp1} and Table \ref{tab:coverage} show the results of coverage probabilities. 
For non-zero parameters, the coverage probabilities of BtLasso and BtLasso+OLS do not reach the nominal confidence level in all cases except for $n=300, s_{i}=5$. The LDPE, BtLDPE, MultiBtLDPE and JM can reach the nominal confidence level when $n=300$ while only the LDPE reaches the nominal confidence level when $n=100$.
For zero parameters, all methods except the BtLasso reach the nominal confidence level. Note that the BtLasso+OLS and JM produce much higher coverage probabilities, for example 99\%, than the nominal level 95\%.

Figure \ref{fig:length1} and Table \ref{tab:coverage} show the results of average confidence interval lengths.
For non-zero parameters, the BtLasso and BtLasso+OLS produce shorter confidence intervals than the other four methods. Moreover, compared to the LDPE, the BtLDPE reduces confidence interval lengths by 6\% - 12\% when $n=300$ and 3\% - 9\% when $n=100$.
For zero parameters, the BtLasso and BtLasso+OLS have nearly zero average confidence interval lengths, reflecting the super-efficiency of these two methods. For the other four methods, the LDPE, BtLDPE, MultiBtLDPE and JM, the confidence interval lengths for zero parameters are nearly the same as those for non-zero parameters.
Meanwhile, confidence intervals produced by the BtLDPE and MultiBtLDPE are shorter than those produced by the LDPE and JM.

Taking into account both coverage probabilities and confidence interval lengths, when $n$ is small, we recommend the LDPE for its honest coverage probabilities; when $n$ is large, we recommend the BtLDPE and MultiBtLDPE for their honest coverage probabilities and shorter confidence interval lengths.

\subsection{Robustness to the distributions of  errors}
\label{sec:heter-sim}

In this subsection, we explore the robustness of our methods to different distributions of errors, namely, homoscedastic non-Gaussian errors, heteroscedastic Gaussian errors and heteroscedastic non-Gaussian errors.
Compared to homoscedastic Gaussian errors, different distributions of  errors do not lead to significant difference of performance; see the results in the supplementary material. 
Again, we can see that the LDPE has honest coverage probabilities and the BtLDPE and MultiBtLDPE have shorter confidence interval lengths compared to the LDPE.

\section{Real data}

\begin{figure}[ht]
\centering
\includegraphics[scale=0.18]{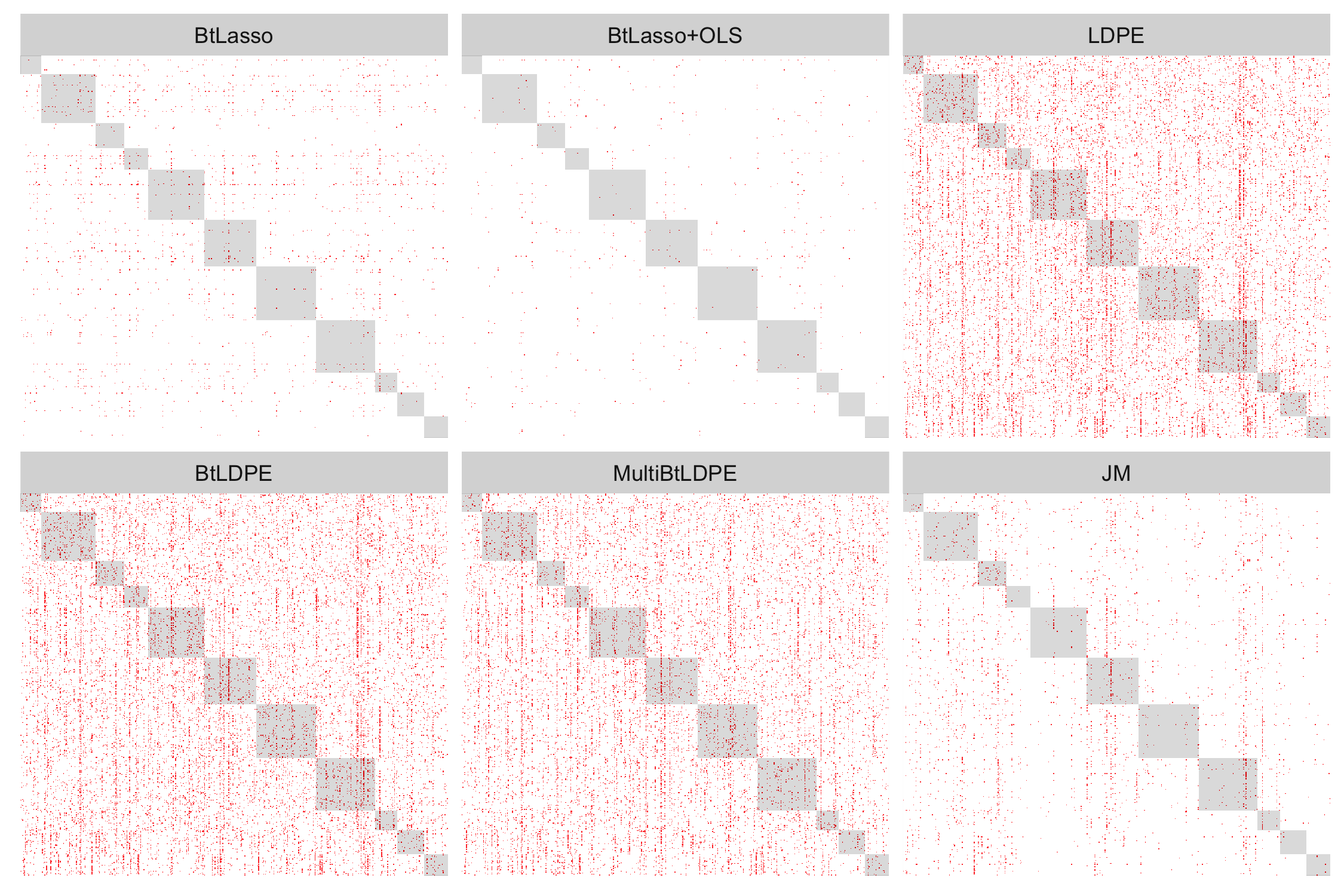}\par
\caption{
Statistical inference results of the $500 \times 500$ transition matrix of the VAR model for the returns of the S\&P 500 constituent stocks using six methods: de-biased Lasso (LDPE), residual bootstrap de-biased Lasso (BtLDPE), multiplier wild bootstrap de-biased Lasso (MultiBtLDPE), de-biased Lasso of \cite{Javanmard:2014} (JM), bootstrap Lasso (BtLasso) and bootstrap Lasso+OLS (BtLasso+OLS). 
The red point indicates that the corresponding parameter $a_{ij}$ is significant (its 95\% confidence interval does not include 0).
The gray square indicates that the two stocks corresponding to $a_{ij}$ are in the same sector. 
}
\label{fig:sp500-sig}
\end{figure}

\begin{figure}[ht]
\centering
\includegraphics[scale=0.18]{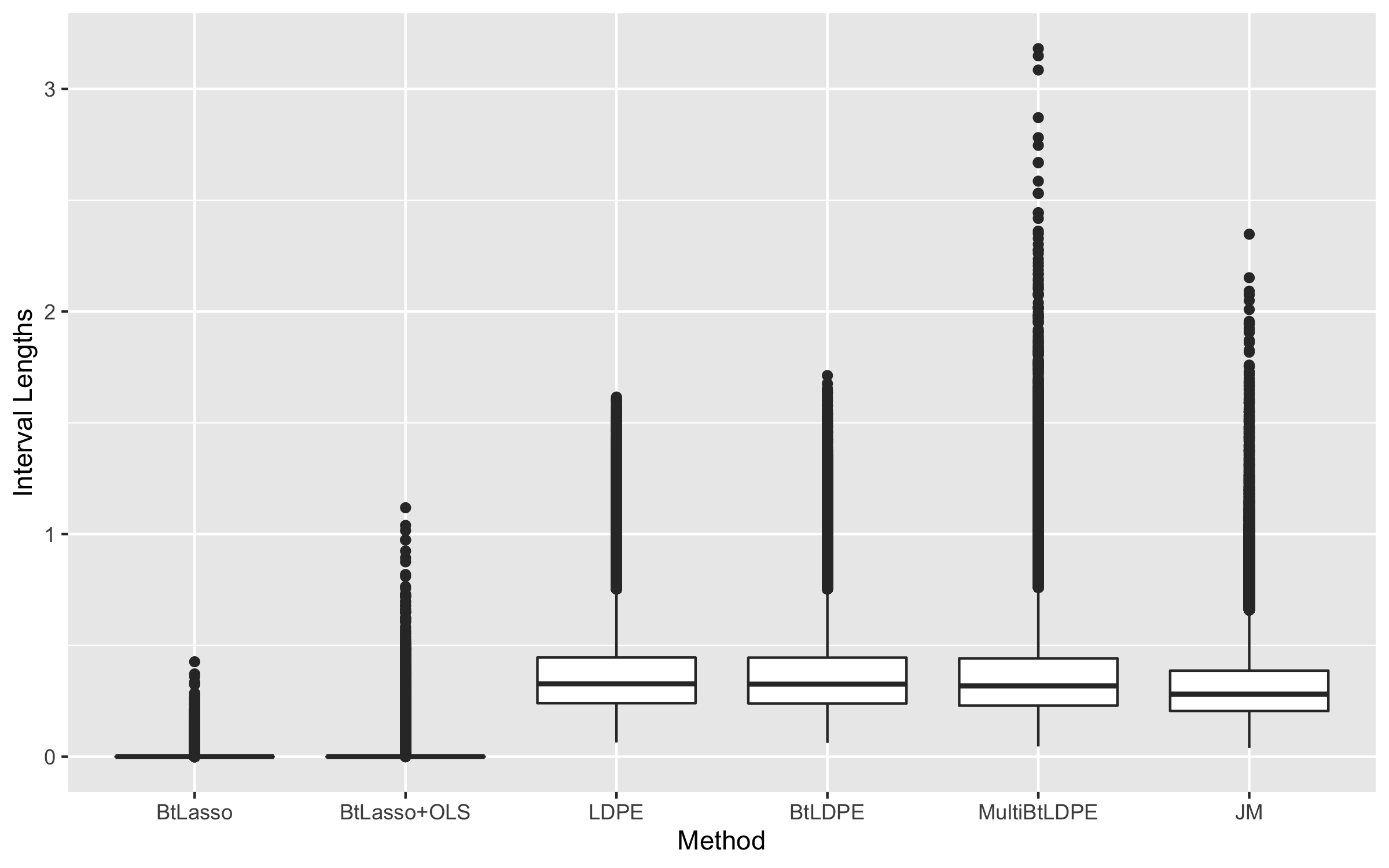}\par
\caption{
Box plot of confidence interval lengths of the $500\times 500$ parameters produced by six methods: de-biased Lasso (LDPE), residual bootstrap de-biased Lasso (BtLDPE), multiplier wild bootstrap de-biased Lasso (MultiBtLDPE), de-biased Lasso of \cite{Javanmard:2014} (JM), bootstrap Lasso (BtLasso) and bootstrap Lasso+OLS (BtLasso+OLS).
}
\label{fig:sp500-length}
\end{figure}

In practice, researchers often use VAR models to analyze the time series data of stocks and conduct statistical inference on the elements of transition matrix, so as to produce knowledge about the relationship between different stocks.
In this section, we use the prices of the S\&P 500 constituent stocks in 2019 to demonstrate our methods.
There are 505 stocks in total because there are five companies have two share classes of stock. There are 252 trading days in 2019, but five stocks have incomplete data for some reasons. After deleting these five stocks, we have data of 500 stocks for 252 days. 
Because our model needs the stationarity of time series data, we use the daily return 
$$
r_{t}=\frac{p_{t}}{p_{t-1}}-1, \quad t=0, \ldots, n
$$
as $\{\mathbf{y}_t\}$ in model (\ref{eq:VAR}), where $p_{t}$ is the adjusted price. Since the transformation reduces one observation and $t$ begins from $0$, in our model, $n=250$.
These 500 companies are in eleven different sectors: Communication Services, Consumer Discretionary, Consumer Staples, Energy, Financials, Health Care, Industrials, Information Technology, Materials, Real Estate and Utilities.

We apply the same six methods as in Section \ref{sec:sim} to this data set and obtain the 95\% confidence intervals for elements of the transition matrix $A$.
The results are shown in Figures \ref{fig:sp500-sig} and \ref{fig:sp500-length}.
First, compared with the BtLasso, BtLasso+OLS and JM, our proposed methods produce more significant parameters, which can provide users with more candidates for effective relationships to make future decisions.
Second, some of the columns of the estimated transition matrix have many significant parameters (see the vertical red lines in Figure \ref{fig:sp500-sig}), indicating that the prices of some stocks have prediction power on the prices of most stocks. A further look at the results reveals that all six methods indicate that Newmont Corporation is such a stock having the ability to predict many other stock prices in advance. Newmont Corporation is the largest producer of gold in the world and the only gold producer listed in the S\&P 500 Index. Considering that the turbulent global financial environment in 2019 made gold the best safe-haven asset, our findings have practical significance.
Third, there are relationships both within and between sectors.
Finally, Figure \ref{fig:sp500-length} shows that the BtLasso and BtLasso+OLS produce confidence intervals with lengths nearly zero, while the confidence interval lengths of the other four methods are larger than zero and roughly comparable.

\section{Conclusion}

Performing statistical inference for parameters in high-dimensional VAR models is a challenging but important problem.
We propose to use the de-biased Lasso (LDPE), residual bootstrap de-biased Lasso (BtLDPE) and multiplier wild bootstrap de-biased Lasso (MultiBtLDPE) to construct confidence intervals for the individual parameter of the transition matrix.
Unlike the fixed design case in linear regression models, the design matrix in VAR models is random with complex dependence structure, which makes theoretical analysis challenging.
Based on the convergence rates of the Lasso and the nodewise Lasso estimators, we obtain the asymptotic unbiasedness of the de-biased Lasso estimator. Combined with the martingale central limit theorem, we obtain its asymptotic normality.
For the two bootstrap de-biased Lasso methods, the analysis is conditional on the original data and we derive their asymptotic properties based on the randomness coming from the bootstrap sampling. We demonstrate the validity of statistical inference for the parameters of  high-dimensional sparse VAR models using these methods.
Furthermore, we propose feasible and parallelizable algorithms to implement our methods. 
More specifically, we apply the de-biased Lasso, residual bootstrap de-biased Lasso and multiplier wild bootstrap de-biased Lasso to each of the $p$ equations of VAR models separately, which can be ran in parallel. 
More importantly, the $p$ equations share the same design matrix, so we need only to compute the nodewise Lasso once, which is the main computational burden of the de-biased Lasso and bootstrap de-biased Lasso. 
The proposed methods have significant computational advantages, especially when $p$ is large.

We conduct comprehensive simulation studies to compare our methods with the bootstrap Lasso, bootstrap Lasso+OLS and another de-biased Lasso method proposed by \cite{Javanmard:2014}. We find that the LDPE can always give the honest coverage probabilities, and when sample size is large, the BtLDPE and MultiBtLDPE can also give the honest coverage probabilities but with shorter confidence interval lengths. Therefore, when the sample size is small, we recommend the LDPE for its reliability, and when the sample size is large, we recommend the BtLDPE and MultiBtLDPE for their reliability and power.
Lastly, we apply our methods to analyze the S\&P 500 constituent stock prices data set and obtain reasonable confidence intervals.

In our theoretical study, we assume the homoscedastic Gaussian errors. However, we find in simulations that our methods are robust to heteroscedastic and/or non-Gaussian errors. It is interesting and worthy of further investigation to obtain the asymptotic distributions of the proposed methods for heteroscedastic and/or non-Gaussian errors.  The main technical difficulty is to establish convergence rates of the Lasso and martingale central limit theorem for this type of errors.

This article focuses on statistical inference for individual parameter of the transition matrix in high-dimensional sparse VAR models. It is interesting to extend the methods for simultaneous confidence intervals and multiple hypothesis testing. For this purpose, we can use the Bonferroni correction or Westfall-Young procedure \citep{Westfall:1993uu}. We leave the corresponding theoretical investigation to future work.

\section*{Supplementary Material}

The document provides the detailed proofs of the theoretical results, as well as additional simulation results for different distributions of errors.

\section*{Acknowledgments}

The authors thank Dr. Lixiang Zhang for his suggestions that have helped clarify the text.

\section*{Funding}

Dr. Hanzhong Liu acknowledges the financial support from the National Natural Science Foundation of China (grant nos. 11701316).

\bibliographystyle{natbib.bst}
\bibliography{hdciVAR.bib}

\appendix

\begin{center}
{\large\bf Supplementary Material for ``Confidence Intervals for Parameters in High-dimensional Sparse Vector Autoregression"}
\end{center}

The document provides the detailed proofs of the theoretical results in the main text, as well as additional simulation results for different distributions of errors.

\section{Proofs of the theoretical results}

Our proofs require several theoretical results from \cite{Basu:2015ho}. 
Firstly, we bound the extreme eigenvalue of $\Sigma$, which can be obtained directly from  Proposition 2.3 in \cite{Basu:2015ho}.
\begin{proposition}\label{prop:prop2.3}
	For the stable VAR model in (\ref{eq:VAR}), we have
	$$
	1/\Lambda_{\rm min}(\Sigma)=O(1), \quad \Lambda_{\rm max}(\Sigma)=O(1).
	$$
\end{proposition}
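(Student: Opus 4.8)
The plan is to represent $\Sigma=E(\mathbf{y}_{t-1}\mathbf{y}_{t-1}^{\T})$ as the integrated spectral density of the stationary VAR process and then read off the extreme-eigenvalue bounds from Proposition 2.3 of \cite{Basu:2015ho}. Since the process is stable, it admits the stationary solution $\mathbf{y}_t=\sum_{k=0}^{\infty}A^k\mathbf{e}_{t-k}$, so that $\Sigma=\sum_{k=0}^{\infty}A^k\Sigma_\mathbf{e}(A^{\T})^k$ is well defined and solves the discrete Lyapunov equation $\Sigma=A\Sigma A^{\T}+\Sigma_\mathbf{e}$. The point of passing to the spectral domain is that the extreme eigenvalues of $\Sigma$ are controlled uniformly by those of the spectral density, turning the problem into a bound on the transition polynomial $\mathcal{A}(z)=I-Az$ on the unit circle.

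First I would write the matrix-valued spectral density $f(\omega)=(2\pi)^{-1}(I-Ae^{-i\omega})^{-1}\Sigma_\mathbf{e}(I-A^{\T}e^{i\omega})^{-1}$ and use the inversion formula $\Sigma=\int_{-\pi}^{\pi}f(\omega)\,d\omega$. Because integrating a Hermitian positive semidefinite matrix-valued function preserves the pointwise two-sided eigenvalue bounds (for any unit vector $v$, $v^*\Sigma v=\int v^*f(\omega)v\,d\omega$), one obtains
\begin{equation*}
2\pi\,\inf_\omega\Lambda_{\rm min}\big(f(\omega)\big)\le \Lambda_{\rm min}(\Sigma),\qquad \Lambda_{\rm max}(\Sigma)\le 2\pi\,\sup_\omega\Lambda_{\rm max}\big(f(\omega)\big).
\end{equation*}
This is exactly the content of Proposition 2.3 in \cite{Basu:2015ho}, which yields the explicit bounds $\Lambda_{\rm max}(\Sigma)\le \Lambda_{\rm max}(\Sigma_\mathbf{e})/\mu_{\rm min}(\mathcal{A})$ and $\Lambda_{\rm min}(\Sigma)\ge \Lambda_{\rm min}(\Sigma_\mathbf{e})/\mu_{\rm max}(\mathcal{A})$, where $\mu_{\rm min}(\mathcal{A})=\min_{|z|=1}\Lambda_{\rm min}(\mathcal{A}^*(z)\mathcal{A}(z))$ and $\mu_{\rm max}(\mathcal{A})=\max_{|z|=1}\Lambda_{\rm max}(\mathcal{A}^*(z)\mathcal{A}(z))$.

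It then remains to check that the right-hand sides are $O(1)$. The upper bound $\mu_{\rm max}(\mathcal{A})\le(1+\Lambda_{\rm max}(A))^2<\infty$ is immediate, and together with a lower bound on $\Lambda_{\rm min}(\Sigma_\mathbf{e})$ gives $1/\Lambda_{\rm min}(\Sigma)=O(1)$; similarly an upper bound on $\Lambda_{\rm max}(\Sigma_\mathbf{e})$ handles the numerator of the other bound. The hard part will be $\mu_{\rm min}(\mathcal{A})$: stability ($\Lambda_{\rm max}(A)<1$) guarantees $\det\mathcal{A}(z)\ne0$ on $|z|\le1$ and hence $\mu_{\rm min}(\mathcal{A})>0$ for each fixed $p$, but what is actually needed for $\Lambda_{\rm max}(\Sigma)=O(1)$ is that $\mu_{\rm min}(\mathcal{A})$ stays bounded \emph{away from zero} as the dimension grows, i.e. that the stability margin does not degenerate. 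I expect this uniform lower bound to be the main obstacle, and it is precisely the model constant supplied by Proposition 2.3 of \cite{Basu:2015ho} rather than something provable dimension-by-dimension. Granting that the stability margin and the eigenvalue bounds on $\Sigma_\mathbf{e}$ are $O(1)$ under the model assumptions, both $1/\Lambda_{\rm min}(\Sigma)$ and $\Lambda_{\rm max}(\Sigma)$ are $O(1)$, which completes the proof.
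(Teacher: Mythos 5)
Your proposal matches the paper's treatment: the paper proves this proposition simply by invoking Proposition 2.3 of \cite{Basu:2015ho}, which is exactly the spectral-density bound $\Lambda_{\rm max}(\Sigma)\le \Lambda_{\rm max}(\Sigma_\mathbf{e})/\mu_{\rm min}(\mathcal{A})$, $\Lambda_{\rm min}(\Sigma)\ge \Lambda_{\rm min}(\Sigma_\mathbf{e})/\mu_{\rm max}(\mathcal{A})$ that you unpack. Your additional observation---that the $O(1)$ conclusion implicitly requires the stability margin $\mu_{\rm min}(\mathcal{A})$ and the eigenvalues of $\Sigma_\mathbf{e}$ to stay bounded uniformly as $p$ grows, which is a model constant rather than something derivable from $\Lambda_{\rm max}(A)<1$ alone---is correct and is silently absorbed into that citation in the paper.
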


Secondly, the following concentration inequalities are obtained from Proposition 2.4 in \cite{Basu:2015ho}.
\begin{proposition}\label{prop:prop2.4}
	For the stable VAR model in (\ref{eq:VAR}), \\(a) There exists a constant $c_1>0$ such that for any vector $v\in \mathbb{R}^p$ with $\|v\|_2 \leq 1$, for any $\eta \geq 0$, we have
	\begin{equation}
		P[v^{\T}(\mathbf { X } ^ {\T } \mathbf { X } / n-\Sigma)v>Q_1\eta]\leq 2 \exp \left[-c_1 n \min \left\{\eta^{2}, \eta\right\}\right],\label{eq:concen1}
	\end{equation}
	where $Q_1$ is a constant depends only on the dependence structure of the VAR model.\\
	(b) For $(p-1)\times p$ matrix $W$ and $p$-dimensional vector $w$, if ${\rm Cov}(W\mathbf{y}_{t},w^\T\mathbf{y}_{t})=0$ for every $t\in \mathbb{Z}$, then there exist constants $c_2$ such that for any vector $u\in \mathbb{R}^{p-1}$ with $\|u\|_2 \leq 1$, for any $\eta \geq 0$, we have
	\begin{equation}
		P[u^{\T}(W\mathbf { X } ^ {\T } \mathbf { X }w / n)>Q_2\eta]\leq 6 \exp \left[-c_2 n \min \left\{\eta^{2}, \eta\right\}\right],\label{eq:concen3}
	\end{equation}
	where $Q_2$ is a constant depends only on the dependence structure of the VAR model.
\end{proposition}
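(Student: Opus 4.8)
The plan is to view each random form as a functional of a scalar stationary Gaussian process and apply a Gaussian quadratic-form tail bound whose governing constant is the supremum of the spectral density of $\{\mathbf{y}_t\}$; this is exactly the route taken in Proposition~2.4 of \cite{Basu:2015ho}, which I would specialize. For part (a), fix $v$ with $\|v\|_2\le 1$ and set $g_t:=v^{\T}\mathbf{y}_{t-1}$, a zero-mean stationary Gaussian sequence. Then
\[
v^{\T}(\mathbf{X}^{\T}\mathbf{X}/n-\Sigma)v=\frac{1}{n}\sum_{t=1}^{n}\bigl(g_t^{2}-E[g_t^{2}]\bigr).
\]
Collecting $G=(g_1,\dots,g_n)^{\T}\sim\mathcal{N}(0,\Gamma_n)$, the left-hand side equals $n^{-1}(G^{\T}G-\operatorname{tr}\Gamma_n)$, a centered Gaussian quadratic form, so a Hanson--Wright bound gives a tail of the shape $2\exp\!\bigl(-c\min\{s^2/\|\Gamma_n\|_F^2,\,s/\|\Gamma_n\|_{\mathrm{op}}\}\bigr)$.

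The key quantitative input, and the place where stability enters, is that $\Gamma_n$ is a Toeplitz-type covariance whose operator norm is bounded by $2\pi\,\mathrm{ess\,sup}_\theta\Lambda_{\max}(f_y(\theta))$, where
\[
f_y(\theta)=\frac{1}{2\pi}(I-Ae^{-i\theta})^{-1}\Sigma_{\mathbf{e}}(I-A^{\T}e^{i\theta})^{-1}.
\]
Because $\Lambda_{\max}(A)<1$ in \eqref{eq:VAR}, this supremum is finite and depends only on $(A,\Sigma_{\mathbf{e}})$ --- the same mechanism that yields Proposition~\ref{prop:prop2.3}. Taking $Q_1$ proportional to this supremum and using $\|\Gamma_n\|_F^2\le n\|\Gamma_n\|_{\mathrm{op}}^2$, the substitution $s=nQ_1\eta$ turns the exponent into $c_1 n\min\{\eta^2,\eta\}$, with $c_1$ universal and $Q_1$ free of $v,n,p$; this gives \eqref{eq:concen1}.

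For part (b) I would first remove the population mean: with $a_t:=u^{\T}W\mathbf{y}_{t-1}$ and $b_t:=w^{\T}\mathbf{y}_{t-1}$, the hypothesis $\mathrm{Cov}(W\mathbf{y}_t,w^{\T}\mathbf{y}_t)=W\Sigma w=0$ gives $E[a_tb_t]=u^{\T}W\Sigma w=0$, so $u^{\T}(W\mathbf{X}^{\T}\mathbf{X}w/n)=n^{-1}\sum_t a_tb_t$ is already centered. Polarizing, $a_tb_t=\tfrac14[(a_t+b_t)^2-(a_t-b_t)^2]$ with $a_t\pm b_t=(W^{\T}u\pm w)^{\T}\mathbf{y}_{t-1}$, each a scalar stationary Gaussian sequence. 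Applying part (a) to the two normalized coefficient directions and combining by a union bound produces the centered tail with the larger prefactor $6$; the resulting $Q_2$ is controlled by $\mathrm{ess\,sup}_\theta\Lambda_{\max}(f_y(\theta))$ together with the norms of $W^{\T}u\pm w$, which are $O(1)$ in the nodewise-Lasso applications.

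The concentration inequality is imported from \cite{Basu:2015ho}, so the real work is the bookkeeping of constants. I must confirm that after polarization and renormalization the constituent processes carry unit-norm coefficient vectors, so that part (a) applies with the \emph{same} spectral constant, and that the zero-covariance hypothesis is used correctly to kill the mean in part (b). The most delicate point is tracking the dependence of $Q_2$ on $W$ and $w$: one must check that, for the transformations arising in the nodewise-Lasso construction, these factors stay bounded, so that $Q_1$ and $Q_2$ ultimately depend only on the bounded spectral density guaranteed by the stability of \eqref{eq:VAR} and remain free of $n$ and $p$.
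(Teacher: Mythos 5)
The paper does not actually prove this proposition --- it is imported verbatim as a specialization of Proposition~2.4 of \cite{Basu:2015ho} --- and your reconstruction follows exactly the route of that source: reduction of the quadratic form to a centered Gaussian chi-square via $g_t=v^{\T}\mathbf{y}_{t-1}$, a Hanson--Wright bound with constants governed by the supremum of the spectral density (finite by stability of the VAR process), and polarization plus a union bound for the bilinear case. Your bookkeeping is correct (your two-term polarization in fact yields prefactor $4$ rather than $6$, which is consistent with the stated bound), and you rightly flag that for $Q_2$ to depend only on the dependence structure one must have the norms of $W^{\T}u\pm w$ bounded, which indeed holds for the nodewise-Lasso application where $\|w\|_2\le 1+\|\gamma_j\|_2=O(1)$.
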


Thirdly, the following Proposition \ref{prop:4.2&4.3} are from Proposition 4.2 and 4.3 in \cite{Basu:2015ho}. The first inequality is related to the restricted eigenvalue (RE) condition for the design matrix $\mathbf{X}$ and we modify the original general conclusion to fit our proof related to the nodewise Lasso.
The second is the deviation bound for $\mathbf{X}^{\T}\varepsilon_i/n$. 
\begin{proposition}\label{prop:4.2&4.3}
(a) Under Assumptions \ref{ap:s0} and \ref{ap:sj}, we have, in probability,
\begin{equation}
	\theta^{\prime}  (\mathbf{X}^\T \mathbf{X}/n) \theta \geq \alpha\|\theta\|_2^{2}-\tau\|\theta\|_{1}^{2}, \quad \forall \theta \in \mathbb{R}^{p},
	\label{eq:ndl-RE}
\end{equation}
where $\alpha>0$ is a constant depends only on the dependence structure of the VAR model, $\tau$ satisfies $\tau s_{i}=O(1)$ for any $i$ and $\tau q_{j}=O(1)$ for any $j$.
\\(b) Under Assumption \ref{ap:s0}, for $i=1,...,p$, we have, in probability,
\begin{equation}
	\|\mathbf{X}^{\T}\varepsilon_i/n\|_\infty\leq C_1\sqrt{\log p/n}, \label{eq:L2}
\end{equation}
where $C_1>0$ is a constant depends only on the dependence structure of the VAR model.
\end{proposition}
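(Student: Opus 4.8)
The plan is to obtain both parts as consequences of the concentration inequalities of Proposition~\ref{prop:prop2.4} combined with the eigenvalue control of Proposition~\ref{prop:prop2.3}, tracking constants carefully so that the correction term $\tau$ comes out at the rate $\log p/n$. For part (a) I would first upgrade the pointwise bound \eqref{eq:concen1} to one that is uniform over sparse directions. Fix $m=c'n/\log p$ for a small constant $c'$ to be chosen, and bound $\sup|v^\T(\hat\Sigma-\Sigma)v|$ over unit vectors $v$ with $\|v\|_0\le 2m$, where $\hat\Sigma:=\mathbf{X}^\T\mathbf{X}/n$. Covering each $2m$-dimensional coordinate subspace by a $1/4$-net and union-bounding over the $\binom{p}{2m}$ supports, the metric-entropy term is of order $m\log(ep/m)\asymp m\log p\asymp c'n$; balancing this against the exponent $c_1 n\eta^2$ in \eqref{eq:concen1}, a constant choice $\eta^2\asymp c'/c_1$ forces the uniform deviation to be at most $\delta:=Q_1\eta\asymp\sqrt{c'}$ with probability tending to one. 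I would then invoke the standard deterministic sparse-to-dense lemma (of Loh--Wainwright type): a uniform deviation $\delta$ over $2m$-sparse unit vectors implies $|\theta^\T(\hat\Sigma-\Sigma)\theta|\le c_0\delta(\|\theta\|_2^2+m^{-1}\|\theta\|_1^2)$ for every $\theta\in\mathbb{R}^p$.

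Combining this with $\theta^\T\Sigma\theta\ge\Lambda_{\rm min}(\Sigma)\|\theta\|_2^2$ from Proposition~\ref{prop:prop2.3} yields $\theta^\T\hat\Sigma\theta\ge(\Lambda_{\rm min}(\Sigma)-c_0\delta)\|\theta\|_2^2-(c_0\delta/m)\|\theta\|_1^2$. Since $\delta\asymp\sqrt{c'}$ and $\Lambda_{\rm min}(\Sigma)$ is bounded away from zero by Proposition~\ref{prop:prop2.3}, I fix $c'$ small enough that $\alpha:=\Lambda_{\rm min}(\Sigma)-c_0\delta>0$; then $\tau:=c_0\delta/m\asymp\sqrt{c'}\,\log p/(c'n)\asymp\log p/n$ with the implied constant now fixed. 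Consistency of $\tau$ with the sparsity levels is then immediate: $\tau s_i\asymp s_i\log p/n=O(1)$ because Assumption~\ref{ap:s0} gives $s_i\log p/\sqrt n=o(1)$, and likewise $\tau q_j=O(1)$ by Assumption~\ref{ap:sj}. Restricting $\theta$ to vectors with a zero $j$th coordinate gives the identical bound for the subdesign $\mathbf{X}_{-j}$, which is the form needed in the analysis of the nodewise Lasso.

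For part (b) I would argue entrywise. Writing $X_j^\T\varepsilon_i/n=n^{-1}\sum_{t=1}^n y_{j,t-1}e_{it}$, note that the innovation $\mathbf{e}_t$ is independent of the past $\mathbf{y}_{t-1}$, so $\mathrm{Cov}(y_{j,t-1},e_{it})=0$; passing to the joint stationary process $(\mathbf{y}_{t-1}^\T,\mathbf{e}_t^\T)^\T$ places this cross term in precisely the uncorrelated structure covered by \eqref{eq:concen3}. Applying that bound with $\eta=C\sqrt{\log p/n}$, so that $\min\{\eta^2,\eta\}=\eta^2$ for large $n$, controls a single entry by $Q_2C\sqrt{\log p/n}$ with exceptional probability $6p^{-c_2 C^2}$. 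A union bound over $j=1,\dots,p$ gives exceptional probability $6p^{1-c_2C^2}\to 0$ once $C$ is chosen with $c_2C^2>1$, and setting $C_1=Q_2C$ establishes \eqref{eq:L2}.

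I expect the main obstacle to be part (a): carrying out the net and union-bound argument at the correct sparsity scale $m\asymp n/\log p$ so that the metric entropy is absorbed by a constant-level deviation $\delta$, and then choosing the scale constant $c'$ small enough that the sparse-to-dense lemma leaves $\alpha>0$ while still producing the sharp rate $\tau\asymp\log p/n$ rather than something coarser. A secondary point of care in part (b) is verifying that, because $\varepsilon_i$ depends on both $\mathbf{y}_t$ and $\mathbf{y}_{t-1}$, the cross term genuinely satisfies the zero-cross-covariance hypothesis required by \eqref{eq:concen3}, which is why I pass to the augmented process $(\mathbf{y}_{t-1}^\T,\mathbf{e}_t^\T)^\T$.
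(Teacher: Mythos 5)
Your proposal is correct and is essentially the argument the paper relies on: the paper does not prove this proposition itself but imports it directly from Propositions 4.2 and 4.3 of \cite{Basu:2015ho}, whose proofs are precisely the discretization-plus-union-bound over $O(n/\log p)$-sparse supports, the Loh--Wainwright sparse-to-dense lemma, and the augmented-process deviation bound for the cross term $\mathbf{X}^{\T}\varepsilon_i/n$ that you reconstruct. The only small points to make explicit are that $m=c'n/\log p\ge 1$ and $\eta=C\sqrt{\log p/n}\le 1$ (so that $\min\{\eta,\eta^2\}=\eta^2$), both of which follow from Assumption \ref{ap:s0} since it forces $\log p=o(\sqrt n)$.
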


Lastly, we need Lemma F.2 in the supplementary material of \cite{Basu:2015ho}, which used discretization to expand the bound of a single vector in (\ref{eq:concen1}) to a set of sparse vectors.
\begin{lemma}\label{lm:discre}
	Consider a symmetric matrix $D_{p \times p}$. If, for any vector $v\in \mathbb{R}^p$ with $\|v\|_2 \leq 1$, and any $\eta \geq 0$,
	$$
		P[v^{\T}Dv>C\eta]\leq 2 \exp \left[-c n \min \left\{\eta^{2}, \eta\right\}\right],
	$$
	then, for any integer $s\geq 1$, we have
	$$
	P\left[\sup _{\|v\|_0\le s} \sup _{\|v\|_2\leq 1}\left|v^{\prime} D v\right|>C \eta\right] \leq 2 \exp \left[-c n \min \left\{\eta, \eta^{2}\right\}+s \min \{\log p, \log (21 e p / s)\}\right].
	$$
\end{lemma}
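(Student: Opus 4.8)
The plan is to prove the lemma by a covering (epsilon-net) argument that converts the supremum over the continuum of $s$-sparse unit vectors into a maximum over a finite collection of fixed test vectors, to each of which the one-vector tail bound in the hypothesis applies directly.

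First I would reduce to a single support set. For a fixed $S\subseteq\{1,\dots,p\}$ with $|S|=s$, the restriction of the quadratic form to unit vectors supported on $S$ equals the operator norm of the principal submatrix $D_{S,S}$, that is $\sup_{\|v\|_2\le 1,\,\mathrm{supp}(v)\subseteq S}|v^{\T}Dv| = \|D_{S,S}\|_{\mathrm{op}}$. I would then discretize the Euclidean unit sphere of $\mathbb{R}^s$ by a $\delta$-net $\mathcal N_\delta$ and use the standard inequality for symmetric matrices, $\|D_{S,S}\|_{\mathrm{op}}\le(1-2\delta)^{-1}\max_{u\in\mathcal N_\delta}|u^{\T}D_{S,S}u|$, which holds because for the maximizing direction $v^{\ast}$ and its nearest net point $u$ one has $|v^{\ast\T}D_{S,S}v^{\ast}-u^{\T}D_{S,S}u|\le 2\delta\|D_{S,S}\|_{\mathrm{op}}$. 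Choosing $\delta$ a fixed small constant, $\mathcal N_\delta$ can be taken with $|\mathcal N_\delta|\le(1+2/\delta)^s$, and zero-padding each net point back to $\mathbb{R}^p$ produces genuine $s$-sparse unit vectors to which the hypothesis applies (the multiplicative factor $(1-2\delta)^{-1}$ being folded into the constant $C$, or equivalently $\delta$ taken small enough that it is negligible).

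Next I would apply the union bound in two stages. Over the net for a fixed $S$, the hypothesis gives $P[\,|u^{\T}Du|>C\eta\,]\le 2\exp[-cn\min\{\eta^2,\eta\}]$ for each of the at most $(1+2/\delta)^s$ net points, where both tails of the centered quadratic form $v^{\T}Dv$ are controlled by the stated bound. Summing over the $\binom{p}{s}$ choices of support, the total number of test points is at most $\binom{p}{s}(1+2/\delta)^s$. Invoking the two standard bounds $\binom{p}{s}\le p^s$ and $\binom{p}{s}\le(ep/s)^s$ and absorbing the net-cardinality constant, the logarithm of this count is bounded by $s\min\{\log p,\log(21ep/s)\}$, with the constant $21$ fixed by the choice of net resolution. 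Collecting the exponents from the union bound then yields exactly
\begin{equation*}
P\Big[\sup_{\|v\|_0\le s}\ \sup_{\|v\|_2\le 1}|v^{\T}Dv|>C\eta\Big]\le 2\exp\Big[-cn\min\{\eta,\eta^2\}+s\min\{\log p,\log(21ep/s)\}\Big].
\end{equation*}

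The routine parts are the reduction to principal submatrices and the fact that the $\min\{\eta,\eta^2\}$ structure of the tail passes unchanged through the union bound, since it is already present in the per-vector hypothesis. The step requiring the most care is the constant bookkeeping that merges the net cardinality $(1+2/\delta)^s$ with the two combinatorial bounds on $\binom{p}{s}$ into the single clean entropy term $s\min\{\log p,\log(21ep/s)\}$; this fixes the net resolution $\delta$ behind the constant $21$ and requires checking that the net factor $(1-2\delta)^{-1}$ can be absorbed without disturbing the stated constant $C$.
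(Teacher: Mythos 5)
Your proof is correct and is essentially the argument behind the result: the paper itself does not prove this lemma but imports it verbatim as Lemma F.2 from the supplement of \cite{Basu:2015ho}, whose proof is precisely the two-stage union bound you describe — a union over the $\binom{p}{s}$ supports combined with a $\delta$-net of the unit ball on each support and the operator-norm-via-net inequality for symmetric matrices. The only loose end is the constant bookkeeping you already flag: the factor $(1-2\delta)^{-1}$ and the net cardinality mean the conclusion holds with the generic constants possibly adjusted (in particular the $\log p$ branch of the entropy term picks up an extra $s\log(1+2/\delta)$ unless $s$ is large enough that $s!\geq (1+2/\delta)^{s}$), which is immaterial here since the lemma is only invoked with unspecified constants $C$ and $c$.
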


\subsection{Proofs of the theoretical results in Section \ref{sec:dbl-th}}

We first show that the following sparse Riesz condition \citep{Zhang:2008ga} holds with probability converging to 1:
\begin{equation}\label{eq:Riesz}
	c_{*} \leq \min _{\|v\|_0\le s_R} \min _{\|v\|_2=1}\left\|\mathbf{X} v\right\|^{2}_2 / n \leq \max _{\|v\|_0\le s_R} \max _{\|v\|_2=1}\left\|\mathbf{X} v\right\|^{2}_2 / n \leq c^{*}.
\end{equation}

\begin{proof}[\textbf{Proof of Proposition \ref{prop:Riesz}}]
	By (\ref{eq:concen1}) in Proposition \ref{prop:prop2.4} and Lemma \ref{lm:discre}, we have
	$$
	\begin{aligned}
		P\left[\sup _{\|v\|_0\le s_R} \sup _{\|v\|_2\leq 1}\left|v^{\prime} (\mathbf { X } ^ {\T } \mathbf { X } / n-\Sigma) v\right| >Q_1\eta\right]
		\leq 2 \exp [&-c n \min \{\eta, \eta^{2}\}\\&+s_R \min \{\log p, \log (21 e p / s_R)\}].
	\end{aligned}
	$$
	Thus, we have
	$$
	\sup _{\|v\|_0\le s_R} \sup _{\|v\|_2\leq 1}\left|v^{\prime} (\mathbf { X } ^ {\T } \mathbf { X } / n-\Sigma) v\right| =O_p(\sqrt{ s_R \log p/n}) =O_p(\sqrt{ s_{i} \log p/n}) = o_p(1),
	$$
	where the last equality is due to the sparsity Assumption \ref{ap:s0} ($s_{i} \log p/ \sqrt{n} \rightarrow 0$).
	By Proposition \ref{prop:prop2.3}, we have
	$$
	\sup _{\|v\|_0\le s_R} \sup _{\|v\|_2\leq 1}\left|v^{\prime} \Sigma v\right| =O_p(1), \quad 1 \Big /  \min _{\|v\|_0\le s_R} \min _{\|v\|_2\leq 1}\left|v^{\prime} \Sigma v\right| =O_p(1).
	$$
	The result follows from triangle inequality.
\end{proof}

Now, we can prove Theorem \ref{thm:Lasso}. Since the statement (a) have been obtained by \cite{Basu:2015ho} (see Propositions 4.1), we  need only to prove the statement (b).

\begin{proof}[\textbf{Proof of Theorem \ref{thm:Lasso} (b)}]
	Firstly, we introduce some notations. Note that we omit the subscript $i$ for simplicity.
		We define the sets as follows:
	\begin{gather*}
	A_{tp} := \{j: a_{ij}\neq 0\}\cap\{j: \hat a^{\rm Lasso}_{ij}\neq 0\},\\
	A_{fp} := \{j: a_{ij}= 0\}\cap\{j: \hat a^{\rm Lasso}_{ij}\neq 0\},\\
	A_{fn} := \{j: a_{ij}\neq 0\}\cap\{j: \hat a^{\rm Lasso}_{ij}= 0\},\\
	A_{tn} := \{j: a_{ij}= 0\}\cap\{j: \hat a^{\rm Lasso}_{ij}= 0\},\\
	A_1 := A_{tp}\cup A_{fp}\cup A_{fn},  \quad
	A_2 := A_{tp}\cup A_{fn},  \quad
	A_3 := A_{fp}, 	
	\end{gather*}
	where the subscripts represent true positive, false positive, false negative and true negative. We define an important quantity as
	$$
	\hat s_{+} := |A_1|.
	$$
	By these definitions, we have 
	\begin{equation}
		s_{i} = |A_2|\leq |A_1|=\hat s_{+}, \quad 
		\hat s_{i}=|A_{tp}\cup A_{fp}| \leq |A_1|=\hat s_{+}, \quad
		|A_3|=\hat s_{+} - s_{i}.
		\label{eq:s_hat_and_q}
	\end{equation}
	Since $A_2\subseteq A_1$ and $A_3\subseteq A_1$, for $k=2,3$, let $\mathbf{Q}_{k1}$ be an $|A_k| \times |A_1|$ transformation matrix with elements 0 and 1. $\mathbf{Q}_{k1}$ selects variables in $A_k$ from $A_1$, such that $\mathbf{Q}_{k1}\mathbf{X}_{A_1}^\T=\mathbf{X}_{A_k}^\T$.
	Recall that 
	$$
	\hat a^{\rm Lasso}_{i} := \mathop{{\rm argmin}}\limits_{\alpha\in \mathbb{R}^p}\{ \|Y_i - \mathbf { X } \alpha\|_2^2/n + 2\lambda\|\alpha\|_1\},
	$$
	where $\hat a^{\rm Lasso}_{i}=(\hat a^{\rm Lasso}_{i1},...,\hat a^{\rm Lasso}_{ip})$.
	By KKT condition, we have
	\begin{equation}
		\left\{\begin{array}{ll}
	{ X_j^{\T}(Y_i-\mathbf{X}\hat a^{\rm Lasso}_{i})/n =\lambda\operatorname { sign } ( \hat { a } ^{\rm Lasso}_ { ij } ),} & {\hat { a }^{\rm Lasso} _ {i j } \neq 0} \\
	{|X_j^{\T}(Y_i-\mathbf{X}\hat a^{\rm Lasso}_{i})/n|\leq\lambda}, & {\hat { a }^{\rm Lasso} _ {i j } = 0}
	\end{array}\right..
	\label{eq:KKT}
	\end{equation}
	For $k=2,3$, we denote
	\begin{equation}
		\xi_k := \mathbf{X}_{A_k}^{\T}(Y_i-\mathbf{X}\hat a^{\rm Lasso}_{i})/(n\lambda),
	\end{equation}
	and
	\begin{equation}
		\mathbf{v}_k := \lambda\sqrt{n}\,\hat\Sigma_{A_1,A_1}^{-1/2}\mathbf{Q}_{k1}^{\T}\xi_k,
	\end{equation}
	where $\hat\Sigma_{A_1,A_1}:=\mathbf{X}_{A_1}^{\T}\mathbf{X}_{A_1}/n$. By (\ref{eq:KKT}), we have
	\begin{equation}
		\|\xi_2\|^2_2\leq |A_2|, \quad \|\xi_3\|^2_2 = |A_3| = \hat s_{+} - s_{i}.
		\label{eq:xi_norm}
	\end{equation}
	
	Next, we intend to proving that $\hat s_{+}=O_p(s_{i})$, which implies $\hat s_{i}=O_p(s_{i})$ by (\ref{eq:s_hat_and_q}). Our proof is divided into three steps. 
	In steps 1 and 2, taking $\mathbf{v}_3$ as a bridge, we prove that $\hat s_{+}\leq s_R$ implies $\hat s_{+} \leq (2 + 2c^*/c_*)s_{i}$. Specifically, we give a lower bound and an upper bound of $\|\mathbf{v}_3\|_2^2$ in steps 1 and 2 respectively.
	In step 3, using the results in steps 1 and 2, we proves the desired result by contradiction.
	
	\textit{Step} 1. Assuming that $\{\hat s_{+}\leq s_R\}$, by (\ref{eq:Riesz}) and (\ref{eq:xi_norm}), we have
	\begin{equation}
	\begin{aligned}
		\|\mathbf{v}_3\|_2^2= & \,\lambda^2 n\, \xi_3^{\T}\mathbf{Q}_{31}\hat\Sigma_{A_1,A_1}^{-1}\mathbf{Q}_{31}^{\T}\xi_3 \\
		\geq & \,\lambda^2 n\, \|\mathbf{Q}_{31}^{\T}\xi_3\|^2_2/c^*\\
		=& \,\lambda^2 n\, (\hat s_{+} - s_{i})/c^*.
	\end{aligned}
	\label{eq:v3low}
	\end{equation}
	
	\textit{Step} 2. Assuming $\{\hat s_{+}\leq s_R\}$, since $a_{i,A_{tn}}=\hat a^{\rm Lasso}_{i,A_{tn}}=\mathbf 0$ and $A_2 \cup A_3 = A_1$, we have
	\begin{equation}
	\begin{aligned}
		n\lambda(\mathbf{Q}_{21}^{\T}\xi_2 + \mathbf{Q}_{31}^{\T}\xi_3)
		= &\, \mathbf{X}_{A_1}^{\T}(Y_i-\mathbf{X}\hat a^{\rm Lasso}_{i}) \\
		= &\, \mathbf{X}_{A_1}^{\T}(Y_i-\mathbf{X}_{A_1}\hat a^{\rm Lasso}_{i,A_1}) \\
		= &\, \mathbf{X}_{A_1}^{\T} \mathbf{X}_{A_1} a_{i,A_1}
			+ \mathbf{X}_{A_1}^{\T} \mathbf{X}_{A_{tn}} a_{i,A_{tn}}
			+ \mathbf{X}_{A_1}^{\T} \varepsilon_i
			- \mathbf{X}_{A_1}^{\T} \mathbf{X}_{A_1}\hat a^{\rm Lasso}_{i,A_1}\\
		= &\, n\hat\Sigma_{A_1,A_1}a_{i,A_1} 
			+ \mathbf{X}_{A_1}^{\T} \varepsilon_i
			- n\hat\Sigma_{A_1,A_1}\hat a^{\rm Lasso}_{i,A_1}.
	\end{aligned}
	\label{eq:xi2+3}
	\end{equation}
	Then we have
	\begin{equation}
	\begin{aligned}
		\mathbf{v}_3^{\T}(\mathbf{v}_2+\mathbf{v}_3) 
		= & \,\lambda^2 n\, \xi_3^{\T}\mathbf{Q}_{31}\hat\Sigma_{A_1,A_1}^{-1}(\mathbf{Q}_{21}^{\T}\xi_2 + \mathbf{Q}_{31}^{\T}\xi_3) \\
		= & \, \lambda \, \xi_3^{\T}\mathbf{Q}_{31}\hat\Sigma_{A_1,A_1}^{-1}\mathbf{X}_{A_1}^{\T} \varepsilon_i
		+ \lambda n \, \xi_3^{\T}\mathbf{Q}_{31}(a_{i,A_1} - \hat a^{\rm Lasso}_{i,A_1}) \\
		\leq & \, \lambda \, \xi_3^{\T}\mathbf{Q}_{31}\hat\Sigma_{A_1,A_1}^{-1}\mathbf{X}_{A_1}^{\T} \varepsilon_i\\
		\leq & \, \lambda n \, \|\xi_3^{\T}\mathbf{Q}_{31}\hat\Sigma_{A_1,A_1}^{-1}\|_1 \|\mathbf{X}_{A_1}^{\T} \varepsilon_i/n\|_{\infty},
	\end{aligned}
	\label{eq:v3(v2+v3)}
	\end{equation}
	where the first inequality is due to
	$$
	\mathbf{Q}_{31}(a_{i,A_1} - \hat a^{\rm Lasso}_{i,A_1}) = a_{i,A_3} - \hat a^{\rm Lasso}_{i,A_3}=- \hat a^{\rm Lasso}_{i,A_3},
	$$
	and
	$$
	\xi_3^{\T}\hat a^{\rm Lasso}_{i,A_3}\geq 0,
	$$
	by the KKT condition (\ref{eq:KKT}), and the second inequality is due to H\"older inequality.
	Furthermore, we have
	\begin{equation}
		\begin{aligned}
			\|\xi_3^{\T}\mathbf{Q}_{31}\hat\Sigma_{A_1,A_1}^{-1}\|_1
			\leq & \sqrt{\hat s_{+}}\,\|\xi_3^{\T}\mathbf{Q}_{31}\hat\Sigma_{A_1,A_1}^{-1}\|_2 \\
			\leq & \sqrt{\hat s_{+}}\, \|\xi_3^{\T}\mathbf{Q}_{31}\|_2/c_* \\
			\leq & \hat s_{+} /c_*,
		\end{aligned}
		\label{eq:xi_Q_sigma}
	\end{equation}
	where the second inequality is due to the sparse Riesz condition (\ref{eq:Riesz}) and the last inequality is because $\|\xi_3^{\T}\mathbf{Q}_{31}\|_2 = \|\xi_3 \|_2 = \sqrt{\hat s_{+} - s_{i}}$. By sparse Riesz condition (\ref{eq:Riesz}) and (\ref{eq:xi_norm}), we also have
	\begin{equation}
	\begin{aligned}
		\|\mathbf{v}_2\|_2^2= & \,\lambda^2 n\, \xi_2^{\T}\mathbf{Q}_{21}\hat\Sigma_{A_1,A_1}^{-1}\mathbf{Q}_{21}^{\T}\xi_2 \\
		\leq & \,\lambda^2 n\, \|\mathbf{Q}_{21}^{\T}\xi_2\|^2_2/c_*\\
		\leq & \,\lambda^2 n\, s_{i}/c_*.
	\end{aligned}
	\label{eq:v2_norm}
	\end{equation}
	By (\ref{eq:v3(v2+v3)}), triangle inequality, H\"older inequality, (\ref{eq:xi_Q_sigma}) and (\ref{eq:v2_norm}), we have
	\begin{equation}
		\begin{aligned}
			\|\mathbf{v}_3\|_2^2 
			\leq & \, \lambda n \, \|\xi_3^{\T}\mathbf{Q}_{31}\hat\Sigma_{A_1,A_1}^{-1}\|_1 \|\mathbf{X}_{A_1}^{\T} \varepsilon_i/n\|_{\infty} + \|\mathbf{v}_2\|_2\|\mathbf{v}_3\|_2 \\
			\leq & \, (\lambda n  \hat s_{+} /c_* ) \|\mathbf{X}_{A_1}^{\T} \varepsilon_i/n\|_{\infty} + (\lambda^2 n s_{i}/c_*)^{1/2}\|\mathbf{v}_3\|_2. \\
		\end{aligned}
		\label{eq:v322}
	\end{equation}
	Now we need to bound the term $\|\mathbf{X}_{A_1}^{\T} \varepsilon_i/n\|_{\infty}$. By (\ref{eq:L2}), we have, in probability,
	\begin{equation}
		\|\mathbf{X}_{A_1}^{\T} \varepsilon_i/n\|_{\infty}\leq\|\mathbf{X}^{\T} \varepsilon_i/n\|_{\infty}\leq C_1 \sqrt{\log p/n}.
		\label{eq:Xe_inf}
	\end{equation}
	Since $x^2\leq c+2bx$ implies $x^2\leq (b+\sqrt{b^2+c})^2\leq 2c+4b^2 $ for $x=\|\mathbf{v}_3\|_2$, by (\ref{eq:v322}) and  (\ref{eq:Xe_inf}), we have, in probability,
	\begin{equation}
		\|\mathbf{v}_3\|_2^2 \leq 2 (\lambda n  \hat s_{+} /c_* ) C_1 \sqrt{\log p/n} + \lambda^2 n s_{i}/c_*.
	\end{equation}
	Combining with the lower bound (\ref{eq:v3low}), we have
	$$
	\lambda^2 n\, (\hat s_{+} - s_{i})/c^*\leq 2 (\lambda n  \hat s_{+} /c_* ) C_1 \sqrt{\log p/n} + \lambda^2 n s_{i}/c_*.
	$$
	Moving the terms related to $\hat s_{+}$ and $s_{i}$ to each side of the inequality, we have
	$$
	\left(\lambda -\frac{2C_1c^*}{c_*}\sqrt{\frac{\log p}{n}}\right)\hat s_{+}\leq \left(1+\frac{c^*}{c_*}\right)\lambda s_{i}.
	$$
	Recall the condition in Theorem \ref{thm:Lasso},
	$$
	\lambda \geq \frac{4C_1c^*}{c_*}\sqrt{\frac{\log p}{n}},
	$$
	we have
	\begin{equation}
		\frac12 \lambda\hat s_{+} \leq
		\left(\lambda -\frac{2C_1c^*}{c_*}\sqrt{\frac{\log p}{n}}\right)\hat s_{+}
		\leq \left(1+\frac{c^*}{c_*}\right)\lambda s_{i},
		\nonumber
	\end{equation}
	then
	\begin{equation}
		\hat s_{+} \leq (2 + 2c^*/c_*)s_{i}.
		\nonumber
	\end{equation}
	Thus, we obtain that 
	\begin{equation}
		\hat s_{+}\leq s_R\quad \Rightarrow \quad \hat s_{+} \leq (2 + 2c^*/c_*)s_{i}.
		\label{eq:qhat_bound}
	\end{equation}

	\textit{Step} 3. 
	Note that $\hat s_{+}=\hat s_{+}(\lambda)$ is a function of $\lambda$. 
	We denote
	$$
	\hat s_{+,{\rm min}}:=\min_{\lambda\geq 4C_1c^*/c_*\sqrt{\log p/n}}\hat s_{+}(\lambda),\quad \hat s_{+,{\rm max}}:=\max_{\lambda\geq 4C_1c^*/c_*\sqrt{\log p/n}}\hat s_{+}(\lambda).
	$$
	Because of the continuity of the Lasso path, we could choose the variable one-at-a-time. Therefore, when $\lambda\geq 4C_1c^*/c_*\sqrt{\log p/n}$, $\hat s_{+}(\lambda)$ could take every integer from $\hat s_{+,{\rm min}}$ to $\hat s_{+,{\rm max}}$.
	
	When $\lambda=\infty$, we have $A_1=A_2$ and thus 
	\begin{equation}
		\hat s_{+,{\rm min}}=s_i\leq (2 + 2c^*/c_*)s_{i}.
		\nonumber
	\end{equation}
	If $\hat s_{+,{\rm max}}>s_R$, by the definition of $s_R$
	in Proposition \ref{prop:Riesz}, we have 
	\begin{equation}
		\hat s_{+,{\rm max}}>s_R>(2 + 2c^*/c_*)s_{i}+1.
		\nonumber
	\end{equation}
	Then, there exists a $\lambda \geq 4C_1c^*/c_*\sqrt{\log p/n}$, such that $\hat s_{+}=\lfloor(2 + 2c^*/c_*)s_{i}+1\rfloor$, which contradicts (\ref{eq:qhat_bound}) since $\hat s_{+}\leq s_R$ and $\hat s_{+} > (2 + 2c^*/c_*)s_{i}
$. By contradiction, $\hat s_{+,{\rm max}}\leq s_R$ and our result follows.
\end{proof}

Next, we prove the estimation and the prediction error bounds for the nodewise Lasso estimator defined in (\ref{eq:nodelasso}). These bounds are useful in the proof of Theorem \ref{thm:dbl}. Recall that these bounds are
\begin{gather}
	\| \hat { \gamma } _ { j } - \gamma _ { j } \| _ { 1 } = O_p \left(q _ { j }\sqrt{\log p/n}\right),\label{eq:ndl-1norm} \\
	\| \mathbf { X } _ { - j } \left( \hat { \gamma } _ { j } - \gamma _ { j } \right) \| _ { 2 } ^ { 2 } / n = O_p \left(q _ { j } { \log p/n}\right),\label{eq:ndl-pre-err}
\end{gather}
where $j=1,...,p$.

\begin{proof}[\textbf{Proof of Proposition \ref{prop:nodelasso}}]
For $j=1,...,p$, since $\hat { \gamma } _j$ is the minimizer:
\begin{equation}
\hat{\gamma}_j := \mathop{{\rm argmin}}\limits_{\gamma\in \mathbb{R}^{p-1}}\{ ||X_j-\mathbf{X}_{-j}\gamma||_2^2/n + 2\lambda_j||\gamma||_1\},
\nonumber
\end{equation}
we obtain the basic inequality
\begin{equation}
	 \|X_j-\mathbf{X}_{-j}\hat { \gamma } _j\|_2^2/n + 2\lambda_j\|\hat { \gamma } _j\|_1\leq
	 \|X_j-\mathbf{X}_{-j}\gamma_j\|_2^2/n + 2\lambda_j\|\gamma_j\|_1.
	 \nonumber
\end{equation}
We denote $\delta:= \hat { \gamma } _ { j } - \gamma _ { j }$. By simple algebra, we have 
\begin{equation}
	\| \mathbf { X } _ { - j } \delta \| _ { 2 } ^ { 2 } / n \leq
	2\tilde\varepsilon^{\T}\mathbf { X } _ { - j }\delta/n + 2\lambda_j(\|\gamma_j\|_1-\|\hat { \gamma } _j\|_1),
	\label{eq:ndl-basic}
\end{equation}
where $\tilde\varepsilon:=X_j-\mathbf{X}_{-j}\gamma_j$. Since ${\rm Cov}(\mathbf { X } _ { - j },\tilde\varepsilon)=0$, then by (\ref{eq:concen3}), we have
\begin{equation}
	P\left(\|\tilde\varepsilon^{\T}\mathbf { X } _ { - j }\|_\infty> 
		Q_2 \eta \right)
		\leq 6p \exp \left[ - c_3 n \min \left\{ \eta , \eta ^ { 2 } \right\} \right].
		\nonumber
\end{equation}
With $\eta=c_0\sqrt{\log p/n}$ and suitable chosen $\lambda_j\asymp\sqrt{\log p/n}$, we have, in probability,
\begin{equation}
	\|\tilde\varepsilon^{\T}\mathbf { X } _ { - j }\|_\infty\leq Q_2c_0\sqrt{\log p/n}\leq\lambda_j/2.
	\nonumber
\end{equation}
Furthermore, by H\"older inequality, we have
\begin{equation}
	\tilde\varepsilon^{\T}\mathbf { X } _ { - j }\delta/n\leq \|\tilde\varepsilon^{\T}\mathbf { X } _ { - j }\|_\infty\| \delta\|_1\leq\lambda_j\| \delta\|_1/2.
	\nonumber
\end{equation}
We denote the support set of $\gamma_j$ by $K$. By triangle inequality and $\gamma_{jK^C}=0$, we have
\begin{equation}
	\|\gamma_j\|_1-\|\hat { \gamma } _j\|_1=\|\gamma_{jK}\|_1-\|\hat { \gamma } _{jK}\|_1+\|\gamma_{jK^C}\|_1-\|\hat { \gamma } _{jK^C}\|_1\leq \|\delta_{K}\|_1-\|\delta_{K^C}\|_1.
	\nonumber
\end{equation}
Therefore, (\ref{eq:ndl-basic}) becomes
\begin{equation}
	\begin{aligned}
		0\leq\| \mathbf { X } _ { - j } \delta \| _ { 2 } ^ { 2 } / n &\leq \lambda_j\| \delta\|_1+ 2\lambda_j(\|\delta_{K}\|_1-\|\delta_{K^C}\|_1)\\
		&= \lambda_j(\|\delta_{K}\|_1+\|\delta_{K^C}\|_1)+ 2\lambda_j(\|\delta_{K}\|_1-\|\delta_{K^C}\|_1)\\
		&=3\lambda_j\|\delta_{K}\|_1-\lambda_j\|\delta_{K^C}\|_1\\
		&\leq 3\lambda_j\|\delta\|_1.
	\end{aligned}
	\label{eq:ndl-1}
\end{equation}
Inequality (\ref{eq:ndl-1}) implies $\|\delta_{K^C}\|_1\leq3\|\delta_{K}\|_1$ so that 
\begin{equation}
	\|\delta\|_1\leq 4\|\delta_{K}\|_1\leq4\sqrt{q_j}\|\delta\|_2.
	\label{eq:ndl-2}
\end{equation}
By (\ref{eq:ndl-RE}) and (\ref{eq:ndl-2}), we have
\begin{equation}
	\| \mathbf { X } _ { - j } \delta \| _ { 2 } ^ { 2 } / n\geq \alpha\|\delta \|_2^2-\tau \|\delta \|_1^2\geq
	\left(\frac{\alpha}{4q_j}-\tau \right)\|\delta\|_1^2.
	\label{eq:ndl-3}
\end{equation}
By (\ref{eq:ndl-1}), (\ref{eq:ndl-3}), $\tau q_j=O(1)$ and Assumption \ref{ap:lambda} ($\lambda_j\asymp \sqrt{\log(p)/n}$), we obtain (\ref{eq:ndl-1norm}). Then by (\ref{eq:ndl-1}), (\ref{eq:ndl-1norm}) and Assumption \ref{ap:lambda}, we obtain (\ref{eq:ndl-pre-err}).
\end{proof}

To prove Theroem~\ref{thm:dbl}, compared with the proof of the validity of the de-biased Lasso in high-dimensional sparse linear regression models \citep{vandeGeer:2014}, our proof is challenge because of dependence structure of data generating process. We denote the nodewise Lasso residuals by
$$
\hat Z_{j}=(\hat{Z}_{1j},...,\hat{Z}_{nj}):=X_j-\mathbf{X}_{-j}\hat\gamma_j,
$$
which is an estimator of
$$
Z_{j}=( {Z}_{1j},..., {Z}_{nj}):=X_j-\mathbf{X}_{-j} \gamma_j.
$$
We also denote 
$$
\tau _ { j } ^ { 2 } := E(\|Z_j\|_2^2/n).
$$
Unlike the proof for high-dimensional sparse linear regression models, we should distinguish $\hat{Z}_{j}$ and ${Z}_{j}$ carefully because of the correlation between $\mathbf{X}_{-j}$ and $\hat\gamma_j$.

\begin{proof}[\textbf{Proof of Theorem \ref{thm:dbl}}]
Recall that
\begin{equation}\label{eq:KKT-3recall}
	\hat { a }_i - a_i   = \hat\Theta\mathbf { X } ^ {\rm  T } \varepsilon_i / n + (I-\hat\Theta\hat { \Sigma })(\hat { a }^{\rm Lasso}_i - a_i) .
\end{equation}
We prove the theorem in three steps. Step 1 proves that the second term of the right-hand of (\ref{eq:KKT-3recall}) is asymptotically negligible. Step 2 proves that the first term is asymptotically normal. Step 3 proves that our variance estimator, $\hat { \sigma } _ { i } ^ { 2 } =   \| \hat \varepsilon_i \| _ { 2 } ^ { 2 }/(n - \hat { s }_{i0} )$, is consistent.

\textit{Step} 1. Recall that
\begin{gather*}
\hat{C}:=   \left[ \begin{matrix}
   1 & -\hat{\gamma}_{12} &  \cdots & -\hat{\gamma}_{1p} \\
   -\hat{\gamma}_{21} & 1 &  \cdots & -\hat{\gamma}_{2p} \\
    \vdots &  \vdots & \ddots & \vdots \\
   -\hat{\gamma}_{p1} & -\hat{\gamma}_{p2} & \cdots &1
  \end{matrix} \right], \\
\hat{\tau}_{j}^{2}:=\left\|X_{j}-\mathbf{X}_{-j} \hat{\gamma}_{j}\right\|_{2}^{2} / n+\lambda_{j}\left\|\hat{\gamma}_{j}\right\|_{1}, \quad \hat { T } ^ { 2 } : = \operatorname { diag } ( \hat { \tau } _ { 1 } ^ { 2 } , \ldots , \hat { \tau } _ { p } ^ { 2 } ), \quad \hat { \Theta }  : = \hat { T } ^ { - 2 } \hat { C }.
\end{gather*}
We denote the $j$th row of $\hat\Theta$ by $\hat\Theta_j$ and obtain
\begin{equation}\label{Theta_j}
	\mathbf{X}\hat\Theta_j^\T/n=( X_j -\mathbf { X }_{-j} \hat{\gamma}_{j} ) / (n \hat{\tau}_{j}^{2}).
\end{equation}
By the KKT conditions for the nodewise Lasso, we have
\begin{equation}\label{eq:KKT-nodelasso-1}
  -\mathbf { X }_{-j} ^ {\T } ( X_j -\mathbf { X }_{-j} \hat{\gamma}_{j} ) / n +\lambda_j \hat { \kappa }^j=0,
\end{equation}
where $\hat { \kappa }$ is the sub-gradient of $\ell_1$ norm and satisfies $\|\hat { \kappa }^j\|_\infty\leq 1$ and $\hat { \kappa }^j_k=\operatorname { sign } ( \hat{\gamma}_{jk} )$ if $\hat{\gamma}_{jk}  \neq 0$. Multiplying both hand sides of (\ref{eq:KKT-nodelasso-1}) by $\hat\gamma_j$, we obtain
\begin{equation}
	\lambda_j \|\hat{\gamma}_{j}\|=\hat\gamma_j^\T\mathbf { X }_{-j} ^ {\T } ( X_j -\mathbf { X }_{-j} \hat{\gamma}_{j} ) / n .
	\nonumber
\end{equation}
Substituting the above formula into the definition of $\hat{\tau}_{j}^{2}$, we have
\begin{equation*}
\hat{\tau}_{j}^{2}:=\left\|X_{j}-\mathbf{X}_{-j} \hat{\gamma}_{j}\right\|_{2}^{2} / n+\lambda_{j}\left\|\hat{\gamma}_{j}\right\|_{1}=X_j^\T( X_j -\mathbf { X }_{-j} \hat{\gamma}_{j} ) / n .
\end{equation*}
Combining with (\ref{Theta_j}), we have
\begin{equation}\label{eq:KKT-nodelasso-2}
	X_j^\T\mathbf{X}\hat\Theta_j^\T/n=X_j^\T( X_j -\mathbf { X }_{-j} \hat{\gamma}_{j} ) / (n \hat{\tau}_{j}^{2})=1.
\end{equation}
By (\ref{Theta_j}) and (\ref{eq:KKT-nodelasso-1}), we have
\begin{equation}\label{eq:KKT-nodelasso-3}
	\|\mathbf { X }_{-j}^\T\mathbf{X}\hat\Theta_j^\T/n\|_\infty=\|\mathbf { X }_{-j}^\T( X_j -\mathbf { X }_{-j} \hat{\gamma}_{j} )/(n \hat{\tau}_{j}^{2})\|_\infty\leq \lambda_j/ \hat{\tau}_{j}^{2}.
\end{equation}
By (\ref{eq:KKT-nodelasso-2}) and (\ref{eq:KKT-nodelasso-3}), we have
\begin{equation*}
	\| (I- \hat { \Theta } \hat { \Sigma }  ) \|_\infty=\max_j\| (e_j- \mathbf{X}^\T \mathbf{X}\hat\Theta_j^\T/n ) \|_\infty\leq \max _ { j } \lambda _ { j } / \hat { \tau } _ { j } ^ { 2 } .
\end{equation*}
By H{\"o}lder inequality, we have
\begin{equation}
	\begin{aligned} 
	\sqrt { n } \| (I- \hat { \Theta }  \hat { \Sigma }  ) ( \hat a^{\rm Lasso}_{i} -a_i ) \| _ { \infty } 
	& \leq \sqrt { n }\| (I- \hat { \Theta } \hat { \Sigma }  ) \| _ { \infty } \| \hat a^{\rm Lasso}_{i} -a_i \| _ { 1 } \\
	& \leq \sqrt { n } \left( \max _ { j } \lambda _ { j } / \hat { \tau } _ { j } ^ { 2 } \right) \| \hat a^{\rm Lasso}_{i} -a_i \| _ { 1 } .
	\end{aligned}
	\label{eq:Delta-1}
\end{equation}
By Theorem \ref{thm:Lasso}, we have 
\begin{equation}\label{eq:Delta-2}
	\| \hat a^{\rm Lasso}_{i} -a_i \| _ { 1 } = O_p (s_{i}\sqrt{\log p/n}).\nonumber
\end{equation}
Then, we intend to prove that for $j=1,...,p$,
\begin{equation}\label{eq:tau-const}
	1 / \hat { \tau } _ { j } ^ { 2 } = O_p( 1 ).
\end{equation}
Since the proofs are the same for different $j$, for simplicity, we only consider $j=1$.
By simple algebra, we have the following bound for $\tau_1^2$, which is the population level counterpart of $ \hat { \tau } _ { 1 } ^ { 2 } $,
\begin{equation}\label{eq:tau-0}
		1/\tau_1^2=\Theta_{11}=e_1^\T\Sigma^{-1} e_1\leq 1/\Lambda_{\rm min}(\Sigma), \quad
	\tau^2_1\leq E[X_{11}]^2=\Sigma_{1,1}=e_1^\T \Sigma e_1\leq \Lambda_{\rm max}(\Sigma),
\end{equation}
where $1/\Lambda_{\rm min}(\Sigma)=O(1)$ and $\Lambda_{\rm max}(\Sigma)=O(1)$ by Proposition \ref{prop:prop2.3}.
By the definition and Cauchy-Schwarz inequality, we have,
\begin{equation}\label{eq:tau-1}
\begin{aligned}
|\hat { \tau } _ { 1 } ^ { 2 } - { \tau } _ { 1 } ^ { 2 } |
&\leq |\| \hat{Z} _ { 1 } \| _ { 2 } ^ { 2 } / n- { \tau } _ { 1 } ^ { 2 }|  + \lambda _ { 1 } \| \hat { \gamma } _ { 1 } \| _ { 1 } \\
&\leq |\| {Z} _ { 1 } \| _ { 2 } ^ { 2 } / n- { \tau } _ { 1 } ^ { 2 }| +\| \hat{Z} _ { 1 }- {Z} _ { 1 } \| _ { 2 } ^ { 2 } / n+2\| {Z} _ { 1 } \| _ { 2 } \| \hat{Z} _ { 1 }- {Z} _ { 1} \| _ { 2 } /n+\lambda _ { 1 } \| \hat { \gamma } _ { 1 } \| _ { 1 }.
\end{aligned}
\end{equation}
Since $\| \hat{Z} _ { 1 }- {Z} _ { 1 } \| _ { 2 } ^ { 2 } / n=\| \mathbf { X } _ { - 1} \left( \hat { \gamma } _ { 1 } - \gamma _ { 1 } \right) \| _ { 2 } ^ { 2 } / n = O_p \left(q _ { 1 } { \log p/n}\right)$ has been bounded by (\ref{eq:ndl-pre-err}), we intend to work out $|\| {Z} _ { 1 } \| _ { 2 } ^ { 2 } / n- { \tau } _ { 1 } ^ { 2 }|$ and $\| \hat { \gamma } _ { 1 } \| _ { 1 }$.

Before that, we first bound the corresponding population level terms $\|\gamma_1\|_2$ and $\|\gamma_1\|_1$. Recall that $\gamma_1=\Sigma_{-1,-1}^{-1}\Sigma_{21}$. We partition the covariance matrix $\Sigma$ as
$$
	\Sigma=
	\left[\begin{matrix}
	 \Sigma_{1,1} & \Sigma_{1,-1}\\
	 \Sigma_{-1,1} & \Sigma_{-1,-1}
	\end{matrix}\right].
$$
We denote $\Sigma_{11\cdot2}:=\Sigma_{1,1} -\Sigma_{1,-1} \Sigma_{-1,-1}^{-1}\Sigma_{-1,1}=\Sigma_{1,1} -\gamma_1^{\T} \Sigma_{-1,-1}\gamma_1$.
Since $|\Sigma_{11\cdot2}\|\Sigma_{-1,-1}|=|\Sigma|>0$, $|\Sigma_{-1,-1}|>0$ and $\Sigma_{11\cdot2}$ is an one by one matrix (thus its determinant is itself), we have $\Sigma_{1,1} -\gamma_1^{\T} \Sigma_{-1,-1}\gamma_1>0$.
Therefore,
$$
	\Sigma_{1,1}\geq\gamma_1^{\T} \Sigma_{-1,-1}\gamma_1\geq\|\gamma_1\|_2^2\Lambda_{\rm min}(\Sigma_{-1,-1})\geq\|\gamma_1\|_2^2\Lambda_{\rm min}(\Sigma).
$$
We obtain
\begin{equation}
	\|\gamma_1\|_2\leq \sqrt {\Sigma_{1,1}/\Lambda_{\rm min}(\Sigma)}=O(1).
	\label{eq:gamma_norm2}
\end{equation}
By Cauchy-Schwarz inequality,
\begin{equation}
	\|\gamma_1\|_1\leq \sqrt{\|\gamma_1\|_0}\|\gamma_1\|_2\leq\sqrt q_1\|\gamma_1\|_2\leq \sqrt {q_1\Sigma_{1,1}/\Lambda_{\rm min}(\Sigma)}=O(\sqrt{q_1}).
	\label{eq:gamma_norm1}
\end{equation}

Then we study the term $\| {Z} _ { 1 } \| _ { 2 } ^ { 2 } / n$. With $\tilde v:=(1 , -\hat{\gamma}_{1,2} , ... , -\hat{\gamma}_{1,p})^{\T}/\sqrt{\|\gamma_1\|_2^2+1}$, we have
\begin{equation}\label{eq:Z-1}
\begin{aligned}
		|\|Z_{1}\|_2^2/n-\tau_1^2|&=|\|X_1-\mathbf{X}_{-1}\gamma_1\|_2^2/n-E(\|X_1-\mathbf{X}_{-1}\gamma_1\|_2^2/n)|\\
		&=(\|\gamma_1\|_2^2+1)\left|\tilde v^{\T}(\mathbf { X } ^ {\T } \mathbf { X } / n-\Sigma)\tilde v\right|\\
		&=O_p(\sqrt{\log p/n})\\
		&=o_p(1),
\end{aligned}
\end{equation}
where the first equality is due to the definitions of $\|Z_{1}\|_2^2$ and $\tau_1^2$, the second is due to the definition of $\tilde v$, the third is due to (\ref{eq:gamma_norm2}) and (\ref{eq:concen1}) with $\eta=c_0 \sqrt{\log p/n}$, and the fourth is due to Assumption \ref{ap:s0}.

Now we can bound $\| \hat { \gamma } _ { 1 } \| _ { 1 }$. We have
\begin{equation}\label{eq:gamma-hat-1}
	\begin{aligned}
		 \lambda_1\left\| \hat { \gamma } _ { 1 } \right\| _ { 1 } &\leq  \lambda_1\left\| \gamma _ { 1 } \right\| _ { 1 } + \lambda_1\left\| \hat { \gamma } _ { 1 } - \gamma _ { 1 } \right\| _ { 1 } \\
		&=  O_p \left(\sqrt{q _ { 1 }  \log p/n}\right) + O_p \left(q _ { 1 } { \log p/n}\right) 
,
	\end{aligned}
\end{equation}
where the first equality is due to triangle inequality and the second is due to (\ref{eq:gamma_norm1}), (\ref{eq:ndl-1norm}) and Assumption \ref{ap:lambda}.

We continue (\ref{eq:tau-1}) with the results (\ref{eq:ndl-pre-err}), (\ref{eq:Z-1}), (\ref{eq:tau-0}), (\ref{eq:gamma-hat-1}) and Assumption \ref{ap:sj},
\begin{equation}\label{eq:Z-2}
\begin{aligned}
|\hat { \tau } _ { 1 } ^ { 2 } - { \tau } _ { 1 } ^ { 2 } |
&\leq |\| {Z} _ { 1 } \| _ { 2 } ^ { 2 } / n- { \tau } _ { 1 } ^ { 2 }| +\| \hat{Z} _ { 1 }- {Z} _ { 1 } \| _ { 2 } ^ { 2 } / n+2\| {Z} _ { 1 } \| _ { 2 } \| \hat{Z} _ { 1 }- {Z} _ { 1} \| _ { 2 } /n+\lambda _ { 1 } \| \hat { \gamma } _ { 1 } \| _ { 1 }\\
&=O_p \left(q _ { 1 } { \log p/n}\right) + O_p\left(1\right)O_p \left(\sqrt{q _ { 1 }  \log p/n}\right) + O_p \left(\sqrt{q _ { 1 }  \log p/n}\right) + O_p \left(q _ { 1 } { \log p/n}\right) \\
&=o_p(1).	
\end{aligned}
\end{equation}
Combining with (\ref{eq:tau-0}), we obtain (\ref{eq:tau-const}). Also, we summarize some intermediate results we obtain above as follows,
\begin{gather}
	|\|Z_{j}\|_2^2/n-\tau_j^2|=o_p(1),\label{eq:Z-1-j}\\
	|\hat { \tau } _ { j } ^ { 2 } - { \tau } _ { j } ^ { 2 } |=o_p(1). \label{eq:Z-2-j}
\end{gather}

By (\ref{eq:Delta-1}), (\ref{eq:tau-const}) and Assumptions \ref{ap:lambda} and \ref{ap:s0}, we have
\begin{equation}
	\sqrt { n } \| (I- \hat { \Theta }  \hat { \Sigma }  ) ( \hat a^{\rm Lasso}_{i} -a_i ) \| _ { \infty }=o _ p(1).
	\label{eq:Delta-final}
\end{equation}

\textit{Step} 2. For $i,j=1,\dots,p$,  recall that $\varepsilon_i=(e_{i1},...,e_{in})^{\T}$ and our goal is to prove that
\begin{equation}
	\frac{1}{\sigma_i\| \hat{Z} _ { j }\|_2 }\sum_{t=1}^{n}\hat{Z}_{tj}e_{it} \stackrel { d } { \rightarrow } \mathcal { N } ( 0,1 ).
	\label{eq:asym-norm-final}
\end{equation} 
Firstly, we prove the asymptotic normality of the sum of martingale difference sequence $n^{-1/2}\sum_{t=1}^{n}Z_{tj}e_{it}$. Let $\mathcal{A}_t:=\sigma(\mathbf{e}_0,...,\mathbf{e}_t)$, then $Z_{tj}e_{it}$ is $\mathcal{A}_t$-measurable and $\mathcal{A}_{t-1}$ is contained in $\mathcal{A}_t$.
Since $Z_{tj}=X_{tj}-\sum_{l\neq j} \gamma_{jl}X_{tl}$ is $\mathcal{A}_{t-1}$-measurable, we have 
\begin{equation}\label{eq:mclt1}
		E(Z_{tj}e_{it}|\mathcal{A}_{t-1})=0.
\end{equation}
Let $V_n^2:=\sum_{t=1}^{n}E(n^{-1}Z^2_{tj}e^2_{1t}|\mathcal{A}_{t-1})=\sigma_i ^ { 2 }n^{-1}\sum_{t=1}^{n}Z^2_{tj}$ and $v^2_n:=E(V^2_n)=\sigma_i ^ { 2 }\tau^2_j$. By (\ref{eq:Z-1-j}), we have 
\begin{equation}\label{eq:mclt2}
		V_n^2v^{-2}_n\stackrel { P }\rightarrow 1.
\end{equation}
Let $\sigma^2_{\mathbf{X}}=\max_{1\leq t\leq n,1\leq j\leq p}{\rm Var}(X_{tj})$. By Propposition \ref{prop:prop2.3}, we have $\sigma^2_{\mathbf{X}}=O(1)$. Since the elements of $\mathbf{X}$ are Gaussian random variables, we have
$$
P(\|\mathbf{X}\|_{\infty}>\sqrt {t^2+2\log pn})\leq 2pn\exp\left\{-(t^2+2\log pn)/(2\sigma^2_{\mathbf{X}})\right\} 
\leq 2 \exp(-t^2/(2\sigma^2_{\mathbf{X}})).
$$
Together with Assumption \ref{ap:s0}, we have $\|\mathbf{X}\|_\infty=O_p(\sqrt{\log p} )$.
By Cauchy-Schwarz inequality, we have
\begin{equation}\label{eq:mclt3-0}
	\begin{aligned}
		n^{-1-\nu/2} \sum_{t=1}^{n}E|Z_{tj}|^{2+\nu} 
		&\leq\frac{\|Z_j\|_2^2}{n} \left(\frac{\|Z_j\|_{\infty}}{\sqrt{n}} \right)^{\nu}\\
		&\leq \frac{\|Z_j\|_2^2}{n} \left( \frac{(1+\|\gamma_j\|_1)\|\mathbf{X}\|_{\infty}}{\sqrt{n}} \right)^{\nu} \\
		&=O_p\left[\left(q_j\log p/n\right)^{\nu/2}\right].
	\end{aligned} 
\end{equation}
Then, we have, in probability,
\begin{equation}\label{eq:mclt3}
	\begin{aligned}
		v_n^{-2}n^{-1}\sum_{t=1}^{n}
		E(Z^2_{tj}e^2_{it}1_{\{|n^{-1/2}Z_{tj}e_{it}|\geq \rho v_n\}})
		&\leq v_n^{-2-\nu}n^{-1-\nu/2} \rho^{-\nu}
		\sum_{t=1}^{n}E|Z_{tj}e_{it}|^{2+\nu} \\
		&\leq v_n^{-2-\nu}n^{-1-\nu/2} \rho^{-\nu} L
		\sum_{t=1}^{n}E|Z_{tj}|^{2+\nu} \\
		&\rightarrow 0,
	\end{aligned} 
\end{equation}
where the first inequality is due to making use of $1_{\{|n^{-1/2}Z_{tj}e_{it}|\geq \rho v_n\}}$, the second holds because $e_{it}$ is normal, and the third is due to (\ref{eq:mclt3-0}) and Assumption \ref{ap:sj}.
Together with (\ref{eq:mclt1}), (\ref{eq:mclt2}) and (\ref{eq:mclt3}), the martingale central limit theorem (Theorem 5.3.4 in \cite{Fuller:1996wa}) implies
	\begin{equation}\label{eq:Znormal}
		\frac{1}{\sqrt{n}\sigma_i\tau_j}\sum_{t=1}^{n}Z_{tj}e_{it}\stackrel { d } { \rightarrow } \mathcal { N } ( 0,1 ).
	\end{equation}
	
Secondly, we have
\begin{equation}\label{eq:ZhatZ}
		\begin{aligned}
	\frac{1}{\sqrt{n}}\sum_{t=1}^{n}(\hat{Z}_{tj}-Z_{tj})e_{it}=&
	\frac{1}{\sqrt{n}}\sum_{t=1}^{n}\sum_{l\neq j}(\gamma_{jl}-\hat{\gamma}_{jl})X_{tl}e_{it} \\
	\leq& \sqrt{n} \|\gamma_j-\hat\gamma_j\|_1\|\mathbf{X}^{\T}\varepsilon_i/n\|_\infty \\
	=&O_p\left(q_j\log p/\sqrt n\right)
	\end{aligned}
\end{equation}
where the first equality is due to the definition, the second inequality is due to H{\"o}lder inequality, and the third equality is due to (\ref{eq:ndl-1norm}) and (\ref{eq:L2}). By (\ref{eq:Z-2-j}), we have
\begin{equation}\label{eq:tjZhat}
	\frac{\tau_j}{\| \hat{Z} _ { j }\|_2}\stackrel { P }\rightarrow 1.
\end{equation}
By Slutsky theorem, (\ref{eq:Znormal}), (\ref{eq:ZhatZ}), (\ref{eq:tjZhat}) and Assumption \ref{ap:sj}, we obtain (\ref{eq:asym-norm-final}).

\textit{Step} 3. By Theorem \ref{thm:Lasso}, we have $\hat s_{i}=O_p( s_{i} )=o_p(n)$. Then $(n - \hat s_{i})\hat { \sigma } _ { i} ^ { 2 }/n$ is asymptotically equivalent to $\hat { \sigma } _ {i } ^ { 2 }$. We have
\begin{equation}
\begin{aligned}
	|(n - \hat s_{i})\hat { \sigma } _ { i} ^ { 2 }/n-\sigma _ { i}^2|=&|(\hat\varepsilon_i^{\T}\hat\varepsilon_i-n{ \sigma }_{ i } ^ { 2 })/n| \\=& |(\hat\varepsilon_i-\varepsilon_i)^{\T}(\hat\varepsilon_i-\varepsilon_i)/n + 2\varepsilon_i^{\T}(\hat\varepsilon_i-\varepsilon_i)/n +(\varepsilon_i^{\T}\varepsilon_i-n{ \sigma }_{ i} ^ { 2 })/n| \\
		=&|\|\mathbf { X }(\hat a^{\rm Lasso}_{i} -a_i )\|_2^2/n-2\varepsilon_i^{\T}\mathbf { X }(\hat a^{\rm Lasso}_{i} -a_i )/n+(\varepsilon_i^{\T}\varepsilon_i-n{ \sigma }_{ i } ^ { 2 })/n|\\
		=& O_p(s_{i}\log (p)/n) +
			O_p(\sqrt{\log (p)/n})O_p(s_{i}\log (p)/n)
		+O_p\left( 1/\sqrt n \right)\\
		=&o_p(1),
\nonumber	
\end{aligned}	
\end{equation}
where the first three equalities are due to definitions and simple algebra, the fourth equality is due to Theorem \ref{thm:Lasso}, H{\"o}lder inequality, (\ref{eq:L2}) and the central limit theorem ($\varepsilon_i^{\T}\varepsilon_i-n{ \sigma }_{ i } ^ { 2 }=O_p(\sqrt n)$), the fifth equality is due to Assumption \ref{ap:s0}. 
By Theorem \ref{thm:Lasso} ($\hat s_{i}=O_p( s_{i} )=o_p(n)$), we obtain
	\begin{equation}
		\hat { \sigma } _ i ^ { 2 }-\sigma _ i^2=o_p(1).
		\label{eq:sigma-consis}
	\end{equation}
	
In all, with (\ref{eq:Delta-final}), (\ref{eq:asym-norm-final}) and (\ref{eq:sigma-consis}), the results of Theorem \ref{thm:dbl} follow.
\end{proof}

\subsection{Proofs of the theoretical results in Section \ref{sec:bdbl-th}}

We denote $\hat \varepsilon_{\rm cent}:=(\varepsilon_{{\rm cent},1},...,\varepsilon_{{\rm cent},n})$, where $\varepsilon_{{\rm cent},t}:=\hat \varepsilon_{it} - \hat\varepsilon_{i\cdot}$. Note that we omit the subscript $i$ for simplicity.
When we are conditional on $\mathbf{X}$ and $Y_i$ (therefore $\hat \varepsilon_{\rm cent}$), the only randomness comes from $\varepsilon^*_i$ which is generated from residual bootstrap or multiplier wild bootstrap. 
Since two bootstrap methods have a lot in common during the proof, we use the same symbol $\varepsilon^*_i$ to refer to the residuals generated by them, and discuss them separately when necessary.

\begin{proposition}
	(a) Under Assumption \ref{ap:s0_bt}, we have
	\begin{equation}
		\|\mathbf{X}\|_\infty=O_p(\sqrt{\log p} ). \label{eq:X_infty}
	\end{equation}
	(b) Under Assumption \ref{ap:s0_bt}, for tuning parameter satisfying $\lambda \asymp \sqrt{\log(p)/n}$, we have
	\begin{equation}
		E^*(\|\varepsilon^{*}_{i}\|^{2+\phi}_{2+\phi}/n)=O_{p}(1), \label{eq:e_hat_2+norm}
	\end{equation}
	where $0\leq\phi<1$. In particular, when $\phi=0$, we have
	\begin{equation}
		E^*(\|\varepsilon^{*}_{i}\|^{2}_{2}/n)=O_{p}(1), \label{eq:e_hat_2norm}
	\end{equation}
\end{proposition}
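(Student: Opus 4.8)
The plan is to exploit the Gaussianity of the design. Since $\{\mathbf{y}_t\}$ is a stable Gaussian VAR process, every entry $X_{tj}$ is a centered Gaussian whose variance is uniformly bounded by $\Lambda_{\rm max}(\Sigma)=O(1)$ (Proposition \ref{prop:prop2.3}); set $\sigma_{\mathbf{X}}^2:=\max_{t,j}{\rm Var}(X_{tj})=O(1)$. A Gaussian tail bound gives $P(|X_{tj}|>x)\le 2\exp\{-x^2/(2\sigma_{\mathbf{X}}^2)\}$ for each of the $np$ entries, so a union bound yields $P(\|\mathbf{X}\|_\infty>x)\le 2np\exp\{-x^2/(2\sigma_{\mathbf{X}}^2)\}$. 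Taking $x$ of order $\sqrt{\log p}$ (equivalently $x=\sqrt{t^2+2\sigma_{\mathbf{X}}^2\log(np)}$ with $t\to\infty$ arbitrarily slowly) drives the bound to zero. This is exactly the computation already used in Step 2 of the proof of Theorem \ref{thm:dbl}, and it yields \eqref{eq:X_infty}.

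\textbf{Part (b): reduction to the centered residuals.} First I would remove the bootstrap randomness. Writing $\varepsilon_{{\rm cent},t}=\hat\varepsilon_{it}-\hat\varepsilon_{i\cdot}$, for residual bootstrap $\varepsilon^*_{it}$ is drawn uniformly from $\{\varepsilon_{{\rm cent},s}\}_{s=1}^n$, so $E^*|\varepsilon^*_{it}|^{2+\phi}=n^{-1}\sum_s|\varepsilon_{{\rm cent},s}|^{2+\phi}$ for every $t$; for the wild bootstrap $\varepsilon^*_{it}=W_{it}\varepsilon_{{\rm cent},t}$, so $E^*|\varepsilon^*_{it}|^{2+\phi}=E|W|^{2+\phi}|\varepsilon_{{\rm cent},t}|^{2+\phi}$, where $E|W|^{2+\phi}\le(E(W^4))^{(2+\phi)/4}<\infty$ because $2+\phi<4$. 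In both cases there is a finite constant $C$ with
\[
E^*\!\big(\|\varepsilon^*_i\|^{2+\phi}_{2+\phi}/n\big)\le C\,\frac1n\sum_{t=1}^n|\hat\varepsilon_{it}-\hat\varepsilon_{i\cdot}|^{2+\phi},
\]
so it suffices to show that the empirical $(2+\phi)$-th moment of the centered residuals is $O_p(1)$; the case $\phi=0$ then gives \eqref{eq:e_hat_2norm} as a special case of \eqref{eq:e_hat_2+norm}.

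\textbf{Bounding the residual moment.} Using $\hat\varepsilon_{it}=e_{it}-\mathbf{y}_{t-1}^\T\Delta$ with $\Delta:=\hat a_i^{\rm Lasso}-a_i$ and the inequality $|u+v|^{2+\phi}\le 2^{1+\phi}(|u|^{2+\phi}+|v|^{2+\phi})$, I would bound $n^{-1}\sum_t|\hat\varepsilon_{it}-\hat\varepsilon_{i\cdot}|^{2+\phi}$ by constant multiples of $n^{-1}\sum_t|e_{it}|^{2+\phi}$, $n^{-1}\sum_t|\mathbf{y}_{t-1}^\T\Delta|^{2+\phi}$, and the negligible centering term $|\hat\varepsilon_{i\cdot}|^{2+\phi}=o_p(1)$. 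The noise term is $O_p(1)$ by the law of large numbers, since the $e_{it}$ are i.i.d. $\mathcal N(0,\sigma_i^2)$ and $n^{-1}\sum_t|e_{it}|^{2+\phi}\to E|e_{i1}|^{2+\phi}<\infty$. For the estimation-error term I would interpolate between the prediction error and the sup norm:
\[
\frac1n\sum_{t=1}^n|\mathbf{y}_{t-1}^\T\Delta|^{2+\phi}\le\Big(\max_t|\mathbf{y}_{t-1}^\T\Delta|\Big)^{\phi}\frac1n\sum_{t=1}^n(\mathbf{y}_{t-1}^\T\Delta)^2,
\]
where the average equals $\|\mathbf{X}\Delta\|_2^2/n=O_p(s_i\log p/n)$ by Theorem \ref{thm:Lasso}(a), while $\max_t|\mathbf{y}_{t-1}^\T\Delta|\le\|\mathbf{X}\|_\infty\|\Delta\|_1=O_p(s_i\log p/\sqrt n)$ by part (a) and Theorem \ref{thm:Lasso}(a). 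Assumption \ref{ap:s0_bt} forces $s_i\log p/\sqrt n=o(1)$, so both factors are $o_p(1)$ and the estimation-error term is $o_p(1)$. Collecting the three pieces gives \eqref{eq:e_hat_2+norm}.

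\textbf{Main obstacle.} The delicate step is the estimation-error term when $\phi>0$. One cannot simply pull out $\|\hat\varepsilon_{\rm cent}\|_\infty$ in sup norm, because this quantity inherits the $\sqrt{\log n}$ growth of the Gaussian noise and is not $O_p(1)$. The decomposition that separates the noise part, bounded by the law of large numbers and genuinely $O_p(1)$, from the Lasso estimation error, genuinely $o_p(1)$ via the interpolation above, is what keeps the moment bounded. It is precisely here that part (a), the Lasso rates of Theorem \ref{thm:Lasso}, and the strengthened sparsity of Assumption \ref{ap:s0_bt} all enter.
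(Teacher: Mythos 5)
Your proof is correct and follows essentially the same route as the paper: part (a) is the identical Gaussian tail plus union bound, and part (b) uses the same reduction of $E^*$ to the empirical $(2+\phi)$-moment of the centered residuals, the same decomposition of $\hat\varepsilon_{it}$ into the true noise $e_{it}$ (handled by a moment/LLN argument) plus the Lasso error $\mathbf{y}_{t-1}^\T\Delta$ (handled via $\|\mathbf{X}\|_\infty\|\Delta\|_1=o_p(1)$) plus the negligible centering term. The only cosmetic difference is that the paper bounds the estimation-error contribution directly by $\|\mathbf{X}\Delta\|_\infty^{2+\phi}$, whereas you interpolate between the sup norm and the prediction error $\|\mathbf{X}\Delta\|_2^2/n$; both yield $o_p(1)$ under Assumption \ref{ap:s0_bt}.
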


\begin{proof}
Let $\sigma^2_{\mathbf{X}}=\max_{1\leq t\leq n,1\leq j\leq p}{\rm Var}(X_{tj})$. By Propposition \ref{prop:prop2.3}, we have $\sigma^2_{\mathbf{X}}=O(1)$. Since the elements of $\mathbf{X}$ are Gaussian random variables, we have
$$
P(\|\mathbf{X}\|_{\infty}>\sqrt {t^2+2\log pn})\leq 2pn\exp\left\{-(t^2+2\log pn)/(2\sigma^2_{\mathbf{X}})\right\} 
\leq 2 \exp(-t^2/(2\sigma^2_{\mathbf{X}})).
$$
Together with Assumption \ref{ap:s0_bt}, (\ref{eq:X_infty}) follows.
	
Since $\varepsilon_i=(e_{i1},...,e_{in})^{\T}$ is Gaussian distributed, we have $E\|\varepsilon_i\|_{2+\phi}^{2+\phi}=O(n)$. We denote $\varepsilon_{i\cdot}:=\sum_{t=1}^{n}e_{it}/n $, then
$$
|\hat{\varepsilon}_{i\cdot}|\leq |\hat{\varepsilon}_{i\cdot}-{\varepsilon}_{i\cdot}|+|{\varepsilon}_{i\cdot}|\leq \|\hat{\varepsilon}_{i}-{\varepsilon}_{i}\|_1/n+|{\varepsilon}_{i\cdot}|=o_p(1)
$$
where the first two inequality are due to triangle inequality and the third is due to Theorem \ref{thm:Lasso}, Assumption \ref{ap:s0_bt} and the strong law of lager numbers. By Theorem \ref{thm:Lasso}, (\ref{eq:X_infty}) and Assumption \ref{ap:s0_bt}, we have
$$
	\left\|\hat{\varepsilon}_{\mathrm{cent}}-\varepsilon_i\right\|_{\infty} \leq |\hat{\varepsilon}_{i\cdot}|+\left\|\hat{\varepsilon}_{i}-\varepsilon_i\right\|_{\infty} \leq 
	|\hat{\varepsilon}_{i\cdot}|+\|\mathbf{X}\|_{\infty}\|\hat a^{\rm Lasso}_{i} -a_i\|_{1} = o_p(1).
$$
Thus, we have
$$
	\left\|\hat{\varepsilon}_{\mathrm{cent}}\right\|_{2+\phi}^{2+\phi} / n=O_p(1).
$$
For the residual bootstrap,
$$
	E^*(\|\varepsilon^{*}_{i}\|^{2+\phi}_{2+\phi}/n)=\left\|\hat{\varepsilon}_{\mathrm{cent}}\right\|_{2+\phi}^{2+\phi} / n=O_p(1).
$$
For the multiplier wild bootstrap,
$$
	E^*(\|\varepsilon^{*}_{i}\|^{2+\phi}_{2+\phi}/n)=(E|W_{11}|^{2+\phi})\left\|\hat{\varepsilon}_{\mathrm{cent}}\right\|_{2+\phi}^{2+\phi} / n=O_p(1).
$$
\end{proof}

\begin{proof}[\textbf{Proof of Theorem \ref{thm:BtLasso} (a)}]
Recall that we intend to prove
\begin{gather}
	\|\hat a^{\rm Lasso*}_{i}-\hat { a }^{\rm Lasso}_{i} \| _ { 1 } = O_p(s_{i}\log (p)/\sqrt{n}), \nonumber \\
	\|\mathbf{X}(\hat a^{\rm Lasso*}_{i}-\hat { a }^{\rm Lasso}_{i} )\|_2^2/n=O_p(s_{i}\log^2 (p)/n). \nonumber
\end{gather}
By Theorem \ref{thm:Lasso}, we have $\hat s_{i}=O_p(s_{i})$, thus, we only need to prove
\begin{gather}
	\|\hat a^{\rm Lasso*}_{i}-\hat { a }^{\rm Lasso}_{i} \| _ { 1 } = O_p(\hat s_{i}\log (p)/\sqrt{n}), \label{eq:BtLasso-1} \\
	\|\mathbf{X}(\hat a^{\rm Lasso*}_{i}-\hat { a }^{\rm Lasso}_{i} )\|_2^2/n=O_p(\hat s_{i}\log^2 (p)/n). \label{eq:BtLasso-2}
\end{gather}
Recall that
\begin{equation}
	\hat a^{\rm Lasso*}_{i} := \mathop{{\rm argmin}}\limits_{\alpha\in \mathbb{R}^p}\{ \|Y_i^* - \mathbf { X } \alpha\|_2^2/n + 2\lambda^* \|\alpha\|_1\},
	\nonumber
\end{equation}
we obtain the basic inequality
\begin{equation}
	 \|Y_i^* -\mathbf{X}\hat a^{\rm Lasso*}_{i}\|_2^2/n + 2\lambda^*\|\hat a^{\rm Lasso*}_{i}\|_1\leq
	 \|Y_i^* -\mathbf{X}\hat a^{\rm Lasso}_{i}\|_2^2/n + 2\lambda^*\|\hat a^{\rm Lasso}_{i}\|_1.
	 \nonumber
\end{equation}
We denote $\delta^*:= \hat a^{\rm Lasso*}_{i}-\hat { a }^{\rm Lasso}_{i}$. By simple algebra, we have 
\begin{equation}
	\| \mathbf { X } \delta^* \| _ { 2 } ^ { 2 } / n \leq
	2\varepsilon^{*\T}_i\mathbf { X } \delta^*/n + 2\lambda^*(\|\hat a^{\rm Lasso}_{i}\|_1-\|\hat a^{\rm Lasso*}_{i}\|_1).
	\label{eq:ndl-basic-bt}
\end{equation}
We can obtain a bound for $\|\mathbf{X}^{\T}\varepsilon^*_i/n\|_\infty$ as follows:
\begin{equation}
	\begin{aligned}
		P^*\left(\|\mathbf{X}^{\T}\varepsilon^*_i/n\|_\infty> 
		C_2\frac{\log p} {\sqrt n}\right)
		&\leq \frac{E^*(\|\mathbf{X}^{\T}\varepsilon^*_i\|_\infty)^2}{C_2 n\log^2 p} \leq \frac{8\log(2p)\|\mathbf{X}\|^2_\infty E^*(\|\varepsilon^{*}_i\|^2_2)}{C_2n\log^2 p},	
	\end{aligned}
	\nonumber
\end{equation}
where the first inequality is due to Markov inequality, the second inequality is due to Nemirovski's inequality and equation (6.5) in \cite{Buhlmann:2011}. By (\ref{eq:X_infty}) and (\ref{eq:e_hat_2norm}), we have, in probability,
\begin{equation}
	\|\mathbf{X}^{\T}\varepsilon^*_i/n\|_\infty\leq
		C_2\frac{\log p} {\sqrt n}.
	\label{eq:Xe-infty-bt}
\end{equation}
With suitable chosen $\lambda^*\asymp\log p/\sqrt{n}$, we have, in probability,
\begin{equation}
	\|\mathbf{X}^{\T}\varepsilon^*_i/n\|_\infty\leq
		C_2\frac{\log p} {\sqrt n}\leq\lambda^*/2.
	\nonumber
\end{equation}
Furthermore, by H\"older inequality, we have
\begin{equation}
	\varepsilon^{*\T}_i\mathbf { X } \delta^*/n\leq
	 \|\mathbf{X}^{\T}\varepsilon^*_i/n\|_\infty\| \delta^*\|_1\leq\lambda^*\| \delta^*\|_1/2.
	\nonumber
\end{equation}
Recall that $\hat S_i=\{j\in \{1,...,p\}:\hat a^{\rm Lasso}_{ij}\ne0\}$, by triangle inequality and $\hat a^{\rm Lasso}_{i\hat S_i^C}=0$, we have
\begin{equation}
	\|\hat a^{\rm Lasso}_{i}\|_1-\|\hat a^{\rm Lasso*}_{i}\|_1
	=\|\hat a^{\rm Lasso}_{i\hat S_i}\|_1-\|\hat a^{\rm Lasso*}_{i\hat S_i}\|_1 +
	\|\hat a^{\rm Lasso}_{i\hat S_i^C}\|_1-\|\hat a^{\rm Lasso*}_{i\hat S_i^C}\|_1\leq 
	\|\delta^*_{\hat S_i}\|_1-\|\delta_{\hat S_i^C}^*\|_1.
	\nonumber
\end{equation}
Therefore, (\ref{eq:ndl-basic-bt}) becomes
\begin{equation}
	\begin{aligned}
		0\leq\| \mathbf { X } \delta^* \| _ { 2 } ^ { 2 } / n
		&\leq \lambda^*\| \delta^*\|_1+ 2\lambda^*(\|\delta^*_{\hat S_i}\|_1-\|\delta_{\hat S_i^C}^*\|_1)\\
		&= \lambda^*(\|\delta^*_{\hat S_i}\|_1+\|\delta_{\hat S_i^C}^*\|_1)+ 2\lambda^*(\|\delta^*_{\hat S_i}\|_1-\|\delta_{\hat S_i^C}^*\|_1)\\
		&=3\lambda^*\|\delta^*_{\hat S_i}\|_1-\lambda^*\|\delta_{\hat S_i^C}^*\|_1\\
		&\leq 3\lambda^*\|\delta^*\|_1.
	\end{aligned}
	\label{eq:btl-1}
\end{equation}
Inequality (\ref{eq:btl-1}) implies $\|\delta^*_{\hat S_i^C}\|_1\leq3\|\delta^*_{\hat S_i}\|_1$ so that 
\begin{equation}
	\|\delta^*\|_1\leq 4\|\delta^*_{\hat S_i}\|_1\leq4\sqrt{\hat s_{i}}\|\delta^*\|_2.
	\label{eq:btl-2}
\end{equation}
By (\ref{eq:ndl-RE}) and (\ref{eq:ndl-2}), we have
\begin{equation}
	\| \mathbf { X } \delta^* \| _ { 2 } ^ { 2 } / n\geq \alpha\|\delta \|_2^2-\tau \|\delta \|_1^2\geq
	\left(\frac{\alpha}{4\hat s_{i}}-\tau \right)\|\delta\|_1^2.
	\label{eq:btl-3}
\end{equation}
By Theorem \ref{thm:Lasso} and Proposition \ref{prop:4.2&4.3}, we have $\tau \hat s_{i}=O_p(\tau s_{i})=O_p(1)$. 
Combining with (\ref{eq:btl-1}), (\ref{eq:btl-3}) and Assumption \ref{ap:lambda_bt} ($\lambda^* \asymp \log(p)/\sqrt{n}$), we obtain (\ref{eq:BtLasso-1}). Then by (\ref{eq:btl-1}), (\ref{eq:BtLasso-1}) and Assumption \ref{ap:lambda_bt}, we obtain (\ref{eq:BtLasso-2}).
\end{proof}

\begin{proof}[\textbf{Proof of Theorem \ref{thm:BtLasso} (b)}]
	By Theorem \ref{thm:Lasso}, we have $\hat s_{i}=O_p(s_{i})$. To prove $\hat s^*_{i}=O_p(s_{i})$, we only need to prove $\hat s^*_{i}=O_p(\hat s_{i})$.
	
	The proof is the same as the counterpart in Theorem \ref{thm:Lasso} (b) except for the bound (\ref{eq:Xe_inf}). We only need to replace
	\begin{equation}
		\|\mathbf{X}^{\T} \varepsilon_i/n\|_{\infty}\leq C_1 \sqrt{\log p/n},
		\nonumber
	\end{equation}
	by
	\begin{equation}
		\|\mathbf{X}^{\T} \varepsilon^*_i/n\|_{\infty}\leq 	C_2\log p/\sqrt n,
		\nonumber
	\end{equation}
	where the second inequality is due to (\ref{eq:Xe-infty-bt}). With
	$$
	\lambda^* \geq \frac{4C_2c^*}{c_*}\frac{\log p}{\sqrt{n}},
	$$
	the result follows from the same reasoning.
\end{proof}

Similar to (\ref{eq:KKT-3recall}), we have the bootstrap analogue,
\begin{equation}
	\hat{a}_{i}^*-\hat { a }^{\rm Lasso}_{i}    = \hat\Theta\mathbf { X } ^ {\rm  T } \varepsilon^*_i / n + (I-\hat\Theta\hat { \Sigma })(\hat { a }^{\rm Lasso*}_{i} -\hat { a }^{\rm Lasso}_{i}) .
	\nonumber
\end{equation}
Since we use the same $\mathbf{X}$ for nodewise Lasso during the bootstrap procedure, the bounds proved in the proof of Theorem \ref{thm:dbl} still hold.

\begin{proof}[\textbf{Proof of Theorem \ref{thm:bdbl}}]
We prove the theorem in three steps. Step 1 proves that $(I-\hat\Theta\hat { \Sigma })(\hat { a }^{\rm Lasso*}_{i} -\hat { a }^{\rm Lasso}_{i})$ is asymptotically negligible. Step 2 proves that $\hat\Theta\mathbf { X } ^ {\rm  T } \varepsilon^*_i / n$ is asymptotically normal. Step 3 proves that our variance estimator is consistent.

\textit{Step} 1. By H{\"o}lder inequality, we have
\begin{equation}
	\begin{aligned} 
	\sqrt { n } \| (I-\hat\Theta\hat { \Sigma })(\hat { a }^{\rm Lasso*}_{i} -\hat { a }^{\rm Lasso}_{i}) \| _ { \infty } 
	& \leq \sqrt { n }\| (I- \hat { \Theta } \hat { \Sigma }  ) \| _ { \infty } \| \hat { a }^{\rm Lasso*}_{i} -\hat { a }^{\rm Lasso}_{i}\| _ { 1 } \\
	& \leq \sqrt { n } \left( \max _ { j } \lambda _ { j } / \hat { \tau } _ { j } ^ { 2 } \right) \| \hat { a }^{\rm Lasso*}_{i} -\hat { a }^{\rm Lasso}_{i}\| _ { 1 } .
	\end{aligned}
	\label{eq:Delta-1-bt}
\end{equation}
By Theorem \ref{thm:BtLasso}, we have 
\begin{equation}\label{eq:Delta-2-bt}
	\| \hat { a }^{\rm Lasso*}_{i} -\hat { a }^{\rm Lasso}_{i} \| _ { 1 } = O_p (s_{i}\log p/\sqrt{n}).
\end{equation}
Since we use the same $\mathbf{X}$ for nodewise Lasso during the bootstrap procedure, (\ref{eq:tau-const}) still holds.
Together with Assumption \ref{ap:lambda_bt} and \ref{ap:s0_bt}, we have
\begin{equation}
	\sqrt { n } \| (I-\hat\Theta\hat { \Sigma })(\hat { a }^{\rm Lasso*}_{i} -\hat { a }^{\rm Lasso}_{i}) \| _ { \infty } =o _ p(1).
	\label{eq:Delta-final-bt}
\end{equation}

\textit{Step} 2. Compared to the counterpart in Theorem \ref{thm:dbl}, $\hat{Z}_{ij}$ is no longer correlated with $\varepsilon^*_i$ when we conditional on $\mathbf{X}$ and consider bootstrap measure.
By (\ref{eq:e_hat_2+norm}), 
$$
\left\{\frac{1}{\|\hat\varepsilon_{\rm cent}\|_2^2/n\cdot\| \hat{Z} _ { j }\|_2 }\hat{Z}_{tj}\varepsilon_{1t}^*,t=1,...,n\right\}
$$ 
are independent variables meet the Lyapunov condition. By central limit theorem, 
	\begin{equation}
		\frac{1}{\|\hat\varepsilon_{\rm cent}\|_2^2/n\cdot\| \hat{Z} _ { j }\|_2 }\sum_{t=1}^{n}\hat{Z}_{tj}\varepsilon_{1t}^* \stackrel { d^* } { \rightarrow } \mathcal { N } ( 0,1 )\; \text{in probability}.
		\label{eq:asym-norm-final-bt}
	\end{equation}

\textit{Step} 3. By Theorem \ref{thm:BtLasso}, we have $\hat s^*_{i}=O_p( s_{i} )=o_p(n)$.
Then $(n - \hat s^*_{i})\hat { \sigma } _ { i} ^ { *2 }/n$ is asymptotically equivalent to $\hat { \sigma } _ {i } ^ { *2 }$.
Also, we have
$$
\begin{aligned}
		&|(n - \hat s^*_{i})\hat { \sigma } _ { i} ^ { *2 }/n-\|\hat\varepsilon_{\rm cent}\|_2^2/n|\\
			=&\big|(\hat\varepsilon^{*\T}_i\hat\varepsilon^*_i-\|\hat\varepsilon_{\rm cent}\|_2^2)/n\big|\\
			=& \big|(\hat\varepsilon^*_i-\varepsilon^*_i)^{\T}(\hat\varepsilon^*_i-\varepsilon^*_i)/n + 2\varepsilon^{*\T}_i(\hat\varepsilon^*_i-\varepsilon^*_i)/n +(\varepsilon^\T_i\varepsilon^*_i-\|\hat\varepsilon_{\rm cent}\|_2^2)/n\big| \\
			=&\big|\|\mathbf{X}(\hat a^{\rm Lasso*}_{i}-\hat { a }^{\rm Lasso}_{i} )\|_2^2/n-2\varepsilon_i^{*\T}\mathbf { X }(\hat a^{\rm Lasso*}_{i}-\hat { a }^{\rm Lasso}_{i} )/n +(\varepsilon_i^{* \T}\varepsilon^*_i-\|\hat\varepsilon_{\rm cent}\|_2^2)/n\big| \\
			= & O_p(s_{i}\log^2 (p)/n) + 
				O_{p}\left( \log(p)/\sqrt n \right)O_{p}\left(s_{i} \log(p)/\sqrt n \right)+O_{p}\left(1/{\sqrt n} \right)\\
				=&o_p(1),
\end{aligned}	
$$
where the first three equalities are due to definitions and simple algebra, the fourth equality is due to Theorem \ref{thm:BtLasso}, (\ref{eq:Xe-infty-bt}), central limit theorem, and (\ref{eq:e_hat_2norm}), the fifth equality is due to Assumption \ref{ap:s0_bt}.
By Theorem \ref{thm:BtLasso} ($\hat s^*_{i}=O_p( s_{i} )=o_p(n)$), we obtain,
	\begin{equation}
		  \hat { \sigma } _ { i } ^ {* 2 }-\|\hat\varepsilon_{\rm cent}\|_2^2/n=o_{p}(1)\; \text{in probability}.
	\label{eq:sigma-consis-bt}
	\end{equation}
	
In all, with (\ref{eq:Delta-final-bt}), (\ref{eq:asym-norm-final-bt}) and (\ref{eq:sigma-consis-bt}), the results of Theorem \ref{thm:bdbl} follow.
\end{proof}

\section{Additional Simulation Results}

Figures \ref{fig:cp2} and \ref{fig:length2} show the coverage probabilities and average confidence interval lengths for homoscedastic non-Gaussian errors, heteroscedastic Gaussian errors and heteroscedastic non-Gaussian errors. Compared to homoscedastic Gaussian errors, different distributions of errors do not lead to significant difference of performance. Again, we can see that the LDPE has honest coverage probabilities and the BtLDPE and MultiBtLDPE have shorter confidence interval lengths compared to the LDPE.

\clearpage

\begin{figure}[t]
\centering
\includegraphics[scale=0.8]{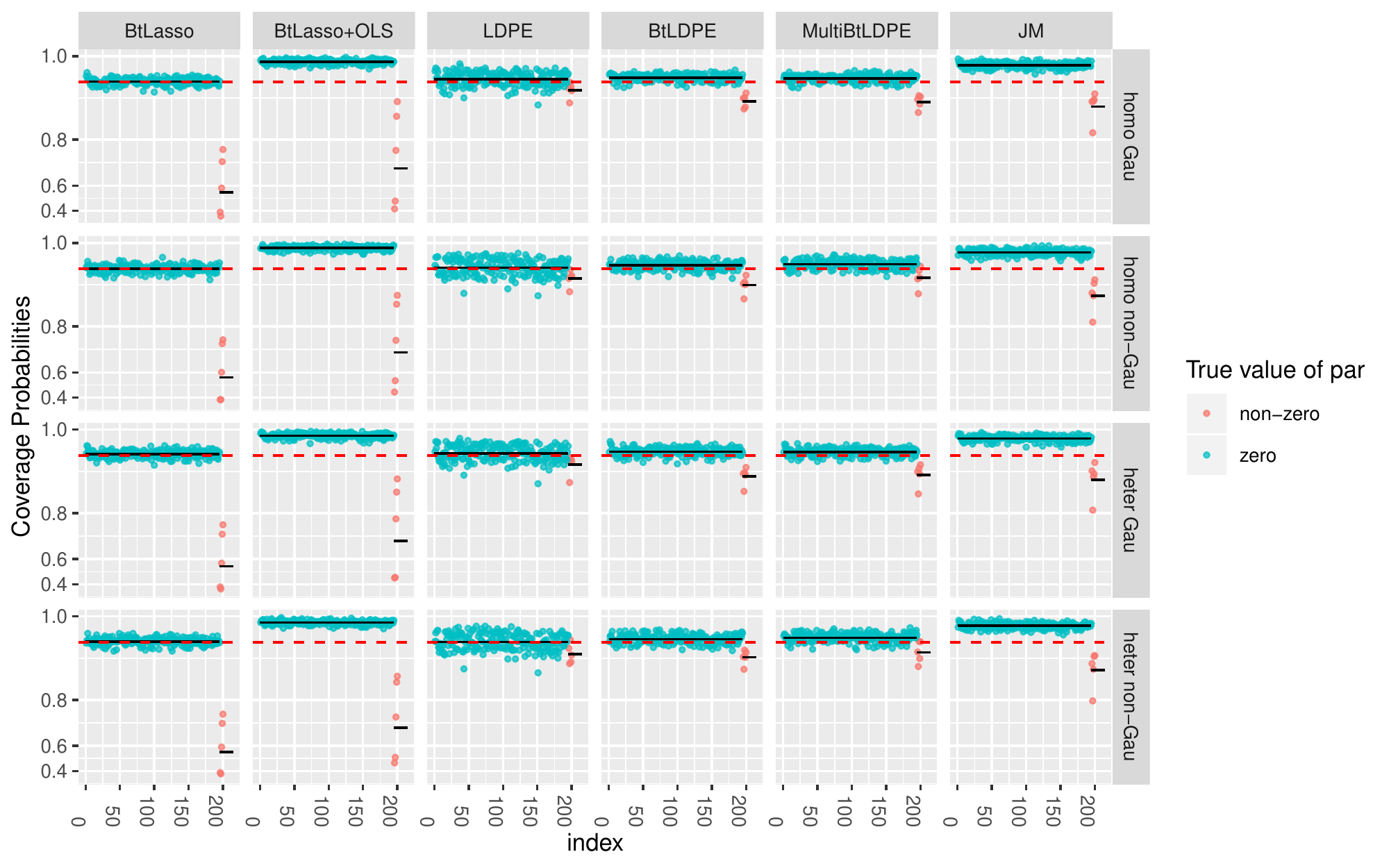}\par
\caption{
Comparison of empirical coverage probabilities for 1000 replications produced by six methods (columns) in four cases (rows). We set $n=100$ and $s_{i}=5$. Different rows correspond to different distributions of errors. Index on the $x$-axis corresponds to different $a_{1j}$'s, which are arranged from small to large in absolute values. The first $p-5$ elements of $a_{1j}$'s are zeros (blue points) and the last 5 are non-zeros (red points). The black lines are the total averages of coverage probabilities for zero and non-zero parameters respectively. The red dashed lines correspond to the nominal confidence level 95\%.
}
\label{fig:cp2}
\end{figure}

\begin{figure}[t]
\centering
\includegraphics[scale=0.8]{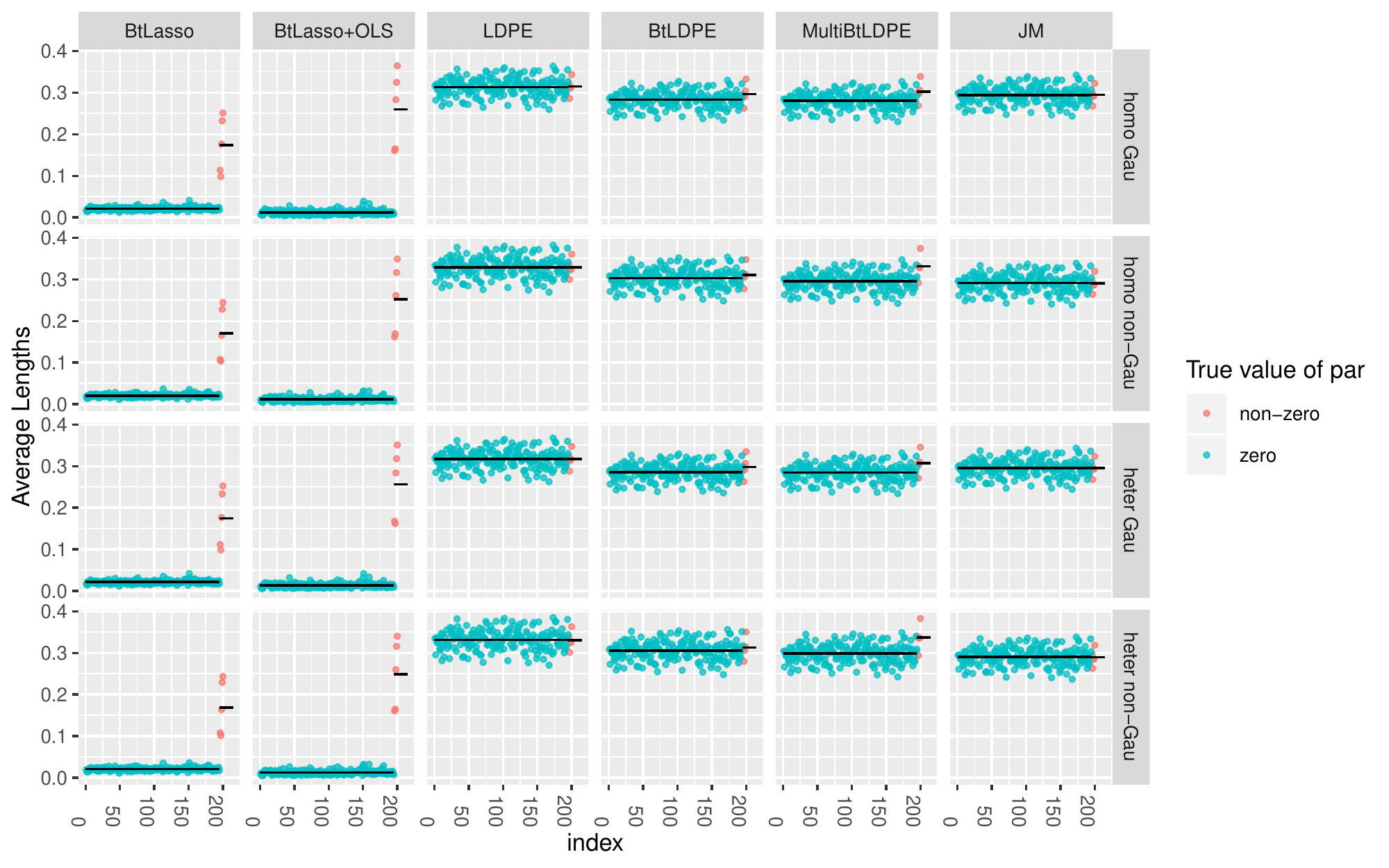}\par
\caption{
Comparison of average confidence interval lengths for 1000 replications produced by six methods (columns) in four cases (rows). We set $n=100$ and $s_{i}=5$. Different rows correspond to different distributions of errors. Index on the $x$-axis corresponds to different $a_{1j}$'s, which are arranged from small to large in absolute values. The first $p-5$ elements of $a_{1j}$'s are zeros (blue points) and the last 5 are non-zeros (red points). The black lines are the total averages of confidence interval lengths for zero and non-zero parameters respectively.
}
\label{fig:length2}
\end{figure}

\end{document}